
\documentclass[final]{siamltex}





\usepackage{amsmath}
\usepackage{amssymb}
\usepackage{amsfonts}
\usepackage{mathptmx} 
\usepackage{times}

\usepackage{graphicx}
\usepackage{color}

\setlength{\textwidth}{140truemm}
\setlength{\oddsidemargin}{10truemm}
\setlength{\evensidemargin}{10truemm}
%







\def\d{\partial}
\def\<{\leqslant}           
\def\>{\geqslant}           

\def\diag{\mathop{\rm diag}}    
\def\mR{{\mathbb R}}            
\def\Tr{{\rm Tr\,}}               
\def\rT{{\rm T}}                

\def\sP{{\mathsf P}}
\def\bL{{\mathbf L}}            
\def\bE{{\mathbf E}}            


\def\rd{{\rm d}}                


\def\x{\times}

\def\wh{\widehat}
\def\wt{\widetilde}


\def\bD{{\bf D}}                

\def\cW{{\mathcal W}}           
\def\fW{{\mathfrak W}}          

\def\cX{{\mathcal X}}           
\def\fX{{\mathfrak X}}          



\def\cov{{\bf cov}}             
\def\cN{{\cal N}}               
\def\im{{\rm im}}             
\def\cS{{\cal S}}

\def\eps{\varepsilon}           
\def\Ups{\Upsilon}              


\title{State distributions and minimum relative entropy noise sequences
    in uncertain stochastic systems:
    the discrete time case\thanks{A brief version \cite{VP_2010b} of this paper was presented at the 49th IEEE CDC in 2010. This
        work was supported by the Australian Research Council.}}

\author{Igor G. Vladimirov$^\dagger$
        \and Ian R. Petersen\thanks{School of Engineering and Information Technology,
        University of New South Wales at the Australian Defence Force Academy,
        Canberra, ACT 2600, Australia. E-mail: {\small\tt igor.g.vladimirov@gmail.com, i.r.petersen@gmail.com.}}}
\begin{document}

\maketitle

\begin{abstract}
The paper is concerned with a dissipativity theory and robust performance analysis of discrete-time stochastic systems driven by a statistically uncertain random noise. The uncertainty is quantified by the conditional relative entropy of the actual probability law of the noise with respect to a nominal product measure corresponding to a white noise sequence. We discuss a balance equation,  dissipation inequality and superadditivity property for the corresponding conditional relative entropy supply as a function of  time. The problem of minimizing the supply required to drive the system between given state distributions over a specified time horizon is considered. Such variational problems, involving entropy and probabilistic boundary conditions, are known in the literature as Schr\"{o}dinger bridge problems. In application to control systems, this minimum required conditional relative entropy supply characterizes the robustness of the system with respect to an uncertain noise.
 We obtain a dynamic programming Bellman equation for the minimum required conditional relative entropy supply and establish a Markov property of the worst-case noise with respect to the state of the system. For multivariable linear systems with a Gaussian white noise sequence as the nominal noise model and Gaussian initial and terminal state distributions, the minimum required supply is obtained using an algebraic Riccati equation which admits a closed-form solution. We propose a  computable robustness index for such systems in the framework of an entropy theoretic formulation of uncertainty and provide an example to illustrate this approach.
\end{abstract}


\begin{keywords}
uncertain stochastic systems,
robust performance analysis,
conditional relative  entropy,
dissipation inequality,
minimum required supply,
Markov noise strategies,
system robustness index.
\end{keywords}


\begin{AMS}
93C55,      
94A17,      
93B05,      
93E15,  	
93E20,      
60J05,  	
49L20,  	
90C40,  	
60G15.   	
\end{AMS}


\pagestyle{myheadings}
\thispagestyle{plain}
\markboth{IGOR G. VLADIMIROV
        AND IAN R. PETERSEN}{STATE DISTRIBUTIONS AND MINIMUM RELATIVE ENTROPY NOISE SEQUENCES}

\section{Introduction}

Design of feedback control for stochastic systems, which is usually aimed at suppressing the effect of random disturbances on the performance of the system, often confronts the situation where the statistical characteristics of the noise are not known precisely. Such statistical uncertainty can arise both from inaccuracies in prior probabilistic information on the noise and from variability of the random environment in which the control system operates. An approach which is often practiced in optimal control design in this case (see, for example, \cite{KS_1972,LS_1995}), is to employ a relatively simple model for the noise (sometimes upon augmenting the state of the system to incorporate a noise shaping filter) and to optimize the feedback in the closed-loop system   for the case of the  nominal noise.

A different paradigm is employed by robust control approaches, such as in \cite{PUS_2000}, which are aimed at achieving ``uniformly''  guaranteed  performance  of the system over a class of uncertainties (especially, in worst-case scenarios). This is  at the expense of loosing the optimality in the nominal noise case (which often plays the role of a ``center'' of the uncertainty class). However,  the robust controller itself, and the performance of the resulting closed-loop system, depends  on the  particular description of uncertainty which was used to design them. That is, the robustness of the closed-loop system, which is secured against a particular class of uncertainties, may  be less satisfactory with respect to another class of uncertainties.

Therefore, the problem of robust performance analysis for a given closed-loop system with respect to different classes of uncertainties is important regardless of whether the system has been obtained from a robust or optimal control design methodology. More precisely, the problems of interest here are concerned with the performance deterioration of a system subject to uncertain random noise in comparison to the performance of the system when subject to the nominal noise. The statistically uncertain noise can be viewed as resulting from the actions of a hypothetical noise player who has access to the current state of the system and employs this information in generating the future noise inputs in order to drive the system away from its nominal behavior.
In this regard, an important approach,  which constitutes an important part of recent robust stochastic control and filtering theory, is  provided by  formulations of statistical uncertainty using entropy theoretic constructs
 \cite{DVKS_2001,MKV_2011,PJD_2000,PUS_2000,Petersen_2006,SVK_1994,UP_2001,VKS_1995,VKS_1996a,VKS_1996b,XUP_2008} (see also  \cite{CR_2007,DE_1997,DJP_2000} for  their
connections with the risk-sensitive control).  Although entropy and related concepts have a long history in equilibrium statistical mechanics \cite{ME_1981}, their application to robust control are more reminiscent of nonequilibrium statistical physics formulations and also have a bearing on information theory \cite{CT_2006,Gray_2009}. The deviation of the actual noise probability law from the nominal noise model,  which results in a corresponding  deviation of the system from the equilibrium probability distribution under the nominal noise, can be interpreted in terms of the  supply-storage relations of dissipativity theory \cite{Willems_1972,Willems_1972II}.

The aim of the present paper is to combine the dissipativity theory viewpoint with an entropy theoretic formulation of statistical uncertainty in order to develop a tractable \emph{robustness index} for discrete-time stochastic systems driven by an uncertain random noise. The uncertainty is quantified by the \emph{conditional relative entropy} \cite{Gray_2009} of the actual noise probability law, given the initial state  of the system, with respect to a nominal product measure corresponding to a white noise sequence, independent of the initial state. This quantity measures not only the deviation of the noise from its nominal model but also the extent to which the noise player uses knowledge of the current state of the system for future noise generation. The conditional relative entropy  can therefore be interpreted as a resource which the noise player   spends economically in performing the  role of driving the system away from its nominal behavior. This nominal behavior is characterized in terms of  the existence of a \emph{nominal invariant state distribution} which the system would have in the nominal white noise case.

Although the conditional relative entropy supply is apparently less symmetric in time than the unconditional relative entropy,  it satisfies a balance equation which involves time reversal through a Bayesian term \cite{Bernardo,Hatanaka}. A related dissipation inequality describes the influence of the conditional relative entropy supply for  the noise player on the deviation of the system from the nominal invariant state distribution. The  deviation of the system from the nominal invariant state distribution  is also measured in relative entropy terms and plays the role of a \emph{storage function}. As a function of  time, the conditional relative entropy supply is \emph{superadditive} \cite{Schilling_2005} in contrast   to its deterministic counterpart in \cite{Willems_1972,Willems_1972II} (which is additive as the integral of a supply rate over the time interval). However, additivity is recovered for a class of noise sequences
which are Markov with respect to the state of the system and play an important role as economical noise strategies.

A problem of minimizing the conditional relative entropy supply required for the noise player to
drive the system between given initial and terminal state distributions over a specified time horizon  is then considered. Variational problems,  which are concerned with entropy minimization under such probabilistic boundary conditions, are known as  Schr\"{o}dinger bridge problems \cite{Beghi_1996,PT_2010}. These problems are usually treated in the  context of reciprocal processes, that is, Markov random fields on the time axis  \cite{Jamison_1974}; see also \cite{Beghi_1994,Blaq_1992,DaiPra_1991,Mikami_1990,VP_2010a} for continuous time formulations. Such problems have also been studied for quantum
systems \cite{Beghi_2002} using the formalism of stochastic mechanics
\cite{Nelson_2001}, and conventional quantum mechanical settings \cite{PT_2010}.
In application to robust performance analysis, the minimum required conditional relative entropy supply characterizes the robustness of the system with respect to the  uncertain noise. Indeed, the larger is the  required supply, the more ``sluggish'' the system is with respect to the actions of the noise player.
 We obtain a dynamic programming Bellman equation for the minimum required conditional relative entropy supply and establish the Markov property of the corresponding worst-case noise with respect to the state of the system.

A related {\it state
distribution tracking} problem leads to the minimum conditional relative entropy supply rate (per time step), which is  required for the noise player to maintain the system in a given state distribution. In combination with a \emph{loss functional} (which measures the system performance deterioration associated with the deviation from the nominal invariant state distribution), the minimum supply rate, required to achieve a specified level of the system performance loss, provides  a useful robustness index.

The specialization of the above results to the case of  multivariable linear systems with a white  Gaussian nominal noise sequence and Gaussian initial and terminal state distributions, allows the minimum required supply to be  determined using an algebraic Riccati equation which admits a closed-form solution. For a class of one-step reachable linear systems, in the framework of the entropy theoretic description of uncertainty,  we propose a particular robustness index associated with the increase in a weighted second moment of the state variables as the loss functional. Similar, though different  ideas, which combine the second moment increase with entropy theoretic formulations of uncertainty, can be found in \cite{CR_2007,DVKS_2001,MKV_2011,Petersen_2006,SVK_1994,VDK_2006,VKS_1995,VKS_1996a}.
The computation of the robustness index is reduced to solving two coupled algebraic equations in a matrix and a scalar parameter, which can be carried out numerically  by using, for example,  homotopy methods. As an illustration, we provide an explicit calculation of the robustness index for one-dimensional linear systems.

The paper is organized as follows. Section~\ref{sec:system}
specifies the class of uncertain stochastic systems being considered.
Section~\ref{sec:equilibrium}   describes the nominal white noise model and the associated nominal invariant state distribution of the system. Section~\ref{sec:cond_rel_ent} specifies a measure of statistical uncertainty as the conditional relative entropy of the actual noise with respect to the nominal noise.
Section~\ref{sec:dissipation} discusses a dissipation inequality
and a superadditivity property for the conditional relative entropy supply and
introduces Markov noise strategies. Section~\ref{sec:markovization} describes a
procedure which leads to a Markov noise strategy with a decreased conditional relative
entropy while preserving the state distributions of the system.
Section~\ref{sec:Bellman} employs this procedure to establish a dynamic programming Bellman
equation for the minimum conditional relative entropy supply, required to drive the system between given initial and terminal state distributions,  and introduces a system robustness index.
Sections
\ref{sec:state_dist_contr} to \ref{sec:example} are concerned with
the case of linear dynamics  and a white Gaussian nominal noise sequence. Section \ref{sec:state_dist_contr}
establishes conditions for the reachability of Gaussian state distributions  
of the linear system. Section~\ref{sec:min_req_supp}
reduces the problem of computing the minimum required supply for the case of Gaussian boundary conditions to an algebraic Riccati
equation. A closed-form solution of this equation
is given in
Section~\ref{sec:Ric} and is used  in Section \ref{sec:robust_meas} for computing the robustness index for a class of linear systems. Section~\ref{sec:example} provides an example which explicitly calculates the robustness index for a one-dimensional linear system.

\section{Stochastic systems with statistically uncertain noise}
\label{sec:system}

We consider a dis\-crete-time system with a state signal $X:=(X_k)_{k\> 0}$, driven by a noise input $W:=(W_k)_{k\> 0}$. In order to capture various special cases in a general formulation, the values $X_k$ and $W_k$ of these signals (at the $k$th time step)  are assumed to belong to Polish (complete separable metric)  spaces $\cX$ and $\cW$, endowed  with Borel $\sigma$-algebras $\fX$ and $\fW$, respectively.  The dynamics of the system in the \emph{state space} $\cX$ are  governed by a time-invariant equation
\begin{equation}
\label{system}
    X_{k+1} = f(X_k, W_k),
    \qquad
    k =0,1,2,\ldots,
\end{equation}
where $f: \cX \x \cW \to \cX$ is a given Borel measurable \emph{one-step state transition map}.
Thus, the states of the system at any two moments of time are related by
\begin{equation}
\label{FG}
    X_t
    =
    F_{t-s}(Y_{s:t-1})
    =
    F_{t-s}(X_s, W_{s:t-1})
    =
    F_{t-s}(X_s, W_s, \ldots, W_{t-1}),
    \qquad
    0\< s\< t.
\end{equation}
Here, $F_k: \cX \x \cW^k \to \cX$ denotes the  \emph{$k$ step state
transition map}, which satisfies the recurrence relation
\begin{equation}
\label{Fnext}
    F_{k+1}(x_0, w_0,\ldots, w_k) = f(F_k(x_0, w_0,\ldots, w_{k-1}), w_k)
\end{equation}
for all $x_0\in \cX$ and $w_0, \ldots, w_k \in \cW$,
with the initial condition that $F_0$ is the identity map on the state space
 $\cX$. Also, for the time interval $[s,t]$,
 \begin{equation}
\label{Yst}
    Y_{s:t}
    :=
    (X_s, W_{s:t})
    =
    (X_s, W_s, \ldots, W_t)
\end{equation}
denotes the \emph{state-noise sequence} which is formed from the initial state $X_s$ of the system and the \emph{noise sequence}
\begin{equation}
\label{Wst}
     W_{s:t}
     :=
     (W_s, \ldots, W_t).
\end{equation}
%
%
%
%
%
%
%
Randomness is introduced into the system (\ref{system}) by assuming that the initial state $X_0$ and the noise sequence $W$ are random elements. Their joint probability distribution $\sP_{X_0, W}$  is a probability measure\footnote{We denote by $\sP_{\xi}$ the probability distribution of a random element $\xi$, and by $\sP_{\xi \mid \eta}$ the conditional probability distribution of $\xi$ with respect to another random element $\eta$, with $\xi$ and $\eta$ taking values in Polish spaces. Thus, $\sP_{\xi\mid \eta}(S\mid y)$ is  a probability measure  of a Borel set $S$ and a Borel measurable function of $y$. The joint probability distribution of $\xi$ and $\eta$ is denoted by $\sP_{\xi,\eta}$.} on the measurable  space $(\cX\x \cW^{\infty}, \fX\x \fW^{\infty})$.
 Accordingly, the state sequence $X$ is a $\cX^{\infty}$-valued random element with a probability distribution $\sP_X$
 on $(\cX^{\infty}, \fX^{\infty})$. Since $X$ depends on $X_0$ and $W$ in a deterministic way, as described by (\ref{FG}) and (\ref{Fnext}), with the map $(X_0, W)\mapsto X$ being completely specified by the one-step state transition map $f$,   the probability distribution  $\sP_X$ can be expressed in terms of $\sP_{X_0, W}$. In particular,  consider the \emph{state distribution}
 \begin{equation}
\label{Pkdef}
    P_k
    :=
    \sP_{X_k}
\end{equation}
of the system at time $k$, that is, an appropriate marginal probability distribution of $X_k$  on the measurable space $(\cX, \fX)$ which corresponds to $\sP_X$. In view of (\ref{FG}), the state distributions (\ref{Pkdef}) are related to the probability distributions
\begin{equation}
\label{Qst}
    Q_{s,t}
    :=
    \sP_{Y_{s:t-1}}
    =
    \sP_{X_s, W_{s:t-1}}
    =
    \sP_{X_s, W_s, \ldots, W_{t-1}}
\end{equation}
of the state-noise sequences (\ref{Yst}), that is, the joint probability distributions of  $X_s$ and $W_{s:t-1}$ on $(\cX\x \cW^{t-s}, \fX\x \fW^{t-s})$:
\begin{equation}
\label{PQ}
    P_t
    =
    Q_{s,t}
    \circ
    F_{t-s}^{-1}.
\end{equation}
The right-hand side of (\ref{PQ}) is the image measure \cite[pp. 51--52]{Schilling_2005} of the probability distribution $Q_{s,t}$ under the $t-s$ step state transition map $F_{t-s}$, with $F_k^{-1}(S):= \{y\in \cX\x \cW^k:\ F_k(y)\in S\}$ the pre-image of a set $S \in \fX$.
%
In turn, $Q_{s,t}$ in (\ref{Qst}) is completely specified by the initial state distribution $P_s$ and the conditional probability distribution $\sP_{W_{s:t-1}\mid X_s}$ of the noise sequence $W_{s:t-1}$ given $X_s$ in view of the chain rule for probability measures
\begin{equation}
\label{QPP}
    Q_{s,t}(\rd x \x \rd w) =  P_s(\rd x) \sP_{W_{s:t-1}\mid X_s}(\rd w \mid x),
    \qquad
    x \in \cX,
    \
    w \in \cW^{t-s}.
\end{equation}
The equations (\ref{system}) may describe the dynamics of a closed-loop  system obtained by applying a given feedback controller to a  given plant, in which case $X$ incorporates both the plant and controller state variables. Then, the plant is subject to an external random noise $W$.  The design of such a controller often employs a relatively simple statistical model for the noise and is aimed at suppressing the influence of the  noise on the closed-loop system performance. Although the \emph{nominal} noise model is not guaranteed to be accurate, the feedback is usually developed  so as to make the system ``well-behaved'' at least under the nominal noise (for example, by an appropriate choice of the map $f$ in (\ref{system})). Whereas the meaning of this depends on a specific control context, the property of being well-behaved (which is pursued by the control designer)  is understood here as the existence of an invariant probability measure for the system state sequence $X$ under the nominal noise.

\section{Nominal noise model and nominal invariant state distribution}
\label{sec:equilibrium}

A typical nominal noise model is that $W$  is a ``white noise'' sequence of independent identically distributed random elements which are also independent of $X_0$.

\begin{definition}
\label{def:nom_scenario}
Suppose $R$ is a given probability measure on the measurable space $(\cW, \fW)$. The noise $W$ is called nominal if $W_0, W_1, \ldots$ are independent $R$-distributed random elements, independent of the initial state of the system $X_0$, so that the corresponding conditional probability distribution  is a product measure
\begin{equation}
\label{Rinf}
    \sP_{W\mid X_0}^*
    =
    R^{\infty} = R\x R\x \ldots.
\end{equation}
\end{definition}

The probability measure $\sP_{W\mid X_0}^*$ in (\ref{Rinf}), under which the noise $W$ has a simple  statistical structure (specified completely by the \emph{nominal}  marginal distribution  $R$ of $W_k$),   plays the role of a model for the unknown \emph{actual noise} probability measure $\sP_{W\mid X_0}$.
Under the nominal noise defined in  Definition~\ref{def:nom_scenario}, the state sequence $X$ is a homogeneous Markov chain with transition probability measure
\begin{equation}
\label{G}
    G(S\mid x)
    :=
    R
    (f(x,\cdot)^{-1}(S)),
    \qquad
    S \in \fX,\
    x \in \cX,
\end{equation}
where $f(x,\cdot)^{-1}(S):= \{w \in \cW:\ f(x,w)\in S\}$.
In this case, the state distributions $P_k$ from (\ref{Pkdef}) satisfy the recurrence equation
\begin{equation}
\label{PkTS}
    P_{k+1}(S)
    =
    \int_{\cX}
    G(S\mid x)
    P_k(\rd x)
    =
    (P_k \x R)(f^{-1}(S)),
\end{equation}
where $f^{-1}(S):= \{(x,w)\in \cX\x \cW:\ f(x,w)\in S\}$ is the pre-image of a set $S \in \fX$ under the one-step state transition map $f$.  An invariant measure for the Markov chain $X$ is a probability measure $P_*$ on $(\cX, \fX)$ which is a fixed point of the linear integral operator described by the right-hand side of (\ref{PkTS}). That is,
\begin{equation}
\label{P*}
    P_*
    =
    (P_* \x R)
    \circ
    f^{-1}.
\end{equation}
By induction, (\ref{PQ}) allows (\ref{P*}) to be extended to the image measure under the $k$ step state transition map $F_k$ in (\ref{FG}) and (\ref{Fnext}) as
\begin{equation}
\label{P*push}
    P_* = (P_* \x R^k) \circ F_k^{-1},
    \qquad
    k>0.
\end{equation}

\begin{definition}
\label{def:nom_inv_state_distr}
An invariant measure $P_*$ of the Markov chain $X$ under the nominal noise from Definition \ref{def:nom_scenario} in the sense of (\ref{P*}) is referred to as a nominal invariant state distribution for the system.
\end{definition}

In what follows, we assume that a nominal invariant state distribution $P_*$ for the system exists, though is not necessarily unique.   Any such $P_*$ is an equilibrium point for the state distributions $P_0, P_1, \ldots$ of the system, governed by (\ref{PkTS}) under the nominal noise.
 General criteria for the existence of invariant measures for Markov chains are beyond the scope of the  present paper. However,  we will describe a version of Harris's theorem from \cite{Hairer_2010,MT_1993}, which guarantees the existence and uniqueness of an invariant  probability measure. In application to our specific context, the sufficient conditions are as follows. Suppose there exist a Borel measurable function $V: \cX \to \mR_+$ and constants $0\<  q  <1$ and $r  \> 0$ such that the inequality
\begin{equation}
\label{A1}
    \bE V(f(x,\omega))
    :=
    \int_{\cW}
    V(f(x,w))
    R(\rd w)
    =
        \int_{\cX}
    V(y)
    G(\rd y\mid x)
    \< q  V(x) + r
\end{equation}
holds for all $x \in \cX$. Here, the expectation $\bE(\cdot)$ is taken over an $R$-distributed random element $\omega$  with values in $\cW$, and $G$ is the Markov transition kernel (\ref{G}) of the state sequence $X$ under the nominal noise.  Also, suppose
\begin{equation}
\label{A2}
    \sup_{x,y\in \cX_v}\
    \sup_{g: \cX \to [-1,1]}|\bE (g(f(x,\omega)) - g(f(y,\omega)))|<2
\end{equation}
for any $v> 0$, where $\cX_v:=\{x \in \cX:\ V(x)\< v\}$ denotes the corresponding sub-level set of the function $V$ from (\ref{A1}), and the second supremum is taken over Borel measurable real-valued functions $g$ on $\cX$ whose absolute value does not exceed one. The left-hand side of (\ref{A2}) is the diameter of the set $\{G(\cdot \mid x):\ x\in \cX_v\}$ in the sense of the total variation distance between probability measures \cite{Shiryaev}. Then, in view of \cite[Theorem 3.6 on p. 13]{Hairer_2010},  the conditions (\ref{A1}) and (\ref{A2}) (which correspond to \cite[Assumptions 3.1 and 3.4 on p. 12]{Hairer_2010}) imply that the system (\ref{system}) has a unique nominal invariant state distribution $P_*$.

The \emph{actual} conditional distribution $\sP_{W\mid X_0}$ of the noise may differ from its nominal model (\ref{Rinf}). In particular, there can be statistical \emph{dependence} between $W_k$'s at different  times or between the noise $W$ and the initial state of the system $X_0$. Also,  the marginal distribution of $W_k$ may differ from $R$ even if $W$ is indeed a white noise sequence.   The  discrepancy between the true $\sP_{W\mid X_0}$ and its nominal model, present in all these cases,  is interpreted as \emph{statistical uncertainty} in  the noise $W$.

The dependence of the conditional distribution of the future noise on the current state of the system (which depends on the past history of the noise) can arise in the case of a ``colored'' noise whose values at different moments of time are statistically dependent. Without specifying a mechanism for the memory effects in the random environment which produce such a noise\footnote{A discussion of the generation of such noise can be found, for example,  in physics literature on open systems \cite{BP_2006}.},   we will interpret the conditional probability distributions $\sP_{W_{s:t}\mid X_s}$ in (\ref{QPP}) as the strategy of a hypothetical \emph{noise player} who opposes the control designer.  More precisely, it is assumed that the noise player  has access to the current state $X_s$ of the system at any moment of time $s$ and  uses this information in generating the future noise inputs $W_s, W_{s+1}, \ldots$ so as to make the system deviate from the nominal behavior described in Section~\ref{sec:equilibrium}. In particular, this process
can be viewed as the noise player
aiming to drive the actual state distribution $P_t$ of the system away from the nominal invariant state
distribution $P_*$. That is, the noise player aims to drive the state distribution away from the probabilistic equilibrium of the system under the nominal noise.
The extent, to which the actual probability distribution  $\sP_X$ of the state sequence differs from the probability law of a Markov chain with the transition kernel (\ref{G}) and invariant measure $P_*$, depends on the amount of statistical uncertainty in the noise.

\section{Conditional relative entropy to quantify statistical uncertainty}
\label{sec:cond_rel_ent}

Similarly to stochastic robust control settings such as in
\cite{PJD_2000,UP_2001,XUP_2008}, the deviation of the conditional noise
distribution $\sP_{W\mid X_0}$ from its nominal model (\ref{Rinf})  will be quantified in terms of the \emph{conditional relative
entropy} \cite[Section~5.3]{Gray_2009}.

Recall that for two conditional probability distributions  $\sP_{\xi \mid \eta}$ and $\sP_{\xi \mid \eta}^*$ of random elements $\xi$ and $\eta$ with values in Polish spaces $\cS_1$ and $\cS_2$,  the conditional relative entropy of $\sP_{\xi \mid \eta}$ with respect to $\sP_{\xi \mid \eta}^*$  is defined as
\begin{align}
\nonumber
    \bD(\sP_{\xi \mid \eta} \| \sP_{\xi \mid \eta}^*)
    := &
    \bE \ln \varphi(\xi \mid \eta)
    =
    \int_{\cS_1\x \cS_2}
    \ln \varphi(x\mid y)
    \sP_{\xi, \eta}(\rd x \x \rd y)\\
\nonumber
    = &
    \int_{\cS_2}
    \Big(
    \int_{\cS_1}
    \bL(\varphi(x\mid y))
    \sP_{\xi\mid \eta}^*(\rd x \mid y)
    \Big)
    \sP_{\eta}(\rd y)\\
\label{bD}
    = &
    \int_{\cS_2}
    \bD_0
    (
        \sP_{\xi \mid \eta}(\cdot \mid y)
        \|
        \sP_{\xi \mid \eta}^*(\cdot\mid y)
    )
    \sP_{\eta}(\rd y),
    \qquad
    \varphi(x\mid y):=         \frac{\sP_{\xi \mid \eta}(\rd x \mid y)}
        {\sP_{\xi \mid \eta}^*(\rd x \mid y)},
\end{align}
where the expectation is taken over the joint probability distribution $\sP_{\xi,\eta}$ of $\xi$ and $\eta$, associated with $\sP_{\xi\mid \eta}$ by the chain rule $\sP_{\xi, \eta}(\rd x \x \rd y) = \sP_{\xi \mid \eta}(\rd x \mid y) \sP_{\eta}(\rd y)$, and the functional $\bD_0$ is described below.
Here, the function
\begin{equation}
\label{bL}
    \bL(p)
    :=
    p\ln p
\end{equation}
is defined on $\mR_+$,
with the standard convention that $\bL(0) = 0$.  Also, $\varphi: \cS_1\x \cS_2\to \mR_+$  in (\ref{bD}) is a Borel measurable  function, which, for any fixed but otherwise arbitrary value of its second argument $y \in \cS_2$, describes the Radon-Nikodym derivative \cite{P_1967,Schilling_2005,SG_1977} of the probability measure $\sP_{\xi \mid \eta}(\cdot \mid y)$ with respect to the reference probability measure $\sP_{\xi \mid \eta}^*(\cdot \mid y)$, so that
 $\sP_{\xi\mid \eta}(S\mid y) = \int_S \varphi(x\mid y)\sP_{\xi\mid \eta}^*(\rd x \mid y)$ for any Borel subset $S \subset \cS_1$.
 This conditional probability density function (PDF) $\varphi$ exists if and only if the first measure is absolutely continuous
 with respect to the second one:
\begin{equation}
\label{PllP}
    \sP_{\xi \mid \eta}(\cdot \mid y)
    \ll
    \sP_{\xi \mid \eta}^*(\cdot \mid y).
\end{equation}
That is, for all Borel subsets $S\subset \cS_1$, the fulfillment of $\sP_{\xi \mid \eta}^*(S \mid y)=0$ implies $\sP_{\xi \mid \eta}(S \mid y)=0$.
The functional $\bD_0$ in (\ref{bD}), which is distinguished from $\bD$, describes the \emph{unconditional} relative entropy
\begin{equation}
\label{bD0}
    \bD_0
    (
        \sP
        \|
        \sP_*
    )
=
    \int_{\cS}
    \ln \varphi(x)
    \sP(\rd x)    =
    \int_{\cS}
    \bL(\varphi(x))
    \sP_*(\rd x),
    \qquad
    \varphi(x):= \sP(\rd x)/\sP_*(\rd x),
\end{equation}
for probability measures $\sP\ll \sP_*$ on a common Polish space $\cS$
with an appropriate  PDF $\varphi: \cS\to \mR_+$. The conditional relative entropy $\bD(\sP_{\xi \mid \eta} \| \sP_{\xi \mid \eta}^*)$ in (\ref{bD}) is well-defined if the conditional absolute continuity (\ref{PllP}) holds for $\sP_{\eta}$-almost all values $y \in \cS_2$ of the random element $\eta$. It follows from the properties of  relative entropy \cite{CT_2006,Gray_2009} that both functionals $\bD$ and $\bD_0$ are always nonnegative and vanish only on equal measures (so that, in particular,  $\bD_0(\sP\|\sP_*) = 0$ if and only if $\sP = \sP_*$).

Now, when quantifying the deviation of the actual conditional probability distribution  $\sP_{W_{s:t-1}\mid X_s}$ of the noise sequence $W_{s:t-1}$ on the time interval $[s,t)$ from its nominal model $R^{t-s}$,  the conditional relative entropy (\ref{bD}) takes the form
\begin{eqnarray}
\nonumber
    E_{s,t}
    & := &
    \bD(
        \sP_{ W_{s:t-1} \mid X_s}
        \|
        R^{t-s}
    )
    =
    \bE
    \ln
    \varphi_{s,t}(W_{s:t-1} \mid X_s)\\
\nonumber
    & = &
    \int_{\cX \x \cW^{t-s}}
    \ln \varphi_{s,t}(w \mid x)
    Q_{s,t}(\rd x \x \rd w)\\
\nonumber
    & = &
    \int_{\cX}
    \Big(
        \int_{\cW^{t-s}}
        \bL(\varphi_{s,t}(w \mid x))
        R^{t-s}(\rd w)
    \Big)
    P_s(\rd x)\\
\label{Est}
    & = &
    \int_{\cX}
    \bD_0(\sP_{W_{s:t-1}\mid X_s}(\cdot \mid x) \| R^{t-s})
    P_s(\rd x),
    \qquad
        \varphi_{s,t}(w\mid x)
    :=
    \frac{\sP_{W_{s:t-1} \mid X_s}(\rd w \mid x)}{R^{t-s}(\rd w)},
\end{eqnarray}
where the expectation is taken over the probability
distribution $Q_{s,t}$ of the state-noise sequence $Y_{s:t-1}$ from (\ref{Qst}) and (\ref{QPP}). Here, the distribution of the noise sequence $W_{s:t-1}$, conditioned on $X_s$,  is assumed to be
absolutely continuous with respect to the corresponding nominal distribution $R^{t-s}$ in the sense that
\begin{equation}
\label{cond_abs_cont}
    \sP_{W_{s:t-1}
        \mid
        X_s}(\cdot \mid x)
    \ll
    R^{t-s}
    \quad
    {\rm  for}\
    P_s{\rm-almost\ all}\ x \in \cX.
\end{equation}
This ensures that the conditional PDF $\varphi_{s,t}: \cW^{t-s}\x \cX\to \mR_+$  in (\ref{Est})
exists and the quantity $E_{s,t}$ is well-defined.
Further discussion will be concerned with a class of ``admissible'' probability distributions for the noise as specified below.

\begin{definition}
\label{def:good}
A noise sequence $W$, which drives the system dynamics (\ref{system}), is called admissible if the conditional  probability distribution  $\sP_{W_{s:t-1}\mid  X_s}$ satisfies  (\ref{cond_abs_cont}) for any  times $0\< s< t$.
\end{definition}

The conditional relative entropy $E_{s,t}$ in (\ref{Est}), which is always nonnegative,     vanishes for all
$0\< s < t$ if and only if the noise sequence $W$ is
$R^{\infty}$-distributed and independent of the  initial  state $X_0$.
In what follows, when considering the system on a time interval $[s,t)$, we will always assume that the distribution of the initial state $X_s$ is  absolutely continuous with respect to the nominal invariant state distribution $P_*$. That is,
\begin{equation}
\label{PP}
    P_s \ll P_*.
\end{equation}
%
%
In view of the chain rule (\ref{QPP}), the fulfillment of conditions (\ref{cond_abs_cont}) and (\ref{PP})  implies that the actual probability distribution $Q_{s,t}$ of the state-noise sequence from (\ref{Qst}) is absolutely continuous with respect to the corresponding product measure:
\begin{equation}
\label{allow}
    Q_{s,t}
    \ll
    P_* \x R^{t-s}.
\end{equation}
Note that (\ref{allow}) implies that the property (\ref{PP}) will  be inherited by the subsequent state distribution $P_t$. Indeed, since     $P_t$ and
$
    P_* = (P_* \x R^{t-s})\circ F_{t-s}^{-1}
$
are the image measures of $Q_{s,t}$ and $P_* \x R^{t-s}$ under the same map $F_{t-s}$ in view of (\ref{PQ}) and (\ref{P*push}), then  (\ref{allow}) implies that $P_t \ll P_*$. Therefore, if $P_0\ll P_*$, then for any admissible noise $W$ in the sense of Definition~\ref{def:good}, the property $P_t\ll P_*$ holds for any $t>0$.

Although  (\ref{Est})  requires only
the conditional absolute continuity condition (\ref{cond_abs_cont}) for the noise,  the additional absolute continuity (\ref{PP}) for the state distributions  will play a role in Section~\ref{sec:dissipation}. Under the conditions (\ref{cond_abs_cont}) and (\ref{PP}), the chain rule (\ref{QPP}) allows the PDF of $Q_{s,t}$ with respect to the reference measure $P_* \x R^{t-s}$ in (\ref{allow}) to be  factorized as
\begin{equation}
\label{QPR}
    \frac{Q_{s,t}(\rd x \x \rd w)}{P_*(\rd x)R^{t-s}(\rd w)}
    =
    \varpi_s(x)\varphi_{s,t}(w\mid x),
    \qquad
    x \in \cX,\
    w \in \cW^{t-s}.
\end{equation}
Here, $\varphi_{s,t}$ is the conditional PDF of the noise sequence $W_{s:t-1}$ given $X_s$ from (\ref{Est}), and $\varpi_s: \cX\to \mR_+$ is the PDF of the actual state distribution $P_s$ with respect to the nominal invariant state distribution $P_*$:
\begin{equation}
\label{varpi}
    \varpi_s(x)
    :=
    P_s(\rd x)/P_*(\rd x),
    \qquad
    x\in \cX.
\end{equation}
In what follows, we will study several variational problems which involve the conditional relative entropy (\ref{Est}).
The quantity $E_{s,t}$, which is   a measure of deviation from the nominal noise model (\ref{Rinf}), can be regarded as a resource which the noise player would prefer to spend economically in performing the  role of driving the system away from the nominal invariant state distribution.

\section{Conditional relative entropy balance equation and  dissipation inequality}
\label{sec:dissipation}

For the purposes of the subsequent sections, we will now discuss several properties of the conditional relative entropy $E_{0,t}$, defined in (\ref{Est}), starting with its decomposition which employs time reversal and Bayesian analysis ideas \cite{Bernardo}.
Let $\sP_{Y_{0:t-1} \mid X_t}^*$ denote the conditional (given $X_t$) probability distribution  which the
state-noise sequence $Y_{0:t-1}$  would have if the system (\ref{system}) were initialized at the nominal invariant state distribution $P_*$ and were subjected to the nominal noise $W$ in the sense of Definition~\ref{def:nom_scenario} (in which case, the \emph{unconditional} probability distribution $Q_{0,t}$ of $Y_{0:t-1}$ would be $P_* \x R^t$).
The fact that $Y_{0:t-1}$, associated with the time interval $[0,t)$, is conditioned here on the \emph{terminal} state of the system $X_t = F_t(Y_{0:t-1})$ under the nominal noise, with $F_t$ the $t$ step state transition map from (\ref{FG}), motivates the following definition.

\begin{definition}
\label{def:nom_post}
The conditional probability  distribution $\sP_{Y_{0:t-1} \mid X_t}^*(\cdot \mid x)$ of a $P_*\x R^t$-distributed random element $\eta$, conditioned on $ F_t(\eta) = x$, is called the nominal posterior distribution of the state-noise sequence $Y_{0:t-1}$.
\end{definition}

Note that the nominal posterior distribution $\sP_{Y_{0:t-1} \mid X_t}^*$ is uniquely determined by the integral equation
$$
    \int_{\cX \x \cW^t}
    g(y,F_t(y))
    (P_* \x R^t)(\rd y)
    =
    \int_{\cX}
    \Big(
        \int_{\cX \x \cW^t}
        g(y,x)
        \sP_{Y_{0:t-1} \mid X_t}^*(\rd y \mid x)
    \Big)
    P_*(\rd x),
$$
which must be satisfied
for Borel measurable functions $g: (\cX \x \cW^t)\x \cX \to \mR$ and is closely related to Bayes formula. Here, use is made of the property that the random element $F_t(\eta)$ in Definition \ref{def:nom_post} has the nominal invariant state distribution  $P_*$.

\begin{lemma}
\label{lem:rel_ent_balance} Suppose the initial state distribution of the system (\ref{system}) satisfies  $P_0 \ll P_*$, and the noise $W$ is admissible in the sense of Definition~\ref{def:good}. Then for any $t>0$,  the conditional relative entropy $E_{0,t}$, defined by
(\ref{Est}), is representable as
\begin{equation}
\label{balance}
    E_{0,t}
     =
    \bD_0(P_t \| P_*)
    -
    \bD_0(P_0 \| P_*)
      +
    \bD(
        \sP_{Y_{0:t-1} \mid X_t}
        \|
        \sP_{Y_{0:t-1} \mid X_t}^*
    ).
\end{equation}
Here, $\bD_0$ is the relative entropy functional (\ref{bD0}), and $\sP_{Y_{0:t-1} \mid X_t}^*$  is the nominal posterior distribution of the state-noise sequence $Y_{0:t-1}$ from Definition \ref{def:nom_post}.
\end{lemma}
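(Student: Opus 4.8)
The plan is to establish (\ref{balance}) as an identity between Radon--Nikodym derivatives, integrated against the actual joint distribution $Q_{0,t}$ of the state-noise sequence $Y_{0:t-1}$. The starting point is the factorization (\ref{QPR}) of the PDF of $Q_{0,t}$ with respect to $P_*\x R^t$ as $\varpi_0(x)\varphi_{0,t}(w\mid x)$, which is available precisely because the hypotheses $P_0\ll P_*$ (so (\ref{PP}) holds at $s=0$) and admissibility of $W$ (so (\ref{cond_abs_cont}) holds) are in force. Taking logarithms and integrating against $Q_{0,t}$ gives
\begin{equation*}
    \int_{\cX\x\cW^t}\ln\big(\varpi_0(x)\varphi_{0,t}(w\mid x)\big)\, Q_{0,t}(\rd x\x\rd w)
    =
    \bD_0(P_0\|P_*) + E_{0,t},
\end{equation*}
the first term because the $x$-marginal of $Q_{0,t}$ is $P_0$ and $\varpi_0=P_0(\rd x)/P_*(\rd x)$, the second by the definition (\ref{Est}) of $E_{0,t}$. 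So it remains to reinterpret the left-hand side via the terminal state $X_t$ and the Bayesian/time-reversal decomposition.

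The key step is a ``change of reference measure along $F_t$'' identity. I would disintegrate $Q_{0,t}$ over the terminal state $X_t=F_t(Y_{0:t-1})$: writing $P_t=\sP_{X_t}$ for its marginal (consistent with (\ref{PQ}), $P_t = Q_{0,t}\circ F_t^{-1}$) and $\sP_{Y_{0:t-1}\mid X_t}$ for the corresponding conditional, the chain rule gives $Q_{0,t}(\rd y\x\rd x')$ supported on $x'=F_t(y)$. Do the same disintegration for the reference measure $P_*\x R^t$: by (\ref{P*push}) its $F_t$-image is exactly $P_*$, and its conditional given $F_t=x$ is by Definition~\ref{def:nom_post} the nominal posterior $\sP_{Y_{0:t-1}\mid X_t}^*(\cdot\mid x)$. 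Since $Q_{0,t}\ll P_*\x R^t$ by (\ref{allow}), the Radon--Nikodym derivative factors multiplicatively under disintegration: for $P_t$-almost every $x$,
\begin{equation*}
    \varpi_0(y_0)\varphi_{0,t}(w\mid y_0)
    =
    \frac{P_t(\rd x)}{P_*(\rd x)}\bigg|_{x=F_t(y)}\cdot\,
    \frac{\sP_{Y_{0:t-1}\mid X_t}(\rd y\mid x)}{\sP_{Y_{0:t-1}\mid X_t}^*(\rd y\mid x)}\bigg|_{x=F_t(y)},
\end{equation*}
where $y=(y_0,w)$. Taking logs and integrating against $Q_{0,t}$: the first factor contributes $\int_{\cX}\ln(P_t(\rd x)/P_*(\rd x))\,P_t(\rd x)=\bD_0(P_t\|P_*)$ because $P_t$ is the law of $F_t(Y_{0:t-1})$ under $Q_{0,t}$; the second factor contributes $\bD(\sP_{Y_{0:t-1}\mid X_t}\|\sP_{Y_{0:t-1}\mid X_t}^*)$ by the definition (\ref{bD}) of conditional relative entropy with $\xi=Y_{0:t-1}$, $\eta=X_t$. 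Equating the two evaluations of $\int\ln(\varpi_0\varphi_{0,t})\,\rd Q_{0,t}$ yields $\bD_0(P_0\|P_*)+E_{0,t}=\bD_0(P_t\|P_*)+\bD(\sP_{Y_{0:t-1}\mid X_t}\|\sP_{Y_{0:t-1}\mid X_t}^*)$, which rearranges to (\ref{balance}).

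The main obstacle is making the disintegration argument rigorous on the Polish spaces involved and justifying the multiplicative factorization of the Radon--Nikodym derivative under conditioning on the (generally non-injective, non-smooth) map $F_t$. Concretely, one must check: that $\sP_{Y_{0:t-1}\mid X_t}^*$ as characterized by the integral equation following Definition~\ref{def:nom_post} genuinely serves as the regular conditional distribution of the reference measure; that $\sP_{Y_{0:t-1}\mid X_t}(\cdot\mid x)\ll\sP_{Y_{0:t-1}\mid X_t}^*(\cdot\mid x)$ for $P_t$-almost all $x$ (inherited from (\ref{allow}) via a standard measure-disintegration lemma, since $P_t\ll P_*$ has already been noted); and that the chain rule for Radon--Nikodym derivatives $\mathrm{d}Q_{0,t}/\mathrm{d}(P_*\x R^t) = (\mathrm{d}P_t/\mathrm{d}P_*)\circ F_t\cdot \mathrm{d}\sP_{Y\mid X_t}/\mathrm{d}\sP^*_{Y\mid X_t}$ holds almost everywhere. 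All three are consequences of the existence and essential uniqueness of regular conditional probabilities on Polish spaces together with the standard ``RN derivative disintegrates'' lemma; once these measure-theoretic points are in place the computation is the bookkeeping sketched above. A secondary, more benign point is integrability: one should note that each of the three relative-entropy terms in (\ref{balance}) is nonnegative, and that $E_{0,t}$ being finite (which we may assume, else the identity reads $\infty=\infty$ once one checks $\bD_0(P_0\|P_*)<\infty$ under $P_0\ll P_*$ is not automatic) makes the rearrangement of possibly-infinite quantities legitimate.
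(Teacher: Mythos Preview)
Your proposal is correct and follows the same underlying idea as the paper: compute $\bD_0(Q_{0,t}\|P_*\x R^t)$ in two ways, once via the forward factorization (\ref{QPR}) to obtain $\bD_0(P_0\|P_*)+E_{0,t}$, and once via a backward (time-reversed) disintegration over $X_t=F_t(Y_{0:t-1})$ to obtain $\bD_0(P_t\|P_*)+\bD(\sP_{Y_{0:t-1}\mid X_t}\|\sP^*_{Y_{0:t-1}\mid X_t})$. The paper packages the backward step slightly differently: rather than invoking the ``RN derivative disintegrates'' lemma directly, it augments $Y_{0:t-1}$ by the deterministic function $X_t=F_t(Y_{0:t-1})$, observes that this leaves the unconditional relative entropy unchanged (since $\sP_{X_t\mid Y_{0:t-1}}$ and $\sP_{F_t(\eta)\mid\eta}$ are both the Dirac kernel at $F_t(\cdot)$, giving zero conditional relative entropy), and then applies the relative-entropy chain rule to the joint law in the reverse order, conditioning $Y_{0:t-1}$ on $X_t$. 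This sidesteps the explicit verification of the pointwise factorization of densities you flag as the main obstacle, replacing it with two invocations of the chain rule \cite[Lemma~5.3.1]{Gray_2009}; otherwise the arguments coincide.
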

\begin{proof}
The factorization (\ref{QPR}) (see also the chain rule for the relative entropy
\cite[Lemma 5.3.1 on p. 94]{Gray_2009}) implies that
\begin{align}
\nonumber
    \bD_0(Q_{0,t}\| P_* \x R^t)
    & =
    \bE\ln (\varpi_0(X_0) \varphi_{0,t}(W_{0:t-1} \mid X_0))\\
\nonumber
    & =
    \bE\ln \varpi_0(X_0)
    +
    \bE\ln \varphi_{0,t}(W_{0:t-1} \mid X_0)\\
\label{chain1}
    & =
    \bD_0(P_0 \| P_*)
    +
    E_{0,t},
\end{align}
where the expectation is taken over the actual probability distribution $Q_{0,t}$ of the state-noise sequence $Y_{0:t-1}$. Here,
$\varphi_{0,t}$ is the conditional PDF of $W_{0:t-1}$ given $X_0$ from (\ref{Est}), and $\varpi_0$ is the PDF (\ref{varpi}) of the initial state distribution $P_0$ with respect to the nominal invariant state distribution $P_*$. Furthermore,  since $X_t = F_t(Y_{0:t-1})$ depends in a deterministic way on $Y_{0:t-1}$ in view of (\ref{FG}), so that the conditional distribution $\sP_{X_t \mid Y_{0:t-1}}(\cdot \mid y)$ is an atomic  probability measure \cite[p. 46]{Schilling_2005}  concentrated on the singleton $\{F_t(y)\}$ for any $y \in \cX\x \cW^t$ regardless of the probability distribution of $Y_{0:t-1}$, then the augmentation of $Y_{0:t-1}$ by $X_t$ does not change the relative entropy in (\ref{chain1}). More precisely, by using a $P_*\x R^t$-distributed random element $\eta$ from Definition~\ref{def:nom_post} and applying the relative entropy chain rule again, it follows that
 \begin{align}
\nonumber
    \bD_0(\sP_{Y_{0:t-1}, X_t} \| \sP_{\eta, F_t(\eta)})
    & =
    \bD_0(\sP_{Y_{0:t-1}} \| \sP_{\eta}) +
    \bD(\sP_{X_t \mid Y_{0:t-1}} \| \sP_{F_t(\eta) \mid \eta})\\
 \label{DDD}
    & =
    \bD_0(Q_{0,t}\| P_* \x R^t).
 \end{align}
Here, $\bD(\sP_{X_t \mid Y_{0:t-1}} \| \sP_{F_t(\eta) \mid \eta}) =0$ because the conditional probability distributions  $\sP_{X_t \mid Y_{0:t-1}}(\cdot \mid y)$ and $\sP_{F_t(\eta) \mid \eta}(\cdot\mid y)$ are identical to each other as  discussed above. Now, application of the relative entropy chain rule to the left-hand side of (\ref{DDD}) in the opposite time direction, with $Y_{0:t-1}$ being conditioned on $X_t$, yields
 \begin{align}
\nonumber
    \bD_0(\sP_{Y_{0:t-1}, X_t} \| \sP_{\eta, F_t(\eta)})
    & =
    \bD_0(\sP_{X_t} \| \sP_{F_t(\eta)}) +
    \bD(\sP_{Y_{0:t-1}\mid X_t} \| \sP_{\eta\mid F_t(\eta)})\\
 \label{DDDD}
    & =
    \bD_0(P_t \| P_*) +
    \bD(\sP_{Y_{0:t-1}\mid X_t} \| \sP_{Y_{0:t-1}\mid X_t}^*).
 \end{align}
 Here, use is made of Definition~\ref{def:nom_post} of the nominal posterior distribution $\sP_{Y_{0:t-1}\mid X_t}^*$ and the property that $F_t(\eta)$ is $P_*$-distributed. Also, the absolute continuity $P_t \ll P_*$ is ensured by the assumption that $P_0 \ll P_*$ and the admissibility of the noise in the sense of Definition~\ref{def:good}. By a straightforward comparison of (\ref{chain1})--(\ref{DDDD}), it follows that
$$    \bD_0(Q_{0,t}\| P_* \x R^t)
    =
   \bD_0(P_0 \| P_*)
    +
    E_{0,t}
    =
    \bD_0(P_t \| P_*)
    +
    \bD(
        \sP_{
            Y_{0:t-1} \mid X_t
        }
        \|
        \sP_{
            Y_{0:t-1} \mid X_t
        }^*
    ),
$$
where the second equality is equivalent to the representation (\ref{balance}), and the proof of the lemma is completed.
\hfill\end{proof}

The conditional relative entropy $E_{0,t}$ in (\ref{Est}) can be interpreted
as the \emph{supply} which  the noise player has to deliver to the
system over the time interval $[0,t)$ in order to make the state distribution $P_t$ of the system deviate from the nominal invariant state
distribution $P_*$. In view of the relative entropy balance equation (\ref{balance}), only part of this
 ``expenditure'', namely, $\bD_0(P_t \| P_*) - \bD_0(P_0 \|
P_*)$, contributes directly  to achieving this goal. The rest of the \emph{conditional relative entropy supply}  $E_{0,t}$ is
``dissipated'' into     $\bD(\sP_{Y_{0:t-1} \mid X_t}\| \sP_{Y_{0:t-1} \mid X_t}^*)$ which quantifies the amount by which the actual conditional
probability  distribution of the state-noise sequence $Y_{0:t-1}$
given $X_t$ can be distinguished from the nominal posterior
distribution $\sP_{Y_{0:t-1} \mid X_t}^*$. This dissipation is caused by an irreversible
loss of information contained in  the state-noise sequence
$Y_{0:t-1}$, only a fraction of which is able to be
encoded in the terminal state $X_t=F_t(Y_{0:t-1})$ of the system in a bijective way.  Omitting the term $\bD(\sP_{Y_{0:t-1} \mid X_t}\| \sP_{Y_{0:t-1} \mid X_t}^*)\> 0$,  the equality
(\ref{balance}) implies that
\begin{equation}
\label{dissipation}
    \bD_0(P_t \| P_*)
    \<
    \bD_0(P_0 \| P_*)
    +
    E_{0,t}.
\end{equation}
By analogy with deterministic dissipativity theory
\cite[pp.~327, 348]{Willems_1972}, the relation (\ref{dissipation}) describes a \emph{relative entropy dissipation inequality}. Accordingly, the \emph{state relative
entropy} $\bD_0(P_t\| P_*)$, which quantifies the deviation of the  actual state distribution $P_t$ from the nominal
invariant state distribution $P_*$, plays the role of a \emph{storage
function} at time $t$. Note, however, that in the stochastic  setting under consideration, these entropy theoretic functionals do not inherit all the properties of the corresponding concepts for deterministic
dissipative systems. For example, unlike the deterministic integral
supply which, as a function of an interval of time, is additive  \cite[p. 23]{Schilling_2005} with respect to the union of disjoint time intervals,  the conditional relative entropy supply (\ref{Est}) is, in general,  \emph{superadditive} as described below.

\begin{lemma}
\label{lem:superadd} For any $0<s<t$, the conditional
relative entropy supply $E_{0,t}$ over the time interval $[0,t)$, defined by (\ref{Est}), is not less
than the sum of the supplies over the constituent
subintervals $[0,s)$ and $[s,t)$:
\begin{equation}
\label{superadd}
    E_{0,t}
    \>
    E_{0,s}
    +
    E_{s,t}.
\end{equation}
The inequality (\ref{superadd}) becomes an equality if and only if three random
elements $Y_{0:s-1}$, $X_s$, $W_{s:t-1}$ form a Markov chain.
\end{lemma}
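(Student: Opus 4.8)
The plan is to decompose the supply $E_{0,t}$ by the chain rule for the conditional relative entropy and then to compare the second term of this decomposition with $E_{s,t}$.

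First I would split $W_{0:t-1}=(W_{0:s-1},W_{s:t-1})$ and use the fact that the nominal measure factorizes, $R^t = R^s\x R^{t-s}$, so that under $R^t$ the blocks $W_{0:s-1}$ and $W_{s:t-1}$ are independent (of each other, and of $X_0$), each distributed according to the corresponding product measure. Applying the chain rule for the relative entropy \cite[Lemma~5.3.1 on p.~94]{Gray_2009} conditionally on $X_0$, and using that $(W_{0:s-1},X_0)$ carries the same information as $Y_{0:s-1}=(X_0,W_{0:s-1})$, yields
\begin{equation*}
    E_{0,t}
    =
    \bD(\sP_{W_{0:s-1}\mid X_0}\|R^s)
    +
    \bD(\sP_{W_{s:t-1}\mid Y_{0:s-1}}\|R^{t-s})
    =
    E_{0,s}
    +
    \bD(\sP_{W_{s:t-1}\mid Y_{0:s-1}}\|R^{t-s}).
\end{equation*}
Admissibility of $W$ in the sense of Definition~\ref{def:good} ensures that all of the conditional distributions appearing here are absolutely continuous with respect to the corresponding nominal product measures, so the relevant conditional PDFs exist.

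It then remains to show that coarsening the conditioning variable from $Y_{0:s-1}$ to its deterministic image $X_s=F_s(Y_{0:s-1})$ does not increase the conditional relative entropy, i.e.
\begin{equation*}
    \bD(\sP_{W_{s:t-1}\mid Y_{0:s-1}}\|R^{t-s})
    \>
    \bD(\sP_{W_{s:t-1}\mid X_s}\|R^{t-s})
    =
    E_{s,t},
\end{equation*}
the nominal reference $R^{t-s}$ being unaffected by the coarsening. Let $\chi(\cdot\mid y)$ and $\varphi_{s,t}(\cdot\mid x)$ be the PDFs, with respect to $R^{t-s}$, of $\sP_{W_{s:t-1}\mid Y_{0:s-1}}(\cdot\mid y)$ and $\sP_{W_{s:t-1}\mid X_s}(\cdot\mid x)$, respectively. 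By the tower property for regular conditional distributions, $\varphi_{s,t}(w\mid F_s(y)) = \bE(\chi(w\mid Y_{0:s-1})\mid X_s = F_s(y))$ for $R^{t-s}$-almost all $w$ and $P_s$-almost all states. Hence the difference of the two conditional relative entropies equals $\bE\ln\big(\chi(W_{s:t-1}\mid Y_{0:s-1})/\varphi_{s,t}(W_{s:t-1}\mid X_s)\big)$, taken over the actual law of $(Y_{0:s-1},W_{s:t-1})$, which is precisely the nonnegative relative entropy $\bD_0(\sP_{Y_{0:s-1},W_{s:t-1}}\|\widetilde{Q})$ of that law with respect to the probability measure
\begin{equation*}
    \widetilde{Q}(\rd y\x\rd w)
    :=
    \sP_{Y_{0:s-1}}(\rd y)\,\varphi_{s,t}(w\mid F_s(y))\,R^{t-s}(\rd w),
\end{equation*}
under which $W_{s:t-1}$, given $Y_{0:s-1}$, has the distribution $\sP_{W_{s:t-1}\mid X_s}(\cdot\mid F_s(Y_{0:s-1}))$. (Equivalently, the inequality follows from Jensen's inequality for the strictly convex function $\bL$ in (\ref{bL}) applied to the conditional expectation given $X_s$ and then integrated against $R^{t-s}$ and $P_s$.) Since a relative entropy vanishes exactly on coinciding measures, equality holds if and only if $\sP_{Y_{0:s-1},W_{s:t-1}}=\widetilde{Q}$, that is, the conditional distribution of $W_{s:t-1}$ given $Y_{0:s-1}$ depends on $Y_{0:s-1}$ only through $X_s$; as $X_s=F_s(Y_{0:s-1})$ is a function of $Y_{0:s-1}$, this is exactly the assertion that $Y_{0:s-1}$, $X_s$, $W_{s:t-1}$ form a Markov chain, which gives the equality characterization in (\ref{superadd}).

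Combining the two displays gives $E_{0,t}=E_{0,s}+\bD(\sP_{W_{s:t-1}\mid Y_{0:s-1}}\|R^{t-s})\>E_{0,s}+E_{s,t}$. The step I expect to be most delicate is the equality case: one must pass from ``$\chi(w\mid\cdot)$ is $\sigma(X_s)$-measurable for $R^{t-s}$-almost every $w$'' to ``the kernel $\sP_{W_{s:t-1}\mid Y_{0:s-1}}$ is $\sigma(X_s)$-measurable'', which uses a countable measure-determining class of Borel subsets of $\cW^{t-s}$ (available because $\cW$ is Polish) together with the regularity of conditional distributions; the remaining work is the routine checking that the absolute continuity relations needed for the chain rule and for the Radon--Nikodym derivative $\rd\sP_{Y_{0:s-1},W_{s:t-1}}/\rd\widetilde{Q}$ hold under Definition~\ref{def:good}.
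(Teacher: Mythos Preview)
Your proof is correct and follows the same two-step architecture as the paper: a chain-rule decomposition $E_{0,t}=E_{0,s}+\bD(\sP_{W_{s:t-1}\mid Y_{0:s-1}}\|R^{t-s})$, followed by the coarsening inequality for the second term. The paper establishes the latter by a direct Jensen argument with the posterior kernel $\theta_s=\sP_{Y_{0:s-1}\mid X_s}$, whereas you identify the gap as the relative entropy $\bD_0(\sP_{Y_{0:s-1},W_{s:t-1}}\|\widetilde{Q})$; it is worth noting that your $\widetilde{Q}$ is precisely the Markovized measure $M_s(Q_{0,t})$ introduced later in (\ref{M}), so your argument in fact yields the sharper identity $E_{0,t}-E_{0,s}-E_{s,t}=\bD_0(Q_{0,t}\|M_s(Q_{0,t}))$, from which both the inequality and the equality case are immediate.
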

\begin{proof}
The chain rule for joint PDFs with respect to product measures allows the conditional
PDF $\varphi_{0,t}$ in (\ref{Est}) to be factorized as
\begin{align}
\nonumber
    \varphi_{0,t}(w_0, \ldots, w_{t-1}\mid x_0)
    = &
    \frac
    {\sP_{ W_{0:t-1} \mid X_0}
    (\rd w_0\x  \ldots\x\rd w_{t-1}\mid x_0) }
    {R(\rd w_0)\x \ldots \x R(\rd w_{t-1})}\\
\nonumber
    = &
    \frac
    {\sP_{ W_{0:s-1} \mid X_0}
    (\rd w_0\x  \ldots\x\rd w_{s-1}\mid x_0) }
    {R(\rd w_0)\x \ldots \x R(\rd w_{s-1})}\\
\nonumber
    &  \x
    \frac
    {\sP_{ W_{s:t-1} \mid Y_{0:s-1}}(\rd w_s\x  \ldots\x\rd w_{t-1}\mid x_0, w_0, \ldots, w_{s-1}) }
    {R(\rd w_s)\x \ldots \x R(\rd w_{t-1})}\\
\label{PPP}
    = &
    \varphi_{0,s}(w_0, \ldots, w_{s-1}\mid x_0)
    \psi_{s,t}(w_s, \ldots, w_{t-1}\mid x_0, w_0, \ldots, w_{s-1})
\end{align}
for all $x_0 \in \cX$ and $w_0, \ldots, w_{t-1} \in \cW$.
Here, $\psi_{s,t}: \cW^{t-s}\x \cX \x \cW^s\to \mR_+$ is the conditional PDF of the noise sequence $W_{s:t-1}$, given the state-noise sequence
$Y_{0:s-1}$, with respect to the reference measure $R^{t-s}$:
\begin{equation}
\label{psist}
    \psi_{s,t}(w_s, \ldots, w_{t-1}\mid x_0, w_0, \ldots, w_{s-1})
    :=
    \frac
    {\sP_{ W_{s:t-1} \mid Y_{0:s-1}}(\rd w_s\x  \ldots\x\rd w_{t-1}\mid x_0, w_0, \ldots, w_{s-1}) }
    {R(\rd w_s)\x \ldots \x R(\rd w_{t-1})}.
 \end{equation}
 Therefore, substitution of (\ref{PPP}) and (\ref{psist}) into the definition  (\ref{Est}) of the conditional relative entropy $E_{0,t}$ yields
\begin{eqnarray}
\nonumber
    E_{0,t}
    & = &
    \bE\ln\varphi_{0,t}(W_{0:t-1}\mid X_0)\\
\nonumber
    & = &
    \bE\ln(\varphi_{0,s}(W_{0:s-1}\mid X_0)\psi_{s,t}(W_{s:t-1}\mid Y_{0:s-1}))\\
\nonumber
    & = &
    \bE\ln\varphi_{0,s}(W_{0:s-1}\mid X_0)
    +
    \bE\ln\psi_{s,t}(W_{s:t-1}\mid Y_{0:s-1})\\
\label{superadd1}
& = &
    E_{0,s}
    +
    \bD(\sP_{W_{s:t-1}\mid Y_{0:s-1}} \mid R^{t-s}).
\end{eqnarray}
The inequality (\ref{superadd}), which describes the superadditivity of the conditional relative entropy supply,  can now be obtained by combining (\ref{superadd1}) with
\begin{equation}
\label{Jensen}
    \bD
    (
        \sP_{
            W_{s:t-1}
            \mid
            Y_{0:s-1}
        }
        \|
        R^{t-s}
    )
    \>
    \bD
    (
        \sP_{
            W_{s:t-1}
            \mid
            X_s
        }
        \|
        R^{t-s}
    )
=
    E_{s,t}.
\end{equation}
The last inequality follows from the property that $X_s = F_s(Y_{0:s-1})$ depends in a deterministic way on $Y_{0:s-1}$ through a  Borel measurable map, whereby the conditioning on $Y_{0:s-1}$ is finer than that on $X_s$.  For a rigorous proof of the inequality in (\ref{Jensen}), we consider the conditional probability distribution of $Y_{0:s-1}$ given $X_s$:
\begin{equation}
\label{theta}
    \theta_s(B\mid x)
    :=
    \sP_{ Y_{0:s-1} \mid X_s}(B\mid x),
    \qquad
    B\in \fX\x \fW^s,\
        x \in \cX.
\end{equation}
Recall that such posterior distributions of the state-noise sequences were used in Lemma~\ref{lem:rel_ent_balance}.
In view of (\ref{FG}),
for any given $x\in \cX$, the probability measure  $\theta_s(\cdot \mid x)$ on $(\cX\x \cW^s, \fX \x \fW^s)$ is concentrated on the pre-image $F_s^{-1}(x)= \{y \in \cX\x \cW^s:\ F_s(y)=x\}$
of the point $x$ under the $s$ step state transition map $F_s$ in the sense that
$\theta_s(F_s^{-1}(x) \mid x) = 1$.
Then the conditional PDF $\varphi_{s,t}$ from (\ref{Est}) is representable as an appropriate average of the conditional PDF (\ref{psist}) over the probability measure (\ref{theta}) as
\begin{equation}
\label{phipsitheta}
    \varphi_{s,t}(w \mid x)
    =
    \bE(\psi_{s,t}(w \mid Y_{0:s-1}) \mid X_s = x)
    =
    \int_{F_s^{-1}(x)}
    \psi_{s,t}(w\mid y)
    \theta_s(\rd y\mid x)
\end{equation}
for all $w\in \cW^{t-s}$ and $x\in \cX$.
Since the function $\bL$ in (\ref{bL}) is
strictly convex, then (\ref{phipsitheta}) and Jensen's
inequality \cite{Shiryaev} imply that
\begin{equation}
\label{Jensen1}
    \bL(\varphi_{s,t}(w\mid x))
    \<
    \bE(\bL(\psi_{s,t}(w \mid Y_{0:s-1})) \mid X_s = x)
    =
    \int_{F_s^{-1}(x)}
    \bL(\psi_{s,t}(w\mid y))
    \theta_s(\rd y\mid x).
\end{equation}
Moreover, the inequality in (\ref{Jensen1}) becomes an equality if and only if $\psi_{s,t}(w\mid y) = \varphi_{s,t}(w\mid x)$ holds for $\theta_s(\cdot \mid x)$-almost all $y \in F_s^{-1}(x)$.
Hence, the conditional relative entropy
supply $E_{s,t}$ in (\ref{Est})
 satisfies
\begin{align}
\nonumber
    E_{s,t}
& =
    \int_{\cX}
    \Big(
        \int_{\cW^{t-s}}
        \bL(\varphi_{s,t}(w\mid x))
        R^{t-s}(\rd w)
    \Big)
    P_s(\rd x)\\
\nonumber
& \<
    \int_{\cX}
    \Big(
     \int_{\cW^{t-s}}
     \Big(
     \int_{F_s^{-1}(x)}
    \bL(\psi_{s,t}(w\mid y))
    \theta_s(\rd y\mid x)
    \Big)
    R^{t-s}(\rd w)
    \Big)
    P_s(\rd x)\\
\nonumber
& =
    \int_{\cX}
    \Big(
     \int_{F_s^{-1}(x)}
     \Big(
     \int_{\cW^{t-s}}
    \bL(\psi_{s,t}(w\mid y))
    R^{t-s}(\rd w)
    \Big)
    \theta_s(\rd y\mid x)
    \Big)
    P_s(\rd x)\\
\nonumber
& =
     \int_{\cX \x \cW^s}
     \Big(
     \int_{\cW^{t-s}}
    \bL(\psi_{s,t}(w\mid y))
    R^{t-s}(\rd w)
    \Big)
    Q_{0,s}(\rd y)\\
\label{Est1}
& =
    \bD
    (
        \sP_{
            W_{s:t-1}
            \mid
            Y_{0:s-1}
        }
        \|
        R^{t-s}
    ),
\end{align}
which establishes the inequality in (\ref{Jensen}) and completes the proof  of (\ref{superadd}). Turning
to the second part of the lemma, note that
(\ref{superadd}) becomes an equality if and only if  the
inequality in (\ref{Est1}) becomes an equality. By the strict convexity of the function $\bL$ from (\ref{bL}), it follows from (\ref{phipsitheta}) and (\ref{Jensen1})
that the inequality in (\ref{Est1}) becomes an equality if and only
if
\begin{equation}
\label{Markov}
    \psi_{s,t}(w \mid y) = \varphi_{s,t}(w\mid F_s(y))
    \qquad
    {\rm for}\ Q_{0,t}{\rm-almost\ all}\ (y,w) \in (\cX\x \cW^s)\x \cW^{t-s}.
\end{equation}
In view of (\ref{Est}) and (\ref{psist}), the relation
(\ref{Markov}) holds if and only if $\sP_{ W_{s:t-1} \mid
Y_{0:s-1}}$ depends on $Y_{0:s-1}$ only through $X_s =
F_s(Y_{0:s-1})$, which is equivalent to the condition  that the  three random elements $Y_{0:s-1}$, $X_s$,
$W_{s:t-1}$ form a Markov chain.
\hfill\end{proof}

As can be seen from the above proof, Lemma~\ref{lem:superadd} is closely
related to the data processing inequality and convexity of the
relative entropy functional with respect to each of its arguments \cite{CT_2006,Gray_2009}. The second assertion of
Lemma~\ref{lem:superadd}  shows that the conditional relative entropy supply $E_{s,t}$
is \emph{additive} as a function of the time
interval $[s,t)$ (that is,
$
    E_{s,u} = E_{s,t} + E_{t,u}
$ for all $0\< s< t < u$)
 if only if the noise sequence $W$ is Markov with respect to the state sequence
$X$ of the system. The Markov property means that, for \emph{every}  time $k\> 0$, the probability measures $\sP_{W_k  \mid Y_{0:k-1}}(\cdot \mid y)$ and $\sP_{W_k \mid X_k}(\cdot \mid F_k(y))$ on $(\cW, \fW)$ are equal to each other for $Q_{0,k}$-almost all values $y\in \cX \x \cW^k$ of the state-noise sequence $Y_{0:k-1}$. It turns out that the Markov property  is important for noise strategies  to be economical in the sense of the conditional relative entropy supply.

\section{Markov noise strategies decrease the conditional relative entropy supply}
\label{sec:markovization}

We will now introduce a specific change of measure which leads to the Markov property of the noise with respect to the state of the system. More precisely, for any $s> 0$, consider an operator $M_s$  which, for any $t> s$,   maps
the probability distribution $Q_{0,t}$ of the state-noise
sequence $Y_{0:t-1}$, associated with an admissible noise $W$,   to another probability
distribution $\wh{Q}_{0,t} := M_s(Q_{0,t})$ on the same
measurable space $(\cX\x \cW^{t-s}, \fX\x \fW^{t-s})$ as
\begin{equation}
\label{M}
    \wh{Q}_{0,t}(\rd y \x \rd w)
    =
    \varphi_{s,t}(w \mid F_s(y))
    Q_{0,s}(\rd y)
    R^{t-s}(\rd w),
    \qquad
    y \in \cX\x \cW^s,\
    w \in \cW^{t-s}.
\end{equation}
Here,
$\varphi_{s,t}$ is the conditional PDF of the noise sequence $W_{s:t-1}$ given $X_s$ from (\ref{Est}), associated with the original distribution $Q_{0,t}$, and $F_s$ is the $s$ step state transition map from (\ref{FG}) and (\ref{Fnext}). In order to clarify the meaning of (\ref{M}), note that
\begin{equation}
\label{QQQ}
    Q_{0,t}(\rd y \x \rd w)
    =
    \psi_{s,t}(w \mid y)
    Q_{0,s}(\rd y)
    R^{t-s}(\rd w),
\end{equation}
in view of the factorization   (\ref{PPP}) and the definition of the conditional PDF $\psi_{s,t}$ in (\ref{psist}). Direct comparison of (\ref{M}) with (\ref{QQQ}) shows that the
action of the operator $M_s$ on $Q_{0,t}$ leads to the
Markov property of the state-noise sequence $Y_{0:t-1}$ with respect to
the intermediate state $X_s$ by replacing the left-hand side $\psi_{s,t}(w \mid y)$ of (\ref{Markov}) with its right-hand side $\varphi_{s,t}(w \mid F_s(y))$. Therefore, an equivalent representation of $M_s$ in terms of the conditional PDFs from (\ref{Est}) is
\begin{equation}
\label{FFF}
    \wh{\varphi}_{0,t}(w_0, \ldots, w_{t-1}\mid x_0)
    :=
    \varphi_{0,s}(w_0, \ldots, w_{s-1}\mid x_0)
    \varphi_{s,t}(w_s, \ldots, w_{t-1}\mid F_s(x_0, w_0, \ldots, w_{s-1}))
\end{equation}
for all $x_0\in \cX$ and $w_0, \ldots, w_{t-1}\in \cW$,
where $\wh{\varphi}_{0,t}$ corresponds to $\wh{Q}_{0,t}$ in (\ref{M}), whilst $\varphi_{0,s}$ and $\varphi_{s,t}$ are associated with $Q_{0,s}$ and $Q_{s,t}$.
Under the new
measure $\wh{Q}_{0,t}$, the random elements $Y_{0:s-1}$, $X_s$,
$W_{s:t-1}$ form a Markov chain. The operator
$M_s$ is idempotent (that is, $M_s^2:= M_s\circ M_s = M_s$), since those (and only
those) probability distributions $Q_{0,t}$, which are already Markov with
respect to $X_s$, are invariant under $M_s$.

\begin{lemma}
\label{lem:markovization} For any $0<s<t$, the
operator $M_s$, acting on the probability distribution $Q_{0,t}$ in
(\ref{QQQ}) as described by (\ref{M}), leaves the probability
distributions $Q_{0,s}$ and $Q_{s,t}$ and the state distributions
$P_0, \ldots, P_t$ unchanged. The conditional relative entropy
supply $\wh{E}_{0,t}$ on the time interval $[0,t)$, associated
with the new measure $\wh{Q}_{0,t}$, satisfies
\begin{equation}
\label{EEEE}
    \wh{E}_{0,t}
    =
    E_{0,s} + E_{s,t}
    \<
    E_{0,t}.
\end{equation}
The inequality in (\ref{EEEE}) becomes an equality if and only if
the measure $Q_{0,t}$ is invariant under $M_s$, that is, if and only
if the three random elements $Y_{0:s-1}$, $X_s$, $W_{s:t-1}$ form a
Markov chain.
\end{lemma}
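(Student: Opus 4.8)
The plan is to verify the invariance claims first, then read off the entropy identity from the factorization~(\ref{FFF}), and finally hand the inequality and its equality case over to Lemma~\ref{lem:superadd}.

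First I would show that $M_s$ leaves $Q_{0,s}$, $Q_{s,t}$ and $P_0,\ldots,P_t$ unchanged. Invariance of $Q_{0,s}$ is immediate from~(\ref{M}): for each $y$, $\varphi_{s,t}(\cdot\mid F_s(y))$ is a PDF with respect to $R^{t-s}$, so integrating~(\ref{M}) over $w\in\cW^{t-s}$ returns $Q_{0,s}(\rd y)$. For $Q_{s,t}$, I would push the joint law of $(Y_{0:s-1},W_{s:t-1})$ under $\wh{Q}_{0,t}$ forward by $(y,w)\mapsto(F_s(y),w)$; since its $Y_{0:s-1}$-marginal is the unchanged $Q_{0,s}$ and $Q_{0,s}\circ F_s^{-1}=P_s$ by~(\ref{PQ}), the image is $P_s(\rd x)\,\varphi_{s,t}(w\mid x)\,R^{t-s}(\rd w)$, which by the chain rule~(\ref{QPP}) and the definition of $\varphi_{s,t}$ in~(\ref{Est}) is exactly $Q_{s,t}$. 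Invariance of the state distributions is then a corollary of~(\ref{PQ}): $P_0,\ldots,P_s$ are images of (marginals of) $Q_{0,s}$ under $F_0,\ldots,F_s$, while $P_{s+1},\ldots,P_t$ are images of (marginals of) $Q_{s,t}$ under $F_1,\ldots,F_{t-s}$, and all the source measures have just been shown to be preserved.

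Next I would compute $\wh{E}_{0,t}$. By~(\ref{FFF}) the conditional PDF of $W_{0:t-1}$ given $X_0$ under $\wh{Q}_{0,t}$ is $\varphi_{0,s}(W_{0:s-1}\mid X_0)\,\varphi_{s,t}(W_{s:t-1}\mid X_s)$ (using $F_s(X_0,W_{0:s-1})=X_s$), which is a genuine PDF with respect to $R^t$, so the new noise is admissible and $\wh{E}_{0,t}$ is well defined. Taking logarithms and the expectation over $\wh{Q}_{0,t}$ splits $\wh{E}_{0,t}$ into $\wh{\bE}\ln\varphi_{0,s}(W_{0:s-1}\mid X_0)+\wh{\bE}\ln\varphi_{s,t}(W_{s:t-1}\mid X_s)$. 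The first term depends only on the (unchanged) $Q_{0,s}$-marginal, hence equals $E_{0,s}$ by~(\ref{Est}); the second depends only on the (unchanged) $Q_{s,t}$-marginal, and since $\varphi_{s,t}$ is the Radon--Nikodym derivative of $\sP_{W_{s:t-1}\mid X_s}$ with respect to $R^{t-s}$ it remains the actual conditional PDF of the noise given $X_s$ under $\wh{Q}_{0,t}$, so this term equals $E_{s,t}$ by~(\ref{Est}). This yields $\wh{E}_{0,t}=E_{0,s}+E_{s,t}$.

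Finally, the inequality $E_{0,s}+E_{s,t}\<E_{0,t}$ in~(\ref{EEEE}) is exactly the superadditivity bound~(\ref{superadd}) of Lemma~\ref{lem:superadd}, and that lemma gives equality precisely when $Y_{0:s-1},X_s,W_{s:t-1}$ form a Markov chain; comparing~(\ref{M}) with~(\ref{QQQ}), this is in turn the condition $\psi_{s,t}(w\mid y)=\varphi_{s,t}(w\mid F_s(y))$ (in the appropriate a.e.\ sense), i.e.\ $\wh{Q}_{0,t}=Q_{0,t}$, which is also the characterization of the fixed points of $M_s$ noted in the discussion of its idempotence preceding the lemma. I expect the only genuinely fiddly point to be the measure-theoretic bookkeeping in the push-forward arguments of the first step, especially confirming that $\varphi_{s,t}$ is still the conditional PDF of the noise given $X_s$ after applying $M_s$; everything else is a direct consequence of~(\ref{FFF}) and Lemma~\ref{lem:superadd}.
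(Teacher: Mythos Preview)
Your proposal is correct and follows essentially the same approach as the paper: first verify invariance of $Q_{0,s}$ and $Q_{s,t}$ (and hence of all the state distributions via~(\ref{PQ})), then obtain~(\ref{EEEE}) and hand the inequality and its equality case to Lemma~\ref{lem:superadd}. The only cosmetic difference is in how the identity $\wh{E}_{0,t}=E_{0,s}+E_{s,t}$ is obtained: you compute it directly from the factorization~(\ref{FFF}) by splitting the log-expectation over the invariant marginals, whereas the paper first notes that $\wh{Q}_{0,t}$ is Markov at $X_s$ by construction, applies the equality case of Lemma~\ref{lem:superadd} to get $\wh{E}_{0,t}=\wh{E}_{0,s}+\wh{E}_{s,t}$, and then uses the invariance of $Q_{0,s}$ and $Q_{s,t}$ to replace $\wh{E}_{0,s},\wh{E}_{s,t}$ by $E_{0,s},E_{s,t}$; these are two packagings of the same computation.
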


\begin{proof}
Throughout the proof, the probability distributions and other quantities associated with
the new measure $\wh{Q}_{0,t}$ will be marked by the ``hat''
symbol. The property that the operator $M_s$ preserves
the probability distribution of $Y_{0:s-1}$,
\begin{equation}
\label{QQ0s}
    \wh{Q}_{0,s} = Q_{0,s},
\end{equation}
is verified by using an appropriate marginal distribution obtained from $\wh{Q}_{0,t}$
via integrating both sides of  (\ref{M}) over $w\in \cW^{t-s}$, since $\int_{\cW^{t-s}}\varphi_{s,t}(w\mid x) R^{t-s}(\rd w)=1$  for any $x \in \cX$. Hence, the conditional relative entropy
supply $E_{0,s}$ over the time interval $[0,s)$, which is completely specified by $Q_{0,s}$, remains unaffected:
\begin{equation}
\label{EE0s}
    \wh{E}_{0,s} = E_{0,s}.
\end{equation}
Furthermore, (\ref{QQ0s}) implies that the state distributions $P_0, \ldots, P_s$ of
the system up until time $s$  are also preserved:
\begin{equation}
\label{PP0s}
    \wh{P}_k = P_k,
    \qquad
    k =0, \ldots, s.
\end{equation}
It follows from  (\ref{M}) that $M_s$ also
preserves the conditional probability distribution $\sP_{ W_{s:t-1}\mid
X_s}$. Hence, the conditional PDF $\varphi_{s,t}$ from (\ref{Est}) is also preserved: $
    \wh{\varphi}_{s,t} = \varphi_{s,t}
$. This property, combined  with the equality $\wh{P}_s = P_s$ from
(\ref{PP0s}), yields
\begin{align}
\nonumber
    \wh{Q}_{s,t}(\rd x \x \rd w)
& =
    \wh{\varphi}_{s,t}(w\mid x)
    \wh{P}_s(\rd x)
    R^{t-s}(\rd w)\\
\nonumber
& =
    \varphi_{s,t}(w\mid x)
    P_s(\rd x)
    R^{t-s}(\rd w)\\
\label{QQst}
    & =
    Q_{s,t}(\rd x \x \rd w),
    \qquad
    x \in \cX,\
    w \in \cW^{t-s}.
\end{align}
Therefore, the conditional
relative entropy supply over the time interval $[s,t)$ is also
invariant under the action of $M_s$:
\begin{equation}
\label{EEst}
    \wh{E}_{s,t} = E_{s,t}.
\end{equation}
Furthermore, (\ref{QQst}) implies the invariance of the corresponding
state distributions $P_s, \ldots, P_t$ of the system:
\begin{equation}
\label{PPst}
    \wh{P}_k = P_k,
    \qquad
    k=s, \ldots, t.
\end{equation}
Therefore, in view of (\ref{PP0s}) and (\ref{PPst}), all the state distributions $P_0,
\ldots, P_t$ of the system on the time interval $[0,t]$ are
invariant under $M_s$. Since $Y_{0:s-1}$, $X_s$, $W_{s:t-1}$ form
a Markov chain with respect to the new measure $\wh{Q}_{0,t}$, then
a combination of Lemma~\ref{lem:superadd} with (\ref{EE0s}) and
(\ref{EEst}) yields
\begin{equation}
\label{sixE}
    \wh{E}_{0,t}
    =
    \wh{E}_{0,s} + \wh{E}_{s,t}
    =
    E_{0,s} + E_{s,t}
    \<
    E_{0,t},
\end{equation}
which proves (\ref{EEEE}). Using  the second assertion of
Lemma~\ref{lem:superadd}, it follows that the inequality in (\ref{sixE}) is an
equality if and only if $Y_{0:s-1}$, $X_s$, $W_{s:t-1}$ form a
Markov chain under the original measure $Q_{0,t}$. Therefore, to prove the last assertion of Lemma~\ref{lem:markovization}, it remains only to recall the equivalence between the Markov property of the probability measure $Q_{0,t}$ and its
invariance $M_s(Q_{0,t}) = Q_{0,t}$ under the operator $M_s$ defined in (\ref{M}).
\hfill\end{proof}

Using Lemma~\ref{lem:markovization}, it follows that the application of the
operator $M_s$ \emph{strictly decreases} the conditional relative entropy supply
over the time interval $[0,t)$, thereby leading to a more economical
strategy for the noise player,
 unless the original noise strategy is already Markov with respect to $X_s$. From  (\ref{M}), it follows that
the operators $M_s$ and $M_u$ commute  for any  times $s<u$, and their composition
$
    M_s\circ M_u
    =
    M_u\circ M_s
$
leads to the Markov property of the noise $W$ with respect to the
states $X_s$ and $X_u$. More generally, the operator
\begin{equation}
\label{MMM}
    M_{1,\ldots, t-1}
    :=
    M_1
        \circ
        \ldots
        \circ
    M_{t-1}
\end{equation}
leads to the Markov property of the noise $W$ with respect to the intermediate
states $X_1, \ldots, X_{t-1}$.
 The resulting probability distribution $\wh{Q}_{0,t}:=
M_{1,\ldots, t-1}(Q_{0,t})$ of the state-noise sequence $Y_{0:t-1}$, whose conditional PDF
is given by
$$
    \wh{\varphi}_{0,t}(w_0, \ldots, w_{t-1}\mid x_0)
    :=
    \prod_{s=0}^{t-1}
    \varphi_{s,s+1}(w_s\mid F_s(x_0, w_0, \ldots, w_{s-1})),
$$
similarly to (\ref{FFF}),
inherits the distributions $Q_{k,k+1}=\sP_{X_k, W_k}$ from
$Q_{0,t}$ for all $k=0, \ldots, t-1$.  Furthermore, the conditional relative
 entropy supply $\wh{E}_{0,t}$, associated with  $\wh{Q}_{0,t}$, is additive on the time interval $[0,t)$ in the sense
 of the equalities
 $$
    \wh{E}_{s,u}
    =
    \sum_{k=s}^{u-1}
    E_{k,k+1}
    \< E_{s,u},
    \qquad
    0\< s < u \<  t.
 $$
The fact that  the operator $M_{1, \ldots, t-1}: Q_{0,t}\mapsto \wh{Q}_{0,t}$ in (\ref{MMM}) decreases the conditional relative entropy supply, while preserving  the state distributions of the system, implies that a non-Markov noise strategy, which drives the system along a given sequence of state distributions $P_0, \ldots, P_t$, can always be made more economical by replacing $Q_{0,t}$ with the Markov strategy $\wh{Q}_{0,t}$.

\section{Bellman equation for the minimum required conditional relative entropy supply}
\label{sec:Bellman}

 Consider the problem of minimizing
the conditional relative entropy supply $E_{0,t}$ in (\ref{Est})  required to drive
 the system (\ref{system}) from a given initial state distribution $\Phi$
to a given terminal state distribution $\Psi$ over a time interval of
specified length $t
> 0$:
\begin{equation}
\label{Jt}
    J_t(\Phi,\Psi)
    :=
    \inf
    \big\{
        E_{0,t}:\
        P_0=\Phi,\
        \sP_{W_{0:t-1}\mid X_0}\ll R^t,\
        P_t =\Psi
    \big\}.
\end{equation}
Here, both probability measures $\Phi$ and $\Psi$ on $(\cX,\fX)$ are assumed to be
absolutely continuous with respect to the nominal
invariant state distribution $P_*$, and the infimum is taken over those admissible noise strategies $\sP_{W_{0:t-1}\mid X_0}$ in the sense of Definition~\ref{def:good}, under which the state distribution of the system evolves from $P_0=\Phi$ to $P_t = \Psi$.
Variational problems like (\ref{Jt}), which involve entropy and probabilistic boundary conditions, are known as Schr\"{o}dinger bridge problems \cite{Beghi_1996,PT_2010} and are usually treated in the context of reciprocal processes, that is, Markov random fields on the time axis  \cite{Jamison_1974}; see also \cite{Beghi_1994,Blaq_1992,DaiPra_1991,Mikami_1990,VP_2010a} for continuous time formulations.

If the system is initialized at a $P_*$-distributed state $X_0$, then application of a nominal noise with $\sP_{ W \mid X_0 } = R^{\infty}$ (so that $E_{0,t} = 0$) leaves the state distribution unchanged, and hence,
\begin{equation}
\label{rest}
    J_t(P_*,P_*) = 0
\end{equation}
for any time horizon $t>0$. However, if $\Psi\ne P_*$, then $J_t(P_*, \Psi)$  is positive and quantifies the cost for the noise player to drive the system from $P_*$ to $\Psi$ in time $t$. The larger $J_t(P_*, \Psi)$ is, the more robust the system is with respect to the uncertain noise. 
Minimization  of $E_{0,t}$ on the right-hand side of
the dissipation inequality (\ref{dissipation}) under the
constraints $P_0 = \Phi$ and $P_t = \Psi$ yields a lower bound
\begin{equation}
\label{lower_bound}
    J_t(\Phi, \Psi)
    \>
    \max
    (
        \bD_0(\Psi \| P_*)
        -
        \bD_0(\Phi \| P_*),\,
        0
    ),
\end{equation}
which also clarifies the role of the assumptions $\Phi\ll P_*$ and $\Psi\ll P_*$ for the well-posedness of the problem (\ref{Jt}). However, these absolute continuity conditions are, in general, not enough to guarantee finiteness of the quantity $J_t(\Phi,\Psi)$ since the discrete-time system (\ref{system}) may lack reachability with respect to the noise over short time intervals.

\begin{definition}
A terminal state distribution $\Psi\ll P_*$ is said to be
reachable from an  initial state distribution $\Phi\ll P_*$   in time $t>0$ if the minimum required conditional relative entropy supply
$J_t(\Phi, \Psi)$ in (\ref{Jt}) is finite.
\end{definition}

The following theorem shows that the additivity of the conditional relative entropy supply for Markov noise strategies, established in Lemma~\ref{lem:superadd}, plays an important role in determining the minimum required supply in (\ref{Jt}).

\begin{theorem}
\label{th:sum} For any time horizon $t>0$, intermediate time $0<s<t$
and initial and terminal state distributions $\Phi$ and $\Psi$, the
minimum required conditional relative entropy supply (\ref{Jt}) satisfies
\begin{equation}
\label{sum}
    J_t(\Phi, \Psi)
    =
    \inf_{\Theta}
    \left(
        J_s(\Phi,\Theta)
        +
        J_{t-s}(\Theta, \Psi)
    \right),
\end{equation}
where the infimum is taken over all intermediate state distributions
$\Theta$ reachable from $\Phi$ in time $s$ and for which
$\Psi$ is reachable from $\Theta$ in time $t-s$. Furthermore, if the infimum in (\ref{Jt}) is
achieved, then every optimal noise strategy is Markov with respect
to the state of the system.
\end{theorem}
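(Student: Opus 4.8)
The plan is to prove the two assertions of Theorem~\ref{th:sum} separately, using the superadditivity (Lemma~\ref{lem:superadd}) and the Markovization operator $M_s$ (Lemmas~\ref{lem:markovization}) as the main tools. Throughout, write $\Theta$ for a candidate intermediate state distribution $P_s$.

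\textbf{Proof of the inequality $J_t(\Phi,\Psi)\<\inf_\Theta(J_s(\Phi,\Theta)+J_{t-s}(\Theta,\Psi))$.} First I would fix an intermediate distribution $\Theta\ll P_*$ reachable from $\Phi$ in time $s$ and from which $\Psi$ is reachable in time $t-s$, together with $\eps>0$. Pick an admissible noise strategy $\sP_{W_{0:s-1}\mid X_0}$ driving $\Phi$ to $\Theta$ with $E_{0,s}\< J_s(\Phi,\Theta)+\eps$, and an admissible strategy $\sP_{W_{s:t-1}\mid X_s}$ driving $\Theta$ to $\Psi$ with $E_{s,t}\< J_{t-s}(\Theta,\Psi)+\eps$ (note $E_{s,t}$ for a time-homogeneous setup equals a quantity of the form $J_{t-s}$ by shift-invariance, which I would spell out). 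Concatenate these into a noise strategy on $[0,t)$ by declaring $Y_{0:s-1},X_s,W_{s:t-1}$ to form a Markov chain, i.e.\ define $\sP_{W_{s:t-1}\mid Y_{0:s-1}}:=\sP_{W_{s:t-1}\mid X_s}(\cdot\mid F_s(\cdot))$; this is exactly the output of $M_s$. By construction $P_0=\Phi$, $P_t=\Psi$, the strategy is admissible, and by the equality case of Lemma~\ref{lem:superadd} (or directly by the equality in (\ref{EEEE}) of Lemma~\ref{lem:markovization}) the total supply is $E_{0,t}=E_{0,s}+E_{s,t}\< J_s(\Phi,\Theta)+J_{t-s}(\Theta,\Psi)+2\eps$. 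Taking the infimum over admissible strategies, then over $\eps$, then over $\Theta$ gives the desired upper bound on $J_t(\Phi,\Psi)$.

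\textbf{Proof of the reverse inequality.} Let $W$ be any admissible noise on $[0,t)$ with $P_0=\Phi$, $P_t=\Psi$; I want to show $E_{0,t}\>\inf_\Theta(J_s(\Phi,\Theta)+J_{t-s}(\Theta,\Psi))$, which, upon taking the infimum over such $W$, yields $J_t(\Phi,\Psi)\>\inf_\Theta(\cdots)$. Set $\Theta:=P_s$, the state distribution produced by $W$ at time $s$; by the absolute-continuity propagation discussed before Definition~\ref{def:good} (after (\ref{allow})), $\Theta\ll P_*$. The restriction of $W$ to $[0,s)$ is an admissible strategy driving $\Phi$ to $\Theta$, so $E_{0,s}\> J_s(\Phi,\Theta)$; and the conditional law $\sP_{W_{s:t-1}\mid X_s}$ is an admissible strategy driving $\Theta$ to $\Psi$ over a horizon of length $t-s$, so $E_{s,t}\> J_{t-s}(\Theta,\Psi)$ (again using time-homogeneity to identify $E_{s,t}$ with the value function $J_{t-s}$). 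In particular $\Theta$ is reachable from $\Phi$ in time $s$ and $\Psi$ is reachable from $\Theta$ in time $t-s$, so $\Theta$ is admissible in the infimum. Then superadditivity (\ref{superadd}) gives $E_{0,t}\> E_{0,s}+E_{s,t}\> J_s(\Phi,\Theta)+J_{t-s}(\Theta,\Psi)\>\inf_\Theta(\cdots)$, as required. Combining the two directions proves (\ref{sum}).

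\textbf{Proof of the Markov property of optimal strategies.} Suppose the infimum in (\ref{Jt}) is attained by an admissible $W$ with optimal supply $E_{0,t}=J_t(\Phi,\Psi)$. Fix any intermediate time $s\in(0,t)$ and apply $M_s$ to obtain $\wh Q_{0,t}$. By Lemma~\ref{lem:markovization}, $\wh Q_{0,t}$ preserves $P_0=\Phi$ and $P_t=\Psi$ (indeed all of $P_0,\ldots,P_t$), is admissible, and has supply $\wh E_{0,t}=E_{0,s}+E_{s,t}\< E_{0,t}=J_t(\Phi,\Psi)$. By optimality, equality holds, so by the last assertion of Lemma~\ref{lem:markovization} the original $Q_{0,t}$ is already Markov with respect to $X_s$, i.e.\ $Y_{0:s-1},X_s,W_{s:t-1}$ form a Markov chain. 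Since $s\in(0,t)$ was arbitrary, applying this for every $s$ (or invoking the composite operator $M_{1,\ldots,t-1}$ from (\ref{MMM}) together with optimality) shows $W$ is Markov with respect to the state sequence $X$ in the sense described after Lemma~\ref{lem:superadd}, completing the proof.

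\textbf{Anticipated main obstacle.} The routine steps are the concatenation and the two infimum manipulations; the point requiring care is the shift-invariance argument identifying $E_{s,t}$ with the value function $J_{t-s}$ of a problem started at time $0$. Because the system (\ref{system}) and the nominal noise model are time-invariant, the conditional relative entropy $E_{s,t}=\bD(\sP_{W_{s:t-1}\mid X_s}\|R^{t-s})$ depends on $(\sP_{W_{s:t-1}\mid X_s},P_s)$ in exactly the same functional way that $E_{0,t-s}$ depends on $(\sP_{W_{0:t-s-1}\mid X_0},P_0)$; I would state this as a brief lemma or inline remark, noting that the state transition maps $F_k$ in (\ref{FG})--(\ref{Fnext}) are the same regardless of the starting time, so the set of achievable $(P_s,P_t)$ pairs and the associated supplies coincide with those of the horizon-$(t-s)$ problem. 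A second minor subtlety is that in the reverse inequality one must check that $E_{0,s}<\infty$ and $E_{s,t}<\infty$ individually whenever $E_{0,t}<\infty$, which follows from superadditivity (\ref{superadd}) together with nonnegativity of each term.
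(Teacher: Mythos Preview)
Your proposal is correct and follows essentially the same route as the paper: the upper bound via Markov concatenation of $\eps$-suboptimal strategies on $[0,s)$ and $[s,t)$, the lower bound via the superadditivity Lemma~\ref{lem:superadd}, and the Markov property of optimizers via Lemma~\ref{lem:markovization} (the paper applies the composite operator $M_{1,\ldots,t-1}$ directly and argues by contradiction, whereas you first fix a single $s$ and then pass to all $s$, but this is only a cosmetic difference). The paper packages the two inequalities through the auxiliary constrained value $J_{s,t}(\Phi,\Theta,\Psi)$ and shows (\ref{three}) is an equality, while you prove the two-sided inequality for $J_t$ directly; your explicit remark on shift-invariance is a point the paper leaves implicit.
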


\begin{proof}
By using an
intermediate state distribution $\Theta$, it follows that the infimum in (\ref{Jt}) can be represented as
\begin{equation}
\label{auxiliary}
    J_t(\Phi, \Psi)
    =
    \inf_{\Theta}
    J_{s,t}(\Phi,\Theta,\Psi),
\end{equation}
where
\begin{equation}
\label{Jst}
    J_{s,t}(\Phi,\Theta,\Psi)
    :=
    \inf
    \big\{
        E_{0,t}:\
        \sP_{W_{0:t-1}\mid X_0}\ll R^t,\
        P_0=\Phi,\
        P_s=\Theta,\
        P_t =\Psi
    \big\}
\end{equation}
involves an additional constraint $P_s = \Theta$.
Application of the superadditivity (\ref{superadd}) of the conditional
relative entropy supply to
(\ref{Jst}) yields
\begin{equation}
\label{three}
    J_{s,t}(\Phi,\Theta,\Psi)
        \>
    J_s(\Phi, \Theta)
    +
    J_{t-s}(\Theta, \Psi).
\end{equation}
We will now prove that the inequality (\ref{three}) is, in fact, an
equality from which (\ref{sum})  follows immediately in view of
(\ref{auxiliary}). Suppose the probability distributions $Q_{0,s}$ and $Q_{s,t}$ are associated with an admissible noise on the subintervals $[0,s)$ and $[s,t)$ satisfying $P_0 =
\Phi$, $P_s = \Theta$, $P_t = \Psi$. Note that  $Q_{0,s}$ and
$Q_{s,t}$ are compatible since they ascribe to $X_s$ the same
probability distribution $\Theta$. Hence, there exists a probability
distribution $\wh{Q}_{0,t}$ which is Markov with respect to the
intermediate state $X_s$ and leads to the marginal
distributions $Q_{0,s}$ and $Q_{s,t}$ described above. The corresponding conditional PDF $\wh{\varphi}_{0,t}: \cW^t\x \cX \to \mR_+$ from (\ref{Est})  is expressed in terms of $\varphi_{0,s}$ and $\varphi_{s,t}$, associated with $Q_{0,s}$ and $Q_{s,t}$, as described by (\ref{FFF}).
In addition to $P_s=\Theta$, the measure $\wh{Q}_{0,t}$ also satisfies the boundary conditions $P_0=\Phi$ and $P_t=\Psi$ for the state distribution.  By Lemma~\ref{lem:superadd},  the Markov property of $\wh{Q}_{0,t}$ implies that the conditional relative entropy supply satisfies
\begin{equation}
\label{EEE}
    \wh{E}_{0,t} = E_{0,s} + E_{s,t}.
\end{equation}
For any $\eps >0$, each of the measures $Q_{0,s}$
and $Q_{s,t}$ can be chosen so that the corresponding conditional
relative entropy supply is $\eps$-close to its
minimal value in (\ref{Jt}):
\begin{equation}
\label{EE}
   E_{0,s} \< J_s(\Phi,\Theta) + \eps,
    \qquad
    E_{s,t} \< J_{t-s}(\Theta,\Psi) + \eps.
\end{equation}
By combining (\ref{EEE}) and (\ref{EE}), it follows that
$$
    \wh{E}_{0,t}
    \<
    J_s(\Phi,\Theta)
    +
    J_{t-s}(\Theta,\Psi) + 2\eps.
$$
Therefore, by combining the suboptimal  noise strategies
$Q_{0,s}$ and $Q_{s,t}$ into the Markov strategy $\wh{Q}_{0,t}$ as described by
(\ref{FFF}), the total conditional relative entropy supply
$\wh{E}_{0,t}$ can be made arbitrarily close to the right-hand side of
the inequality (\ref{three}). This implies that  (\ref{three}) holds
as an equality, thus proving (\ref{sum}) in view of (\ref{auxiliary}). We now proceed to the proof of the second assertion of the theorem which assumes that the infimum in (\ref{Jt}) is
achievable.  Let $Q_{0,t}$ be an optimal noise strategy which leads to
the minimum conditional relative entropy supply
$E_{0,t}=J_t(\Phi,\Psi)$. Suppose $Q_{0,t}$ is not Markov with
respect to the state signal $X$ of the system on the time interval
$[0,t)$. Then application of the operator
(\ref{MMM}) generates a different measure $\wh{Q}_{0,t}:= M_{1,
\ldots, t-1}(Q_{0,t}) \ne Q_{0,t}$. By
Lemma~\ref{lem:markovization}, $\wh{Q}_{0,t}$ satisfies the
boundary conditions  $P_0=\Phi$ and $P_t = \Psi$ for the state
distribution   and delivers a smaller conditional relative entropy
supply $\wh{E}_{0,t} < E_{0,t}$. The latter, however,
contradicts the optimality of $Q_{0,t}$.  This contradiction
implies the Markov property of $Q_{0,t}$.
\hfill\end{proof}

Special cases of Theorem~\ref{th:sum}, which are obtained by letting $s=1$ or $s=t-1$ in (\ref{sum}),  lead to a dynamic programming Bellman equation \cite[pp. 319--320]{LS_1995} for the minimum conditional relative entropy supply  in
(\ref{Jt}):
\begin{equation}
\label{Bellman}
    J_{t+1}(\Phi, \Psi)
     =
    \inf_{\Theta}
    (
        J_1(\Phi,\Theta)
        +
        J_t(\Theta, \Psi)
    )
     =
    \inf_{\Theta}
    (
        J_t(\Phi,\Theta)
        +
        J_1(\Theta, \Psi)
    ),
\end{equation}
Each of these equalities is a recurrence equation whose right-hand side is an operator acting on the functional $J_t$.
In particular, the minimum supply $J_t(P_*, \Psi)$, required to drive the system from the nominal invariant state distribution $P_*$ to a \emph{different} state distribution $\Psi$ in time $t$, is nonincreasing in $t$. Indeed, (\ref{Bellman}) implies that
$$
    J_{t+1}(P_*, \Psi)
    \<
    J_1(P_*, P_*) +
    J_t(P_*, \Psi)
    =
    J_t(P_*, \Psi)
$$
in view of (\ref{rest}).  Here, $J_t(P_*, \Psi)$ is  analogous to the {\it required supply} in the sense of \cite[Definition~5 on p.~329]{Willems_1972}. A similar monotonicity condition holds for $J_t(\Phi, P_*)$, which quantifies the cost for the noise player  to drive the state distribution of the system from $\Phi$ to $P_*$ in time $t$. Another representation of (\ref{sum}) in a form, known in the Russian optimization literature as the ``Kiev broom", ``walking tube" or ``local variation" method (see, for example, \cite{MIS_1978}), is
\begin{equation}
\label{Kiev}
    J_t(\Phi,\Psi)
    =
    \inf_{P_1, \ldots, P_{t-1}}
    \sum_{k=0}^{t-1}
    J_1(P_k, P_{k+1}),
\end{equation}
where the infimum is taken over appropriately reachable intermediate state distributions $P_1, \ldots, P_{t-1}$, with $P_0 = \Phi$ and $P_t = \Psi$. The sum on the right-hand side of (\ref{Kiev}) is the minimum conditional relative entropy supply over the time interval $[0,t)$ required to drive the system along a  specified sequence of state distributions $P_0, P_1, \ldots, P_{t-1},P_t$.  In this {\it state distribution tracking problem}, any optimal noise strategy  is Markov with respect to the state $X$ of the system. This can be verified by the argument, employed in the proof of Theorem~\ref{th:sum}, that  application of the operator (\ref{MMM})  leads to a more economical Markov noise strategy. In particular,  the minimum conditional relative entropy supply rate per time step, required to maintain the system in a fixed state distribution $\Phi$ (reachable from itself in one step),  is
\begin{equation}
\label{J1}
    t^{-1}
    \inf_{Q_{0,t}:\, P_0 = P_1= \ldots = P_t = \Phi}
    E_{0,t}
    =
    J_1(\Phi,\Phi).
\end{equation}
The fact that (\ref{J1}) holds not only in the infinite-horizon limit $t\to +\infty$ but also for any $t>0$, is closely related to the additivity of the conditional relative entropy supply for Markov noise strategies discussed  in Lemma~\ref{lem:superadd}. The quantity  $J_1(\Phi,\Phi)$ is positive if the state distribution $\Phi$ is not invariant under the nominal noise. In this case,  in order to maintain the system in $\Phi$, the noise player has to persistently deviate from the nominal noise model (\ref{Rinf}). Indeed, any optimal noise strategy in (\ref{J1}) is Markov with respect to the state of the system and is  completely specified by the conditional probability distributions $\sP_{W_k \mid X_k}$, $k=0, \ldots, t-1$. These distributions can be made identical to $\sP_{W_0 \mid X_0}$ which delivers  the minimum value $J_1(\Phi, \Phi)$ in the problem (\ref{Jt}) with $t=1$ and $\Psi=\Phi$. The resulting state sequence $X$ is a homogeneous Markov chain with an invariant measure $\Phi$ and a transition kernel which is different from $G$ in (\ref{G}).
Suppose the loss in system performance, associated with $\Phi$ being different from the nominal invariant state distribution $P_*$,  is quantified by a real-valued functional  $\Xi(P_*, \Phi)$. For example, the \emph{loss functional} $\Xi(P_*, \Phi)$ can describe the undesirable increase in a moment $\bE g(X_k)=\int_{\cX} g(x)\Phi(\rd x)$ of the system variables (specified by a function $g: \cX \to \mR_+$)
over a steady-state distribution $\Phi$ in comparison to the nominal value $\int_{\cX} g(x)P_*(\rd x)$ of this  moment under $P_*$.
Then the nonnegative quantity
\begin{equation}
\label{Z}
    Z(\gamma)
    :=
    \inf_{\Phi:\ \Xi(P_*, \Phi)\> \gamma}
    J_1(\Phi, \Phi)
\end{equation}
is the minimum cost for the noise player (in terms of the conditional relative entropy supply rate) to achieve a given level $\gamma$ of the system performance loss. Therefore, $Z(\gamma)$  can be interpreted as a robustness index for the system:  the larger $Z(\gamma)$ is, the more robust the system is with respect to the uncertain noise. A practically computable version of the \emph{robustness index} $Z(\gamma)$   in (\ref{Z}), associated with the second moments of state variables, will be considered in Section \ref{sec:robust_meas} for a class of linear systems.

\section{Reachability of Gaussian state distributions in linear systems}
\label{sec:state_dist_contr}

We will now specialize the results of the previous sections to linear systems with the state space $\cX:= \mR^n$, input space  $\cW:=\mR^m$, and dynamics
(\ref{system}) described by
\begin{equation}
\label{fAB}
    X_{k+1}
    =
    AX_k + BW_k.
\end{equation}
Here, $A\in \mR^{n\x n}$, $B \in \mR^{n\x m}$ are given matrices, and $A$ is assumed to be asymptotically stable (that is, its spectral radius
satisfies $\rho(A) <1$). Unless specified otherwise, vectors are assumed to be organized as columns.
Also, suppose the nominal
marginal distribution $R$ of the noise is the $m$-dimensional standard
normal distribution with zero mean and identity covariance matrix:
\begin{equation}
\label{GaussR}
    R
    :=
    \cN(0, I_m).
\end{equation}
Then the corresponding nominal invariant state distribution $P_*$ of the linear system (\ref{fAB})
 is
also Gaussian,
\begin{equation}
\label{GaussP*}
    P_*
    =
    \cN(
        0,
        \Gamma
    ),
\end{equation}
where the covariance matrix $\Gamma$ coincides with the infinite-horizon
reachability Gramian of the pair $(A, B)$ and satisfies an
algebraic Lyapunov equation:
\begin{equation}
\label{Gamma}
    \Gamma
    =
    \sum_{k=0}^{+\infty}
    A^k B B^{\rT} (A^k)^{\rT}
    =
    A\Gamma A^{\rT} + BB^{\rT}.
\end{equation}
The following theorem extends the condition for linear system reachability \cite{KS_1972,LS_1995}    from signals to probability distributions and is valid regardless of the asymptotic stability of the matrix $A$.   Let
\begin{equation}
\label{HH}
    \Gamma_t
    :=
    \sum_{k = 0}^{t-1}
    A^k BB^{\rT} (A^k)^{\rT}
    =
    H_t  H_t^{\rT}
\end{equation}
denote the reachability Gramian of the system (\ref{fAB}) for a finite time horizon $t>0$,
where $H_t\in \mR^{n\x mt}$ is an auxiliary matrix defined by
\begin{equation}
\label{H}
    H_t
    :=
    \begin{bmatrix}
        A^{t-1} B  &  A^{t-2} B & \ldots  & AB & B
    \end{bmatrix}.
\end{equation}

\begin{theorem}
\label{th:contr} Suppose the linear system
(\ref{fAB}) is endowed with the nominal marginal distribution (\ref{GaussR}) of the noise, and let
\begin{equation}
\label{GaussPhiPsi}
    \Phi
    :=
    \cN(\alpha, \Sigma),
    \qquad
    \Psi
    :=
    \cN(\beta, \Theta),
\end{equation}
be
any Gaussian distributions
with
covariance matrices $\Sigma\succ 0$ and $\Theta\succ 0$. Then  the state distribution of the system can be driven from $P_0 = \Phi$ to $P_t = \Psi$ by an admissible noise within any given time horizon $t\> n$ if and only if
$(A,B)$ is reachable. Moreover,  this can be carried out by using a Gaussian noise
sequence $W_{0:t-1}$ with the conditional distribution
\begin{equation}
\label{WX0}
    \sP_{
        W_{0:t-1}
        \mid
        X_0
    }
    =
    \cN
    \Big(
             H_t^{\rT}
     \Gamma_t^{-1}
     \big(
        \beta
        - A^t \alpha
        +
        (
            \sqrt{\Theta - \eps \Gamma_t}\,
            \Sigma^{-1/2}
            -A^t
        )
        (X_0-\alpha)
     \big),\
        \eps I_{mt}
    \Big).
\end{equation}
Here, $\Gamma_t$ is the $t$ step reachability Gramian with the associated matrix $H_t$ from (\ref{HH}), (\ref{H}), and $\eps$ is a real parameter satisfying
\begin{equation}
\label{eps}
    0
    <
    \eps
    \<
    1\big/\rho(\Gamma_t \Theta^{-1}).
\end{equation}
\end{theorem}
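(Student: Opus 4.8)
The plan is to verify directly that the Gaussian noise law (\ref{WX0}) is admissible and drives $P_0 = \Phi$ to $P_t = \Psi$, and then to argue the converse via a rank/support obstruction. First I would write the explicit state propagation. From (\ref{fAB}) one has $X_t = A^t X_0 + H_t W_{0:t-1}$ with $H_t$ as in (\ref{H}). Conditioning on $X_0$, the proposed law (\ref{WX0}) makes $W_{0:t-1}$ jointly Gaussian with $X_0$, so $(X_0, X_t)$ is jointly Gaussian; the whole verification reduces to matching the mean of $X_t$ to $\beta$ and the covariance of $X_t$ to $\Theta$, and checking that the \emph{marginal} of $W_{0:t-1}$ (averaging over $X_0 \sim \cN(\alpha,\Sigma)$) is absolutely continuous with respect to $R^t = \cN(0, I_{mt})$. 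Admissibility in the sense of Definition~\ref{def:good} is automatic here: the conditional law (\ref{WX0}) has covariance $\eps I_{mt} \succ 0$ for every value of $X_0$, so each $\sP_{W_{0:t-1}\mid X_0}(\cdot\mid x)$ is a nondegenerate Gaussian on $\mR^{mt}$, hence equivalent to (in particular $\ll$) the standard Gaussian $R^t$; since $\Phi\ll P_*$ automatically for $\Sigma\succ 0$ (both are full-rank Gaussians on $\mR^n$ once $(A,B)$ reachable makes $\Gamma\succ 0$), the conditions of the earlier sections all hold.

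Next I would carry out the mean and covariance bookkeeping. Plugging the conditional mean from (\ref{WX0}) into $X_t = A^t X_0 + H_t W_{0:t-1}$ and using $H_t H_t^{\rT} = \Gamma_t$ (from (\ref{HH})), the term $H_t H_t^{\rT}\Gamma_t^{-1} = I_n$ collapses things: the deterministic $X_0$-independent part becomes $A^t\alpha + (\beta - A^t\alpha) = \beta$, and the coefficient of $(X_0-\alpha)$ becomes $A^t + (\sqrt{\Theta-\eps\Gamma_t}\,\Sigma^{-1/2} - A^t) = \sqrt{\Theta-\eps\Gamma_t}\,\Sigma^{-1/2}$. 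Hence $\bE X_t = \beta$ and, writing $X_0 - \alpha = \Sigma^{1/2}\xi$ with $\xi\sim\cN(0,I_n)$ and letting $\zeta\sim\cN(0,I_{mt})$ be the independent noise driving (\ref{WX0}), one gets $X_t - \beta = \sqrt{\Theta - \eps\Gamma_t}\,\xi + \sqrt{\eps}\,H_t\zeta$, so $\cov(X_t) = (\Theta - \eps\Gamma_t) + \eps H_t H_t^{\rT} = \Theta - \eps\Gamma_t + \eps\Gamma_t = \Theta$. This is exactly $\Psi$. The only subtlety is that $\sqrt{\Theta - \eps\Gamma_t}$ must be a genuine (real, symmetric) square root, i.e. $\Theta - \eps\Gamma_t \succeq 0$; since $\Gamma_t\succ 0$ for $t\ge n$ when $(A,B)$ is reachable, $\Theta - \eps\Gamma_t \succeq 0$ is equivalent to $\eps \le 1/\rho(\Gamma_t^{1/2}\Theta^{-1}\Gamma_t^{1/2}) = 1/\rho(\Gamma_t\Theta^{-1})$, which is precisely the constraint (\ref{eps}); together with $\eps>0$ this gives the stated admissible range, and the nonemptiness of that range (so that such a construction exists at all) is where reachability enters on the ``if'' side.

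For the converse (``only if''), I would argue by contradiction: if $(A,B)$ is not reachable then $\Gamma_t = H_t H_t^{\rT}$ is rank-deficient for every $t$ (the reachability subspace is $A$-invariant of dimension $<n$), so $\im H_t = \cV \subsetneq \mR^n$, and from $X_t = A^t X_0 + H_t W_{0:t-1}$ we see that, conditionally on $X_0 = x$, the vector $X_t$ lives in the affine subspace $A^t x + \cV$. Averaging over any $\Phi$, the distribution $P_t$ is then supported on a set of the form $\bigcup_x (A^t x + \cV)$ and, more to the point, $P_t$ is \emph{not} absolutely continuous with respect to Lebesgue measure on $\mR^n$ unless the $x$-dependence spreads mass transversally to $\cV$ — and it cannot, since $A^t x$ ranges over $A^t(\mathrm{supp}\,\Phi)$ which, even for full-support $\Phi$, only gives $\mR^n$ modulo $\cV$ if $A^t$ is surjective, but that still leaves each fiber concentrated on a proper subspace direction within $\cV$... the cleanest formulation is: $X_t - A^t X_0 \in \cV$ almost surely, a rank constraint no $\cN(\beta,\Theta)$ with $\Theta\succ 0$ can satisfy jointly with a full-rank $\Phi$ unless $\cV = \mR^n$. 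Hence a nondegenerate Gaussian $\Psi$ is unreachable, giving the contradiction. The main obstacle is making this last support/rank argument fully rigorous for \emph{arbitrary} admissible noise (not just Gaussian): one must show that no choice of $\sP_{W_{0:t-1}\mid X_0}$, however cleverly correlated with $X_0$, can produce a $P_t$ of full-dimensional support, which follows because $X_t - A^t X_0$ is a deterministic linear image ($H_t$) of $W_{0:t-1}$ and therefore always confined to $\im H_t$, a property of the map $f$ alone and independent of the noise law.
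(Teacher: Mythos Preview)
Your sufficiency argument is correct and is essentially the same computation as the paper's: the paper parametrizes the jointly Gaussian law of $(X_0,W_{0:t-1})$ by the matrices $K=\cov(W_{0:t-1},X_0)\Sigma^{-1}$ and $L=\cov(W_{0:t-1}\mid X_0)$, writes the covariance constraint $\Theta=(A^t+H_tK)\Sigma(A^t+H_tK)^{\rT}+H_tLH_t^{\rT}$, and then verifies that the choice $K=H_t^{\rT}\Gamma_t^{-1}(\sqrt{\Theta-\eps\Gamma_t}\,\Sigma^{-1/2}-A^t)$, $L=\eps I_{mt}$ works. Your direct plug-and-collapse via $H_tH_t^{\rT}\Gamma_t^{-1}=I_n$ is the same calculation in slightly different packaging.

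Your necessity argument, however, has a genuine gap. The constraint $X_t-A^tX_0\in\cV:=\im H_t$ restricts the \emph{joint} law of $(X_0,X_t)$ to a proper affine subspace of $\mR^{2n}$, but this does \emph{not} by itself force either marginal to be degenerate. Concretely, take $n=2$, $A=I_2$, $\cV=\mathrm{span}\{e_1\}$, $X_0\sim\cN(0,I_2)$, and a noise making $H_tW_{0:t-1}$ deterministic; then $X_t=X_0+c\,e_1\sim\cN(c\,e_1,I_2)$ is a perfectly nondegenerate Gaussian, so your claim that ``no $\cN(\beta,\Theta)$ with $\Theta\succ 0$ can satisfy'' the rank constraint is false. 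What is true is that not \emph{every} such $\Psi$ is reachable, and one must exhibit a specific unreachable target. The paper does this at the level of first moments, which is much simpler: from $\bE X_t=A^t\alpha+H_t\bE W_{0:t-1}$ one sees $\bE X_t-A^t\alpha\in\im H_t$, so choosing $\alpha=0$ and any $\beta\notin\im H_t$ gives an unreachable $\Psi$, for \emph{any} admissible noise law. Your covariance/support route can be repaired (project onto $\cV^{\perp}$ and use $A$-invariance of $\cV$ to see that the $\cV^{\perp}$-component of $X_t$ is determined by that of $X_0$, hence its variance is fixed by $\Phi$), but the mean obstruction is both shorter and already sufficient.
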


\begin{proof}
For the linear dynamics (\ref{fAB}) being considered, the $t$ step state transition map takes the form
\begin{equation}
 \label{XY}
    X_t
     =
    A^tX_0
    +
    \sum_{s=0}^{t-1}
    A^{t-s-1}B W_s
     =
     A^t X_0 + H_t W_{0:t-1}
     =
    F_t Y_{0:t-1}.
\end{equation}
Here, the state-noise sequence $Y_{0:t-1}$ and the noise sequence $W_{0:t-1}$, defined by (\ref{Yst}) and (\ref{Wst}), are organized as the vectors
\begin{equation}
\label{GaussYW}
    Y_{0:t-1}
    =
    \begin{bmatrix}
        X_0\\
        W_{0:t-1}
    \end{bmatrix},
    \qquad
    W_{0:t-1}
    =
    \begin{bmatrix}
        W_0\\
        \vdots\\
        W_{t-1}
    \end{bmatrix}
\end{equation}
of dimensions $n+mt$ and $mt$, respectively.
Accordingly, the matrix $F_t$, which  describes the linear state transition map $Y_{0:t-1}\mapsto X_t$ in (\ref{XY}), is associated with $H_t$ from (\ref{H}) by
\begin{equation}
\label{FH}
    F_t
    :=
    \begin{bmatrix}
        A^t &  H_t
    \end{bmatrix}
    =
    \begin{bmatrix}
        A^t &  A^{t-1} B & \ldots  & AB & B
    \end{bmatrix}.
\end{equation}
The linearity of the system (\ref{fAB}) allows the first two moments
of $X_t$ to be related to those of the state-noise
sequence $Y_{0:t-1}$ as
\begin{align}
\label{mean}
    \bE X_t
     & =
     F_t \bE Y_{0:t-1}
      =
     A^t \bE X_0 + H_t \bE W_{0:t-1},\\
\nonumber
     \cov(X_t)
     & =
     F_t \cov(Y_{0:t-1}) F_t^{\rT}\\
\label{cov}
     & =
    \cov(A^t X_0 + H_t \bE(W_{0:t-1}\mid X_0))
     +
    H_t \bE \cov(W_{0:t-1}\mid X_0) H_t^{\rT}.
\end{align}
Here, $\cov(\xi, \eta):= \bE(\xi\eta^{\rT}) - \bE\xi\bE\eta^{\rT}$ denotes the covariance matrix of square integrable random vectors $\xi$ and $\eta$, with $\cov(\xi):= \cov(\xi, \xi)$, and $\cov(\xi \mid \zeta):= \bE(\xi\xi^{\rT}\mid \zeta) - \bE(\xi\mid \zeta) \bE(\xi\mid \zeta)^{\rT}$ is the conditional covariance matrix of $\xi$ given another  random vector $\zeta$. Also, use is made of the ``total covariance" identity
$
    \cov(\xi)
    =
    \cov (\bE(\xi\mid \zeta))
    +
    \bE \cov(\xi\mid \zeta)
$; cf. \cite[Remark~4 on p.~214; Problem~2 on p.~83]
{Shiryaev}. Now, suppose the time horizon $t$ is fixed and satisfies $t\> n$, being otherwise arbitrary. We will construct an admissible  noise sequence $W_{0:t-1}$, which is jointly Gaussian with $X_0$ and drives
the system (\ref{fAB}) between the Gaussian state distributions $P_0=\Phi$ and $P_t=\Psi$ in (\ref{GaussPhiPsi}) with arbitrary mean values $\alpha$, $\beta$ and nonsingular
covariance matrices $\Sigma$, $\Theta$. Since $t\> n$,  the
reachability of $(A,B)$   is equivalent
to the positive definiteness of $\Gamma_t$,  the reachability Gramian in (\ref{HH}), which is equivalent to the matrix $H_t$ in (\ref{H}) being of full row rank.
By substituting the initial and terminal state mean values $\bE X_0:= \alpha$ and $\bE X_t := \beta$ into    (\ref{mean}), it follows that the noise sequence $W_{0:t-1}$ must satisfy
\begin{equation}
\label{EW}
    \beta
       =
     A^t \alpha + H_t \bE W_{0:t-1}.
\end{equation}
This equality can be fulfilled, for example, by using the following
particular mean values for the noise sequence
\begin{equation}
\label{EWmin}
    \bE W_{0:t-1}
     :=
     H_t^{\rT}
     \Gamma_t^{-1}
     (\beta - A^t \alpha ).
\end{equation}
The relation (\ref{EW}), which does not suppose the distribution  of $Y_{0:t-1}$ to be Gaussian,  shows that at the level of first moments,
the reachability of $(A,B)$ is not only sufficient but is also
necessary for the reachability of state distributions.
 Indeed, if $(A,B)$ is not reachable, then the image $\im H_t := \{H_t w:\ w \in \mR^{mt}\}$ of $\mR^{mt}$ under the linear map specified by the matrix $H_t$ is a proper
subspace of the system state space $\mR^n$. In this case, (\ref{EW}) can
not be satisfied if, for example, $\alpha = 0$ and $\beta
\not\in \im H_t$, thus proving the necessity.  We will now consider the second moments. By substituting the initial and terminal state covariance matrices $\cov(X_0) := \Sigma$ and $\cov(X_t):=\Theta$ from (\ref{GaussPhiPsi}) into (\ref{cov}), it follows that the Gaussian noise sequence $W_{0:t-1}$ being constructed must also satisfy
\begin{equation}
\label{Pit}
    \Theta
    =
    (A^t + H_tK)
    \Sigma
    (A^t + H_tK)^{\rT}
    +
    H_t L H_t^{\rT},
\end{equation}
where the matrices
\begin{equation}
\label{KL}
    K := \cov(W_{0:t-1}, X_0) \Sigma^{-1},
    \qquad
    L:=\cov(W_{0:t-1}) - \cov(W_{0:t-1}, X_0) \Sigma^{-1} \cov(X_0, W_{0:t-1}),
\end{equation}
together with $\Sigma$, parametrize the covariance matrix of the state-noise sequence $Y_{0:t-1}$ computed in  accordance with (\ref{GaussYW}) as
\begin{equation}
\label{covY}
    \cov(Y_{0:t-1})
    =
    \begin{bmatrix}
        \Sigma & \Sigma K^{\rT}\\
        K \Sigma & K \Sigma K^{\rT} + L
    \end{bmatrix}.
\end{equation}
Since $X_0$  and $W_{0:t-1}$ are jointly Gaussian by construction, the matrix $L$ in (\ref{KL}) coincides with the conditional covariance matrix $\cov(W_{0:t-1}\mid X_0)$ which  does not depend on the conditioning random vector $X_0$ in the Gaussian case. For a Gaussian state-noise sequence $Y_{0:t-1}$,  the admissibility of the noise, that is,  the conditional
absolute continuity of $\sP_{ W_{0:t-1} \mid X_0}$ in the sense
of (\ref{cond_abs_cont}), is equivalent to $L\succ 0$. The covariance condition (\ref{Pit}) is satisfied,
for example, if the matrices (\ref{KL}) are chosen as
\begin{equation}
\label{KLchoice}
    K =
    H_t^{\rT} \Gamma_t^{-1}
    \big(
        \sqrt{\Theta - \eps \Gamma_t}\,
        \Sigma^{-1/2}
        -A^t
    \big),
    \qquad
    L
    =
    \eps I_{mt}.
\end{equation}
Here, $\eps$ is a positive parameter small enough  to ensure the positive
semi-definiteness of $\Theta - \eps \Gamma_t$ for the real matrix
square root to be well-defined, which is equivalent to (\ref{eps}).  Thus,
a Gaussian noise sequence $W_{0:t-1}$ with the conditional distribution
\begin{equation}
\label{WX}
    \sP_{ W_{0:t-1} \mid X_0}
    =
    \cN(\bE(W_{0:t-1} \mid X_0), L),
    \qquad
    \bE(W_{0:t-1} \mid X_0)
    =
    \bE W_{0:t-1} + K(X_0 - \alpha )
\end{equation}
whose parameters are
given by (\ref{EWmin}) and (\ref{KLchoice}),
indeed drives the state distribution of the system from $P_0 = \Phi$ to $P_t = \Psi$ as specified in (\ref{GaussPhiPsi}). Here,  use is made of well-known results on conditional distributions for jointly Gaussian random vectors \cite{LS_2001,Shiryaev}.
Now, (\ref{WX0}) is obtained by substitution of (\ref{EWmin}) and (\ref{KLchoice}) into (\ref{WX}).
\hfill\end{proof}

\paragraph{\bf Remark} It follows from  Theorem~\ref{th:contr}, that the noise player can drive the linear system (\ref{fAB}) with a reachable pair $(A,B)$
between arbitrary nonsingular Gaussian state distributions by using Gaussian noise
sequences, provided the time horizon $t$ is not smaller than the state dimension $n$. The latter condition can be relaxed to $t\> \tau$, where
\begin{equation}
\label{tau}
    \tau := \min \big\{t> 0:\ \Gamma_t \succ 0\big\}
\end{equation}
is the first time when the matrix $H_t$ in (\ref{H}) acquires full row rank.
For example, if $n\< m$ and ${\rm rank}B = n$, then $\tau = 1$. $\blacktriangle$

Although the specific choice of a noise sequence which was made  in
the proof of Theorem~\ref{th:contr} is not unique, it turns out that the class of Gaussian noise strategies is large enough to contain
an optimal strategy for the problem (\ref{Jt}) with Gaussian boundary conditions $\Phi$ and $\Psi$, so that more general (non-Gaussian) noise strategies are not superior in this case.

\section{Computing the minimum conditional relative entropy supply for linear systems}
\label{sec:min_req_supp}

The significance of Gaussian
noise sequences for minimizing
the conditional relative entropy supply in  the case of linear dynamics (\ref{fAB}) is clarified by the following lemma. This lemma, which is provided for the sake of completeness,  is an adaptation to the  present case of the well-known results, which are closely related to the maximum entropy principle \cite{CT_2006,ME_1981}; see also, \cite[Lemma~4 on
pp.~313--314]{Petersen_2006}.

\begin{lemma}
\label{lem:Gauss} Suppose $\xi$ is a square integrable $\mR^r$-valued
random vector with an absolutely continuous probability distribution.
Then its relative entropy (\ref{bD0}) with respect to the $r$-dimensional Gaussian distribution $\cN(a, C)$
 with mean $a\in \mR^r$ and covariance matrix $C\succ 0$
satisfies
\begin{equation}
\label{Gauss}
    \bD_0(\sP_{\xi}\| \cN(a, C))
    \>
    \frac{1}{2}
    \Big(
         \|\bE \xi - a\|_{C^{-1}}^2
          +
          \underbrace{
         \Tr \chi
         -
         \ln\det\chi
         -
         r}_{\rm ``covariance"\ part}
    \Big),
    \qquad
    \chi:=
    C^{-1}
    \cov(\xi),
\end{equation}
where $\|v\|_M := \sqrt{v^{\rT}
M v}$ denotes  the Euclidean norm generated by a real positive definite
symmetric matrix $M$.  Furthermore,
the inequality (\ref{Gauss}) becomes an equality if and only if the distribution
$\sP_{\xi}$ is Gaussian.
\end{lemma}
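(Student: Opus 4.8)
The plan is to reduce the claimed bound to the elementary relative entropy between two Gaussian laws by inserting, as an auxiliary reference measure, the Gaussian distribution with the same first two moments as $\xi$, and then invoking the fact that the relative entropy splits additively through this intermediate measure.

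First I would set $\mu := \bE\xi$ and $\Sigma := \cov(\xi)$ and observe that the absolute continuity of $\sP_\xi$ forces $\Sigma\succ 0$: if $v^{\rT}\Sigma v=0$ for some $v\ne 0$, then $v^{\rT}\xi = v^{\rT}\mu$ almost surely, so $\sP_\xi$ would be concentrated on an affine hyperplane of $\mR^r$ and could not be absolutely continuous with respect to the Lebesgue measure. (This is in fact needed already for the right-hand side of (\ref{Gauss}) to make sense.) Consequently $\chi = C^{-1}\Sigma$ is similar to the positive definite matrix $C^{-1/2}\Sigma C^{-1/2}$, so $\det\chi>0$, the ``covariance'' part $\Tr\chi - \ln\det\chi - r$ is finite, and, writing $\lambda_1,\ldots,\lambda_r>0$ for the eigenvalues of $\chi$, it equals $\sum_{j=1}^r(\lambda_j - \ln\lambda_j - 1)\> 0$, which is a consistency check on (\ref{Gauss}). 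Since $C\succ 0$ and $\Sigma\succ 0$, both $\cN(a,C)$ and $\cN(\mu,\Sigma)$ have strictly positive Lebesgue densities $g$ and $g_0$; let $p$ denote the Lebesgue density of $\sP_\xi$.

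Next, because $\ln(g_0/g)$ is a polynomial of degree at most two in its argument and $\xi$ is square integrable, the integral $\int_{\mR^r} p\,\ln(g_0/g)$ is absolutely convergent, and, being the expectation of a quadratic function of $\xi$, it depends on $\sP_\xi$ only through $\mu$ and $\Sigma$; hence it coincides with $\int_{\mR^r} g_0\,\ln(g_0/g) = \bD_0(\cN(\mu,\Sigma)\,\|\,\cN(a,C))$. Using the pointwise ($\sP_\xi$-a.e.) identity $\ln(p/g) = \ln(p/g_0) + \ln(g_0/g)$ together with the fact that the negative parts of $p\ln(p/g)$ and $p\ln(p/g_0)$ are bounded (by $g/e$ and $g_0/e$, since $x\ln x\>-1/e$), term-by-term integration is legitimate and yields
\[
    \bD_0(\sP_\xi\,\|\,\cN(a,C))
    =
    \bD_0(\sP_\xi\,\|\,\cN(\mu,\Sigma))
    +
    \bD_0(\cN(\mu,\Sigma)\,\|\,\cN(a,C)),
\]
both sides possibly equal to $+\infty$. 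A routine computation of the Gaussian--Gaussian relative entropy gives
\[
    \bD_0(\cN(\mu,\Sigma)\,\|\,\cN(a,C))
    =
    \frac{1}{2}\big(\|\mu-a\|_{C^{-1}}^2 + \Tr\chi - \ln\det\chi - r\big),
\]
which is precisely the right-hand side of (\ref{Gauss}); since $\bD_0(\sP_\xi\,\|\,\cN(\mu,\Sigma))\> 0$, the bound (\ref{Gauss}) follows. By the strict positivity of relative entropy recalled after (\ref{bD0}) (it vanishes only on equal measures), equality in (\ref{Gauss}) holds if and only if $\bD_0(\sP_\xi\,\|\,\cN(\mu,\Sigma)) = 0$, i.e.\ $\sP_\xi = \cN(\mu,\Sigma)$ is Gaussian.

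I do not expect a genuine obstacle here. The only two points requiring care are the bookkeeping needed to avoid an $\infty-\infty$ ambiguity in the additive decomposition — which is handled by the square-integrability hypothesis, making the cross term finite and justifying term-by-term integration — and the observation that absolute continuity forces $\Sigma\succ 0$, without which $\chi$, $\ln\det\chi$ and the density $g_0$ would not be well defined.
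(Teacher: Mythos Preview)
Your argument is correct. It is the standard information-projection (``Pythagorean'') proof: insert the moment-matched Gaussian $\cN(\mu,\Sigma)$, use that $\ln(g_0/g)$ is quadratic so its $\sP_\xi$-expectation depends only on $(\mu,\Sigma)$, obtain
\[
\bD_0(\sP_\xi\,\|\,\cN(a,C))
=
\bD_0(\sP_\xi\,\|\,\cN(\mu,\Sigma))
+
\bD_0(\cN(\mu,\Sigma)\,\|\,\cN(a,C)),
\]
and then read off (\ref{Gauss}) and the equality case from $\bD_0\geqslant 0$. Your handling of the two technical points --- that absolute continuity forces $\Sigma\succ 0$, and that square integrability makes the cross term finite so no $\infty-\infty$ arises --- is appropriate.

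As for the comparison: the paper does not actually give a proof of this lemma. It states the result ``for the sake of completeness'', remarks that it is closely related to the maximum entropy principle, and points to \cite[Lemma~4]{Petersen_2006}. Your Pythagorean decomposition is precisely the mechanism behind that principle (the Gaussian with matching first two moments is the $I$-projection of $\sP_\xi$ onto the exponential family $\{\cN(a,C):a,C\}$), so your route is the natural one and is fully in line with what the paper invokes.
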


The nonnegativeness of the covariance part of the right-hand side of
(\ref{Gauss}) follows directly from the positive definiteness of $C$ and $\cov(\xi)$,
whereby the eigenvalues $\lambda_1, \ldots, \lambda_r$ of the matrix $\chi$ are
all real and positive  \cite[Theorem~7.6.3 on p.~465]{HJ_2007}:
\begin{equation}
\label{covpartpos}
    \Tr\chi
    -
    \ln\det\chi
    -r
    =
    \sum_{k=1}^{r} (\lambda_k - \ln \lambda_k -1)
    \>
    0.
\end{equation}
%
%
This quantity vanishes if and only if  $C = \cov(\xi)$, since
$\min_{\lambda > 0} (\lambda-\ln \lambda) = 1$ is achieved only at
$\lambda = 1$. The following theorem provides a solution to the optimization problem (\ref{Jt}) with Gaussian boundary conditions.

\begin{theorem}
\label{th:min_req_supply} Suppose  the linear system (\ref{fAB}) has a reachable pair $(A,B)$, and the matrix $A$ is asymptotically stable. Then for any time
horizon $t\> n$ and any initial and terminal Gaussian state
distributions $\Phi$ and $\Psi$ in (\ref{GaussPhiPsi}) with
nonsingular covariance matrices, the minimum required conditional
relative entropy supply (\ref{Jt}) is computed as
\begin{equation}
\label{JJJ}
    J_t(\Phi, \Psi)
    =
    \big(
        \|
            \beta - A^t \alpha
        \|_{\Gamma_t^{-1}}^2
        +
        \Tr
        (
            U + V-\sqrt{I_n + 4UV}
        )
        -\ln\det \mho
    \big)\big/ 2.
\end{equation}
Here,
\begin{equation}
\label{UV}
    U
    :=
    \Gamma_t^{-1/2} A^t \Sigma (A^t)^{\rT} \Gamma_t^{-1/2},
    \qquad
    V
    :=
    \Gamma_t^{-1/2} \Theta \Gamma_t^{-1/2}
\end{equation}
are real positive semi-definite symmetric matrices (with $V\succ 0$)  defined using (\ref{HH}), and  $\mho$ is a real
positive definite symmetric matrix of order $n$ satisfying the
algebraic Riccati equation
\begin{equation}
\label{RicSigma}
    \mho + \mho U \mho = V.
\end{equation}
\end{theorem}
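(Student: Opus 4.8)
The plan is to first reduce the infimum in (\ref{Jt}) to jointly Gaussian noise strategies, and then to evaluate the resulting finite-dimensional optimization in closed form, identifying the optimal covariance via the Riccati equation (\ref{RicSigma}). By the chain rule (\ref{chain1}), $\bD_0(Q_{0,t}\|P_*\x R^t) = \bD_0(\Phi\|P_*) + E_{0,t}$, so, since $P_0 = \Phi$ is fixed, minimizing $E_{0,t}$ over admissible noises with $P_t = \Psi$ amounts to minimizing $\bD_0(Q_{0,t}\|P_*\x R^t)$ over joint laws $Q_{0,t}$ of $Y_{0:t-1}$ with prescribed $X_0$-marginal $\Phi$, prescribed push-forward $\Psi$ under the linear map $F_t$ of (\ref{FH}), and $Q_{0,t}\ll P_*\x R^t$. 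Since the reference measure $P_*\x R^t$ and the two constraint laws $\Phi,\Psi$ are all Gaussian, and both constraints fix only images of $Q_{0,t}$ under linear maps, replacing a feasible $Q_{0,t}$ by the Gaussian $\wt Q$ with the same mean and covariance preserves feasibility and, by Lemma~\ref{lem:Gauss}, does not increase $\bD_0(\cdot\|P_*\x R^t)$; one checks that admissibility forces $\cov(Y_{0:t-1})\succ 0$, so $\wt Q\ll P_*\x R^t$ and the noise remains admissible. Hence it suffices to minimize over jointly Gaussian $(X_0,W_{0:t-1})$, parametrized as in (\ref{covY}) by $\bE W_{0:t-1}$, $K$ and $L\succ 0$, subject to the mean constraint (\ref{EW}) and the covariance constraint (\ref{Pit}).

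For a Gaussian strategy, (\ref{bD0}) yields $2E_{0,t} = \|\bE W_{0:t-1}\|^2 + \Tr(K\Sigma K^{\rT}) + \Tr L - \ln\det L - mt$, so the mean and covariance parts decouple. The mean part is minimized by the least-norm solution (\ref{EWmin}) of (\ref{EW}), contributing $\|\beta - A^t\alpha\|_{\Gamma_t^{-1}}^2$. For the covariance part, set $N := A^t + H_t K$; for fixed $N$ the least-norm $K$ with $H_t K = N - A^t$ is $K = H_t^{\rT}\Gamma_t^{-1}(N - A^t)$, giving $\Tr(K\Sigma K^{\rT}) = \Tr(\Gamma_t^{-1}(N - A^t)\Sigma(N - A^t)^{\rT})$, while (\ref{Pit}) becomes $H_t L H_t^{\rT} = S$ with $S := \Theta - N\Sigma N^{\rT}$. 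Decomposing $\mR^{mt} = \im H_t^{\rT}\oplus\ker H_t$ shows that, over all $L\succ 0$ with $H_t L H_t^{\rT} = S$, the quantity $\Tr L - \ln\det L$ is minimized when $L$ is block-diagonal in this splitting with the $\ker H_t$-block equal to the identity, the minimum being $\Tr(\Gamma_t^{-1}S) - \ln\det(\Gamma_t^{-1}S) + (mt - n)$. The residual objective in $N$ is convex (since $N\mapsto S$ is concave and $-\ln\det$ is convex and non-increasing on the positive definite cone), the quadratic-in-$N$ contributions cancel between the two traces, and the stationarity condition collapses to $N = S\Gamma_t^{-1}A^t$; nonemptiness of the feasible set (Theorem~\ref{th:contr}) together with convexity guarantees an optimizer exists with $S\succ 0$. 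Substituting $N = S\Gamma_t^{-1}A^t$ into $S = \Theta - N\Sigma N^{\rT}$ and conjugating by $\Gamma_t^{-1/2}$, with $\mho := \Gamma_t^{-1/2}S\Gamma_t^{-1/2}$ and $U,V$ as in (\ref{UV}), produces exactly the algebraic Riccati equation (\ref{RicSigma}) and identifies $\mho\succ 0$.

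Collecting the optimal values and using $\mho^{-1}V = I_n + U\mho$ (from (\ref{RicSigma})) to rewrite the cross term, one obtains $2J_t(\Phi,\Psi) = \|\beta - A^t\alpha\|_{\Gamma_t^{-1}}^2 + \Tr U + \Tr V - 2\Tr(U\mho) - n - \ln\det\mho$. Comparison with (\ref{JJJ}) then reduces to the matrix identity $\Tr\sqrt{I_n + 4UV} = n + 2\Tr(U\mho)$, which follows from
\[
    (I_n + 2U\mho)^2 = I_n + 4U\mho + 4U(\mho U\mho) = I_n + 4U\mho + 4U(V - \mho) = I_n + 4UV,
\]
together with the observation that $U\mho$ is similar to $U^{1/2}\mho U^{1/2}\succeq 0$, so $I_n + 2U\mho$ has all eigenvalues $\> 1$ and is therefore the principal square root of $I_n + 4UV$; taking traces gives the claim.

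The bulk of the work lies in the covariance step: the block-decomposition argument that isolates the $n\times n$ ``effective'' part of $L$ and sets the remainder to the identity, and — above all — recognising that the stationarity condition for $N$ is the Riccati equation (\ref{RicSigma}) and that its solution packages into $\sqrt{I_n + 4UV}$ via the identity $(I_n + 2U\mho)^2 = I_n + 4UV$; this last identity is what converts the variational problem into the stated closed form. A secondary technical point is justifying that the infimum is attained with $S\succ 0$ (hence $\ln\det\mho$ finite and $\mho$ positive definite), which rests on the convexity of the reduced problem in the covariance matrix and on the reachability guaranteed by Theorem~\ref{th:contr}.
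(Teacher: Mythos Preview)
Your proof is correct and reaches the same conclusion as the paper, but the route through the covariance optimization is genuinely different. The paper treats the constrained minimization over $(K,L)$ via a matrix Lagrange multiplier $N$ for the constraint (\ref{Pit}), obtains $L=(I_{mt}-H_t^{\rT}NH_t)^{-1}$ and $K=LH_t^{\rT}NA^t$ from stationarity, and then recovers $S=H_tLH_t^{\rT}=(\Gamma_t^{-1}-N)^{-1}$ through the matrix inversion lemma; the Riccati equation emerges when the constraint is re-expressed in $S$. You instead reparametrize by the effective one-step map $N=A^t+H_tK$, optimize $K$ (least norm) and $L$ (block-diagonal in $\im H_t^{\rT}\oplus\ker H_t$) separately for fixed $N$, and then get the Riccati equation directly from the first-order condition in $N$. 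Your nested elimination is arguably more transparent, and it makes the cancellation of the quadratic-in-$N$ trace terms explicit. The second, more notable, difference is how you obtain the square root: the paper forward-references Lemma~\ref{lem:Riccati} (the closed-form Riccati solution) to deduce $2U\mho=\sqrt{I_n+4UV}-I_n$, whereas you derive $(I_n+2U\mho)^2=I_n+4UV$ straight from $\mho U\mho=V-\mho$, which is self-contained and avoids solving the Riccati equation explicitly.

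One small correction: the similarity ``$U\mho$ is similar to $U^{1/2}\mho U^{1/2}$'' requires $U\succ 0$, which need not hold here (only $U\succcurlyeq 0$). Use $\mho^{1/2}$ instead: $U\mho$ is similar to $\mho^{1/2}U\mho^{1/2}\succcurlyeq 0$ since $\mho\succ 0$, which gives the same conclusion that $I_n+2U\mho$ has spectrum in $[1,\infty)$ and is therefore the principal square root.
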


\begin{proof}
Suppose the system under consideration is initialized at the state distribution $P_0 = \Phi$. Then, in view of  (\ref{chain1}),  the conditional
relative entropy supply (\ref{Est}) over the time interval
$[0,t)$ takes the form
\begin{equation}
\label{Gauss_supply}
    E_{0,t}
    =
    \bD_0(Q_{0,t}\| P_*\x R^t) - \bD_0(\Phi\| P_*),
\end{equation}
where, as before, $Q_{0,t}$ is the probability distribution of the state-noise sequence
$Y_{0:t-1}$. In order to ensure the terminal condition $P_t = \Psi$,  the moments
  $\bE Y_{0:t-1}$ and $\cov(Y_{0:t-1})$  must satisfy (\ref{mean}) and (\ref{cov}).
In view of  (\ref{GaussR}) and (\ref{GaussP*}), the probability  measure $P_* \x R^t = \cN\Big(0,{\scriptsize\begin{bmatrix}\Gamma & 0\\ 0 & I_{mt}\end{bmatrix}}\Big)$ is a Gaussian distribution in $\mR^{n+mt}$ whose covariance matrix is nonsingular by the reachability of $(A,B)$.  Hence,
Lemma~\ref{lem:Gauss} implies that the minimum of $E_{0,t}$ in  (\ref{Gauss_supply}) with respect to $Q_{0,t}$ with fixed
$\bE Y_{0:t-1}$ and $\cov(Y_{0:t-1})$ is achieved at the Gaussian
distribution $\cN(\bE Y_{0:t-1}, \cov(Y_{0:t-1}))$. Also, by
Theorem~\ref{th:contr}, for Gaussian initial and
terminal state distributions (\ref{GaussPhiPsi}),  there exist
Gaussian noise sequences which drive the system from $P_0=\Phi$ to $P_t=\Psi$. Therefore,
consideration can be restricted to Gaussian state-noise sequences, so
that Lemma~\ref{lem:Gauss} reduces the computation of
$J_t(\Phi, \Psi)$ to the constrained minimization of the
conditional relative entropy
\begin{align}
\nonumber
    E_{0,t}
    & =
    \bD(
        \sP_{
            W_{0:t-1} \mid X_0
        }
        \|
        \cN(0, I_{mt})
    )\\
\nonumber
    & =
    \bE
    \big(
         |\bE(W_{0:t-1}\mid X_0)|^2
          +
         \Tr L
         -
         \ln\det L
         -
         mt
    \big)\big/ 2\\
\label{Gauss_supply1}
     & =
    \big(
        |
            \bE W_{0:t-1}
        |^2
        +
    \Tr
    (
        K\Sigma  K^{\rT}+L
    )
     -
    \ln\det
    L
    -mt
    \big)\big/ 2.
\end{align}
Here, use is made of the property that
if the state-noise sequence $Y_{0:t-1}$ is Gaussian with covariance
matrix (\ref{covY}), then the
conditional distribution $\sP_{W_{0:t-1}\mid X_0}$ is
given by (\ref{WX}),
and hence,
$$
    \bE (|\bE(W_{0:t-1} \mid X_0)|^2)
    =
    |\bE W_{0:t-1}|^2
    +
    \Tr (K\Sigma  K^{\rT}).
$$
The right-hand side of (\ref{Gauss_supply1}) is to be
minimized over the mean $\bE W_{0:t-1}$ subject to (\ref{EW}) and over the matrices  $K$ and
$L$ from (\ref{KL}) and (\ref{covY}) subject to the covariance
condition (\ref{Pit}). The constrained minimization of (\ref{Gauss_supply1}) over  $\bE
W_{0:t-1}$ subject to (\ref{EW}) can be ``decoupled" from the
minimization with respect to $K$ and $L$. By applying the
linearly constrained least squares method and recalling (\ref{FH})
and (\ref{HH}), it follows that
\begin{equation}
\label{minmean}
    \min_{\bE W_{0:t-1}\ {\rm satisfying}\ (\ref{EW})}
    |
        \bE W_{0:t-1}
    |^2
     =
    \|
        \beta  - A^t \alpha
    \|_{\Gamma_t^{-1}}^2.
\end{equation}
Here, the minimum is achieved at $\bE W_{0:t-1}$, described by
(\ref{EWmin}), which can be represented in a step-wise form as
$
    \bE W_k
    =
    B^{\rT}
    (A^{t-1-k})^{\rT}
    \Gamma_t^{-1}
    (\beta  - A^t \alpha )
$ for $k=0, \ldots, t-1$.
This can be obtained by solving
a linear-quadratic optimal control problem \cite{KS_1972,LS_1995} of minimizing the function $\sum_{k=0}^{t-1} |u_k|^2$ for the dynamical system
$\bE X_{k+1} = A \bE X_k + B u_k$ with respect to $u_k := \bE W_k$ subject to the boundary conditions $\bE X_0 = \alpha $ and $\bE X_t =
\beta $. The latter system results from averaging the linear dynamics (\ref{fAB}). We will now minimize the
remaining part
\begin{equation}
\label{KLminimization}
       \Tr
        (
           K\Sigma  K^{\rT}+L
        )
       -
       \ln\det L
\end{equation}
of (\ref{Gauss_supply1}) with respect to the matrices $K$ and $L$
subject to the covariance constraint (\ref{Pit}). Since $\Sigma\succ 0$, and $\ln\det L$ is strictly concave in $L\succ 0$ \cite[Theorem~7.6.7 on
p.~466]{HJ_2007}, the function
in (\ref{KLminimization}) is strictly
convex in $K$ and $L$. The structure of the constraint
(\ref{Pit}) allows corresponding Lagrange multipliers to be
assembled in a real symmetric $(n\x n)$-matrix $N$, so that the
Lagrange function for
minimizing (\ref{KLminimization}) subject to the
constraint (\ref{Pit}) is
\begin{align}
\nonumber
    \Lambda(K,L)
     := &
       \Tr
       (
            K\Sigma  K^{\rT}+L
       )
       -
       \ln\det L\\
\label{Lagrange}
        & -
       \Tr
       \big(
        N
        (           (A^t + H_tK)
           \Sigma
           (A^t + H_tK)^{\rT}
           +
           H_t L H_t^{\rT})
       \big).
\end{align}
Here, the last trace is the Frobenius inner product \cite{HJ_2007} of the matrix
$N$ and a real symmetric  matrix on the right-hand side of (\ref{Pit}).
The equations for the Fr\'{e}chet derivatives of
$\Lambda$ (with respect to  the matrices $K$ and $L$) to vanish are
\begin{eqnarray}
\label{nablaK}
\d_K \Lambda(K,L)
    & = &
    2
    (
        (
            I_{mt}
            -
            H_t^{\rT} N H_t
        )
        K
        -
        H_t^{\rT} N A^t
    )
    \Sigma
        =0, \\
\label{nablaL}
    \d_L \Lambda(K,L)
    & = &
    I_{mt}
    -
    H_t^{\rT} N H_t - L^{-1}
    =0,
\end{eqnarray}
where use is made of the Fr\'{e}chet  derivative $\d_L \ln \det L = L^{-1}$.
By solving (\ref{nablaL}) for $L$ and substituting the result into
(\ref{nablaK}), it follows that the stationary point of the Lagrange
function (\ref{Lagrange}) is described by
\begin{equation}
\label{KLmin}
    K = L H_t^{\rT} N A^t,
    \qquad
    L =
    (
        I_{mt}
        -
        H_t^{\rT} N H_t
    )^{-1}.
\end{equation}
Since the reachability Gramian in (\ref{HH}) satisfies $\Gamma_t\succ 0$
for $t\> n$,  the matrix inversion lemma
\cite[pp.~18--19]{HJ_2007} yields
\begin{eqnarray}
\nonumber
    S
    & := &
    H_t L H_t^{\rT}
     =
    H_t
    (
        I_{mt}
        +
        H_t^{\rT}
        (I_n-NH_tH_t^{\rT} )^{-1}
        NH_t
    )
    H_t^{\rT}\\
\label{S}
    & = &
    \Gamma_t +
    \Gamma_t
    (I_n-N\Gamma_t)^{-1}N\Gamma_t
       =
    (\Gamma_t^{-1} - N)^{-1}, \\
\label{HK}
    H_t K
    &= &
    S N A^t
    =
    (S\Gamma_t^{-1} - I_n) A^t.
\end{eqnarray}
 Hence, the covariance relation (\ref{Pit}) takes the form of an
algebraic  Riccati equation in the matrix $S$:
\begin{eqnarray}
\nonumber
    \Theta
    & = &
    (I_n + SN) A^t \Sigma  (A^t)^{\rT} (I_n + NS) + S\\
\label{RicS}
    & = &
    S\Gamma_t^{-1} A^t \Sigma  (A^t)^{\rT} \Gamma_t^{-1} S + S.
\end{eqnarray}
Since $H_t$ is of full row rank and $L\succ 0$, then (\ref{S}) implies that $S\succ 0$. In view of (\ref{UV}), left and right multiplication of both sides of (\ref{RicS}) by $\Gamma_t^{-1/2}$ leads to an equivalent
Riccati equation (\ref{RicSigma}) in the real positive definite
symmetric matrix
\begin{equation}
\label{Sigma}
    \mho
    :=
    \Gamma_t^{-1/2}
    S
    \Gamma_t^{-1/2}.
\end{equation}
Since $U\succcurlyeq $ and $V\succ 0$,
the Riccati equation (\ref{RicSigma}) has a unique solution
$\mho\succ 0$; see, for example,
\cite{LR_1995}. We will now express the minimum value of the
function (\ref{KLminimization}) in terms of $\mho$. Recall that for
conforming matrices $C$ and $D$, the matrices  $CD$ and $DC$ share nonzero eigenvalues
\cite[Theorem~1.3.20 on p.~53]{HJ_2007}. Hence, by changing
the order in which $H_t^{\rT}$ and $N H_t$ are multiplied in the
representation of the matrix $L$ in (\ref{KLmin}) and using
(\ref{HH}) and (\ref{S}), it follows that the spectrum of $L$
differs from that of
$$
    (
        I_{mt}
        -
        N H_t H_t^{\rT}
    )^{-1}
    =
    (I_n - N\Gamma_t)^{-1}
    =
    \Gamma_t^{-1} S
$$ only by ones. Since spectra are invariant under similarity transformations \cite{HJ_2007}, the eigenvalues of
$
    \Gamma_t^{-1} S
    =
    \Gamma_t^{-1/2} \mho \sqrt{\Gamma_t}
$ are identical to those of $\mho$ in (\ref{Sigma}).
Therefore,
\begin{equation}
\label{dettraceSL}
    \det L
     =
    \det \mho,\\
    \qquad
    \Tr L
      =
    \Tr
    \mho
     + mt-n.
\end{equation}
Furthermore, (\ref{nablaK}) and (\ref{HK}) imply that
$
    K
     =
    H_t^{\rT} N
    (A^t + H_t K)
     =
    H_t^{\rT} N
    S\Gamma_t^{-1}
    A^t
$,
and hence,
\begin{eqnarray}
\nonumber
    \Tr
    (
        K
        \Sigma
        K^{\rT}
    )
    & = &
    \Tr
    (
        H_t^{\rT}
        N
         S
         \Gamma_t^{-1}
         A^t
        \Sigma
        (A^t)^{\rT}
        \Gamma_t^{-1}
        S
        N
        H_t
    )\\
\nonumber
    & = &
    \Tr
    (
        \Gamma_t
        N
         S
         \Gamma_t^{-1/2}
         U
        \Gamma_t^{-1/2}
        S
        N
    )\\
\label{KK}
    & = &
    \Tr
    (
        \sqrt{\Gamma_t}
        N
        \sqrt{\Gamma_t}
         \mho
         U
         \mho
        \sqrt{\Gamma_t}
        N
        \sqrt{\Gamma_t}
    )
     =
    \Tr
    (
        \Delta
         U
         \Delta
    ).
\end{eqnarray}
Here,
\begin{equation}
\label{Delta}
    \Delta
     :=
    \sqrt{\Gamma_t}
    N
    \sqrt{\Gamma_t} \mho
     =
    \sqrt{\Gamma_t}
    (\Gamma_t^{-1} - S^{-1}) \sqrt{\Gamma_t} \mho
    =
    \mho - I_n
\end{equation}
is a real symmetric matrix associated with (\ref{Sigma}), and  use has
been made of (\ref{S}) which implies that $N = \Gamma_t^{-1} - S^{-1}$.
Now, by combining (\ref{Delta}) with the Riccati equation
(\ref{RicSigma}), it follows that
$$
    \Delta U \Delta
     =
    U - U\mho - \mho U + \mho U \mho
     =
    U - U\mho - \mho U + V-\mho,
$$
and hence, (\ref{KK}) becomes
\begin{equation}
\label{KKDUD}
    \Tr(K\Sigma  K^{\rT})
    =
    \Tr(U + V - 2 U\mho  - \mho).
\end{equation}
Furthermore, Lemma~\ref{lem:Riccati}, which will be established  in Section~\ref{sec:Ric} independently of the current proof, implies  that
\begin{equation}
\label{2USigma}
    2U\mho
    =
    4UV
    \big(
        I_n+
        \sqrt{
            I_n
            +
            4UV
        }
    \big)^{-1}
    =
    \sqrt{I_n + 4UV}-I_n.
\end{equation}
It now follows from (\ref{KKDUD}), (\ref{2USigma}) and
(\ref{dettraceSL}) that the minimum value of the function
(\ref{KLminimization}) is computed as
\begin{eqnarray}
\nonumber
    \min_{K,L\ {\rm satisfying}\ (\ref{Pit})}\!\!\!\!\!\!\!\!\! & &
    \left(
       \Tr
        (
           K\Sigma  K^{\rT}+L
        )
       -
       \ln\det L
    \right)\\
\nonumber
    & = &
    \Tr(U + V - 2 U\mho)
    - \ln\det \mho + mt- n\\
\label{KLminimization1}
    & = &
    \Tr(U + V - \sqrt{I_n + 4UV})
    - \ln\det \mho + mt.
\end{eqnarray}
Finally, (\ref{JJJ}) is obtained by substituting (\ref{minmean}) and
(\ref{KLminimization1}) into the right-hand side of
(\ref{Gauss_supply1}). \hfill\end{proof}


A closed-form solution of the Riccati equation (\ref{RicSigma}) will be provided in
Section~\ref{sec:Ric}.
The proof of
Theorem~\ref{th:min_req_supply} shows that
$
    \mho
    =
    \cov
    (
        \Gamma_t^{-1/2}X_t\mid X_0
    )
$,
is the conditional covariance matrix of
the ``balanced'' terminal state  $\Gamma_t^{-1/2} X_t$ of the system
under the optimal noise strategy on the time interval $[0,t)$ which delivers the minimum value $J_t(\Phi,\Psi)$ in the problem (\ref{Jt}).
The corresponding cross-covariance matrix of the initial and
balanced terminal states is
$
    \cov(\Gamma_t^{-1/2} X_t,\, X_0)
    =
    \mho \Gamma_t^{-1/2} A^t
    \Sigma
$.
Similarly to the inequality (\ref{covpartpos}), the ``covariance''
part of the right-hand side of (\ref{JJJ}) is always nonnegative:
$
        \Tr (U + V-\sqrt{I_n + 4UV})
         - \ln\det \mho
           =
        \Tr (\Delta + \Delta U \Delta) - \ln\det (I_n +\Delta)
        \>
        \Tr \Delta - \ln\det (I_n +\Delta)
        \> 0
$
in view of (\ref{Delta}).  It only
vanishes if the solution of the Riccati equation (\ref{RicSigma}) is
$\mho = I_n$, or equivalently, if  the matrices (\ref{UV}) satisfy
$V = I_n + U$. The latter equality holds if and only if the
initial and terminal state covariance matrices $\Sigma$ and $\Theta$ in (\ref{GaussPhiPsi}) are related by the Lyapunov equation
$$
    \Theta
    =
    A^t \Sigma  (A^t)^{\rT}
    +
    \Gamma_t.
$$
The right-hand of this equation, as a function of time $t$,  describes the evolution of the state
covariance matrix $\cov(X_t)$ which the linear system (\ref{fAB}) would have under the nominal noise, provided $\cov(X_0) = \Sigma$.
Furthermore, as $t\to +\infty$,
the minimum conditional relative entropy supply required to drive
the system to the terminal state distribution $\Psi=\cN(\beta,
\Theta)$ ceases to depend on the initial state distribution
$\Phi$ from (\ref{GaussPhiPsi}) and approaches the relative entropy
of $\Psi$ with respect to the nominal
invariant state  distribution in (\ref{GaussP*}),
\begin{equation}
\label{Jinf}
    \lim_{t\to +\infty}
    J_t(\Phi, \Psi)
     =
    \big(
        \|
            \beta
        \|_{\Gamma^{-1}}^2
        +
        \Tr (\Gamma^{-1} \Theta)
        -\ln\det (\Gamma^{-1} \Theta)-n
    \big)\big/2
     =
    \bD_0(\Psi \| P_*),
\end{equation}
where $\Gamma = \lim_{t\to +\infty} \Gamma_t$ is given by (\ref{Gamma}).
This can be obtained from (\ref{JJJ}), since $\rho(A)<1$ implies that the matrix $U$ in (\ref{UV}) vanishes
asymptotically, while both $V$ and the solution $\mho$ of the
Riccati equation converge to $\Gamma^{-1/2} \Theta \Gamma^{-1/2}$.
Since the infinite-horizon limit of $J_t(\Phi,\Psi)$ in (\ref{Jinf})
is independent of
$\Phi$, it could not be less than $\bD_0(\Psi \| P_*)$, in view of the lower bound (\ref{lower_bound}).

\paragraph{\bf Remark} In view of Lemma~\ref{lem:Gauss}, the proof of
Theorem~\ref{th:min_req_supply} shows that the right-hand side of the equality (\ref{JJJ}), which is computed in terms of the first two moments $\alpha$, $\Sigma$ and $\beta$, $\Theta$ of the initial and terminal state distributions $\Phi$ and $\Psi$, remains valid as a lower bound for $J_t(\Phi, \Psi)$ if $\Phi$ or $\Psi$ are not Gaussian. $\blacktriangle$

\section{Closed-form solution of the Riccati equation}
\label{sec:Ric}

The following lemma provides an explicit solution to the Riccati equation (\ref{RicSigma}), which will allow the result of Theorem~\ref{th:min_req_supply} to be given in a closed form.

\begin{lemma}
\label{lem:Riccati} The algebraic Riccati
equation  (\ref{RicSigma}), with $U\succcurlyeq 0$ and $V\succ 0$, has a unique positive
definite solution which is computed as
\begin{equation}
\label{SigmaUVsolve}
    \mho
    =
    2V
    \big(
        I_n+
        \sqrt{
            I_n
            +
            4UV
        }
    \big)^{-1}.
\end{equation}
\end{lemma}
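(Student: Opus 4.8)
The plan is to guess the solution from the scalar case, verify it by direct substitution, and then appeal to the uniqueness of a positive definite solution of (\ref{RicSigma}) already invoked in the proof of Theorem~\ref{th:min_req_supply} (see \cite{LR_1995}). In dimension one, $x + u x^2 = v$ with $u\> 0$ and $v>0$ has the positive root $x = (-1 + \sqrt{1+4uv})/(2u) = 2v/(1+\sqrt{1+4uv})$, and (\ref{SigmaUVsolve}) is the matrix version of this last expression.

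First I would record that the right-hand side of (\ref{SigmaUVsolve}) is well-defined. Since $V\succ 0$, the matrix $UV = V^{-1/2}MV^{1/2}$ is similar to the real positive semi-definite symmetric matrix $M := V^{1/2}UV^{1/2}$, so $I_n + 4UV$ is similar to $I_n + 4M\succcurlyeq I_n$ and has spectrum contained in $[1,+\infty)$. Hence its principal square root $W := \sqrt{I_n+4UV}$ is well-defined, has spectrum in $[1,+\infty)$, and $I_n + W$ is invertible. Because the principal square root is preserved under similarity,
\begin{equation*}
    W = V^{-1/2} N V^{1/2},
    \qquad
    N := \sqrt{I_n + 4M}\succcurlyeq I_n,
\end{equation*}
with $N$ a genuine symmetric positive definite matrix; substituting this into (\ref{SigmaUVsolve}) gives the equivalent form $\mho = 2V^{1/2}(I_n + N)^{-1}V^{1/2}$, which shows at once that $\mho$ is real, symmetric and positive definite.

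Next I would verify that $\mho$ solves (\ref{RicSigma}). Here the one computational point is that $W$, being a function of $I_n + 4UV$, commutes with $UV$ and with $(I_n + W)^{-1}$, and that $UV = \tfrac14(W^2 - I_n) = \tfrac14(W - I_n)(W + I_n)$. Using this,
\begin{equation*}
    \mho U \mho
    =
    4V(I_n+W)^{-1}(UV)(I_n+W)^{-1}
    =
    V(W-I_n)(I_n+W)^{-1},
\end{equation*}
so that $\mho + \mho U\mho = V\big(2I_n + (W - I_n)\big)(I_n+W)^{-1} = V(I_n+W)(I_n+W)^{-1} = V$, as required. (Equivalently, in the symmetrized variables $\mho = V^{1/2}\Omega V^{1/2}$, $\Omega = 2(I_n+N)^{-1}$, equation (\ref{RicSigma}) becomes $\Omega + \Omega M\Omega = I_n$, which is checked the same way using $M = \tfrac14(N^2 - I_n)$ and the commutativity of functions of $N$.)

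Finally, the uniqueness of a positive definite solution of the algebraic Riccati equation (\ref{RicSigma}) with $U\succcurlyeq 0$ and $V\succ 0$ is the standard fact already used above (see \cite{LR_1995}); if preferred, it follows directly by subtracting the equations for two positive definite solutions $\mho_1,\mho_2$, which yields the Sylvester identity $(I_n + \mho_1 U)(\mho_1 - \mho_2) + (\mho_1 - \mho_2)(U\mho_2) = 0$ whose coefficients $-(I_n + \mho_1 U)$ and $U\mho_2$ have spectra in $(-\infty,-1]$ and $[0,+\infty)$ respectively (again by the similarity argument), hence disjoint, forcing $\mho_1 = \mho_2$. Therefore (\ref{SigmaUVsolve}) is the unique positive definite solution. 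I expect no serious obstacle here: the only thing to be careful about is that $UV$ and $W = \sqrt{I_n+4UV}$ are not symmetric, and this is completely handled by the similarity $UV \sim V^{1/2}UV^{1/2}$ together with the change of variables $\mho = V^{1/2}\Omega V^{1/2}$.
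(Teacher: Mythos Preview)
Your proof is correct and takes a genuinely different route from the paper's. The paper does not guess and verify; instead it transforms (\ref{RicSigma}) into the quadratic equation $T^2 - T = UV$ in the auxiliary matrix $T := \mho^{-1}V$, shows that any positive definite $\mho$ forces $T$ to be diagonalizable with all eigenvalues $d_k\> 1$ (via the similarity $T \sim \mho^{-1/2}V\mho^{-1/2}$), solves the scalar quadratics $d_k^2 - d_k = \omega_k$ eigenvalue by eigenvalue to obtain $T = (I_n + \sqrt{I_n+4UV})/2$, and then recovers $\mho = VT^{-1}$; uniqueness falls out of this construction because the branch $d_k\> 1$ is forced. Your approach --- direct substitution for existence, plus a Sylvester-equation argument for uniqueness --- is more elementary and fully self-contained, and your symmetry/positivity check via $\mho = 2V^{1/2}(I_n+N)^{-1}V^{1/2}$ is cleaner than anything the paper writes down. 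The paper's approach, on the other hand, is constructive (it \emph{derives} rather than verifies the formula) and makes explicit why the principal branch of the square root is the right one. Both arguments ultimately rest on the same device of symmetrizing a product of positive (semi-)definite matrices by similarity: you use $UV \sim V^{1/2}UV^{1/2}$, the paper uses $T \sim \mho^{-1/2}V\mho^{-1/2}$.
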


\begin{proof}
Since $\mho\succ 0$, then by left multiplying both sides of (\ref{RicSigma}) by $\mho^{-1}$ and right multiplying  them by a matrix
\begin{equation}
\label{T}
    T: = \mho^{-1} V,
\end{equation}
the Riccati equation is transformed to $\mho^{-1} \mho(I_n + U\mho)  T = \mho^{-1} V T$, which is a quadratic equation  in the matrix $T$:
\begin{equation}
\label{TTUV}
    T^2-T = UV.
\end{equation}
The latter can, in principle, be solved by completing the square as $T^2-T = (T-I_n/2)^2 - I_n/4$, so that (\ref{TTUV}) yields
\begin{equation}
\label{TT}
    T
    =
    I_n/2 + \sqrt{I_n/4 + UV}
    =
    \big(I_n + \sqrt{I_n + 4UV}\big)\big/2.
\end{equation}
However, a more rigorous way to arrive at (\ref{TT}), which gives the correct meaning to the square root, is as follows. The properties  $\mho \succ 0$ and $V\succ 0$ imply that the matrix $T$ in (\ref{T}) is diagonalizable and its eigenvalues $d_1, \ldots, d_n$ are all real and
positive in view of \cite[Theorem~7.6.3 on p.~465]{HJ_2007}. Moreover,
\begin{equation}
\label{d1}
    d_k \> 1,
    \qquad
    k=1, \ldots, n.
\end{equation}
Indeed, from (\ref{RicSigma}) and the condition $U\succcurlyeq 0$, it follows that $V\succcurlyeq \mho$, and hence, $C:=\mho^{-1/2} V \mho^{-1/2}\succcurlyeq I_n$, whereby the eigenvalues of the matrix $C$ are not less than 1.  It remains to note that the matrix $T$ in (\ref{T}) is related to $C$ by a similarity transformation $T= \mho^{-1/2}\mho^{-1/2} V \mho^{-1/2}\sqrt{\mho} = \mho^{-1/2}C\sqrt{\mho}$, whereby $T$ has the same spectrum as $C$, thus proving (\ref{d1}). Due to its diagonalizability, the matrix $T$ can be represented as
\begin{equation}
\label{TEDE}
    T = EDE^{-1},
    \qquad
    D
    :=
    \diag_{1\< k \< n}(d_k),
\end{equation}
where the columns of $E$ are the corresponding
eigenvectors of $T$. Substitution of (\ref{TEDE}) into (\ref{TTUV})
yields
\begin{equation}
\label{dd}
    E\Omega E^{-1} = UV,
    \qquad
    \Omega
    :=
    D^2-D
    =
    \diag_{1\< k \< n}(\omega_k),
    \qquad
    d_k^2 - d_k = \omega_k.
\end{equation}
Hence, the columns of $E$ are also the eigenvectors of $UV$, which correspond to the eigenvalues $\omega_1, \ldots, \omega_n$. Since $UV$ is a diagonalizable matrix whose spectrum is all real and nonnegative (in view of $U\succcurlyeq 0$ and $V\succ 0$), then each
of the $n$ quadratic equations in (\ref{dd}) has a unique admissible
solution
$
    d_k = (1 + \sqrt{1+4\omega_k})/2
$ which satisfies (\ref{d1}). Substitution of these solutions into (\ref{TEDE}) yields
\begin{equation}
\label{TEE}
    T
    =
    \frac{1}{2}
    E
    \Big(
        I_n +
        \diag_{1\< k \< n}(\sqrt{1+4\omega_k})
    \Big)
    E^{-1}\\
    =
    \big(
        I_n
        +
        \sqrt{I_n + 4UV}
    \big)
    \big/2,
\end{equation}
thus proving (\ref{TT}).
The second equality from (\ref{TEE}) was used in the proof of
Theorem~\ref{th:min_req_supply} in the form of (\ref{2USigma}). Now,
(\ref{T}) allows
$\mho$ to be uniquely recovered from $T$ as $\mho =
VT^{-1}$, so that (\ref{SigmaUVsolve}) follows from (\ref{TEE}).
\hfill\end{proof}

Substitution of (\ref{SigmaUVsolve}) into (\ref{JJJ}) leads to an explicit form for  the minimum required conditional relative entropy supply, computed in Theorem \ref{th:min_req_supply}:
\begin{align}
\nonumber
    J_t(\cN(\alpha, \Sigma), \cN(\beta, \Theta))
    = &
    \big(
        \|
            \beta - A^t \alpha
        \|_{\Gamma_t^{-1}}^2
        +
        \Tr
        (
            U + V-\sqrt{I_n + 4UV}
        )\\
        &
\label{JJJ1}
        +\ln\det \big(I_n+
        \sqrt{
            I_n
            +
            4UV
        }
    \big)
    -
    \ln\det (2V)
    \big)\big/ 2,
\end{align}
where, as before,  the matrices $U$ and $V$ are given by (\ref{UV}). In the next section,   we will apply the representation (\ref{JJJ1}) to computing the robustness index $Z$ in (\ref{Z}) for the loss functional $\Xi$ associated with the second moments of the state variables.

\section{Computing the robustness index for one-step reachable linear systems}
\label{sec:robust_meas}

Suppose the state dimension of the system (\ref{fAB}) does not exceed the input dimension, that is, $n\< m$, and the matrix $B$ is of full row rank. Then the one-step reachability Gramian
\begin{equation}
\label{Gamma1}
    \Gamma_1 = BB^{\rT}
\end{equation}
from (\ref{HH}) is positive definite, so that $\tau = 1$ in (\ref{tau}). By Theorem \ref{th:min_req_supply}, the minimum conditional relative entropy supply rate $J_1(\Phi,\Phi)$, required for the noise player to maintain such a system in a state distribution  $\Phi$ with mean $\alpha \in \mR^n$ and covariance matrix $\Sigma\succ 0$,  satisfies
\begin{equation}
\label{JPhi}
    J_1(\Phi,\Phi)
    \>
    J_1(\cN(\alpha,\Sigma),\cN(\alpha,\Sigma))
    =:
    \wt{J}(\alpha,\Sigma).
\end{equation}
This inequality follows from the remark made at the end of Section~\ref{sec:min_req_supp} and becomes an equality if $\Phi$ is a Gaussian distribution.  The right-hand side of (\ref{JPhi}) is computed  by letting $t:=1$, $\beta:= \alpha$, $\Theta:=\Sigma$ in (\ref{UV}) and (\ref{JJJ1}) as
\begin{align}
\nonumber
    \wt{J}(\alpha, \Sigma)
    = &
    \big(
        \|(I_n-A) \alpha\|_{\Gamma_1^{-1}}^2
        +
        \ln\det(\Gamma_1/2)\\
\nonumber
        & +
        \Tr((A^{\rT}\Gamma_1^{-1}A + \Gamma_1^{-1})\Sigma)-\ln\det\Sigma\\
\label{Jtilde}
        &
        +\ln\det(I_n+\sqrt{I_n + 4M})
        -
        \Tr
            \sqrt{I_n + 4M}
    \big)\big/ 2,
\end{align}
where $M$ is an $(n\x n)$-matrix which depends quadratically on $\Sigma$ through the matrices $U$ and $V$ from (\ref{UV}) as
\begin{equation}
\label{MMMM}
    M
    :=
    UV
    =
    \Gamma_1^{-1/2} A \Sigma A^{\rT} \Gamma_1^{-1}
    \Sigma\Gamma_1^{-1/2}
    \qquad
    U= \Gamma_1^{-1/2} A \Sigma A^{\rT} \Gamma_1^{-1/2},
    \qquad
    V = \Gamma_1^{-1/2} \Sigma \Gamma_1^{-1/2}.
\end{equation}
Now,
consider a particular variant of the robustness index $Z$ in (\ref{Z}) associated with the following loss functional
\begin{equation}
\label{Xi}
    \Xi(P_*, \Phi)
    :
    =
    \frac{\|\alpha\|_{\Pi}^2 + \Tr(\Pi\Sigma)}{\Tr(\Pi\Gamma)},
\end{equation}
where $\alpha$ and $\Sigma$ are the mean vector and covariance matrix of the state distribution $\Phi$, which is not necessarily Gaussian. Here,
$\Pi$ is a given real positive definite symmetric matrix of order $n$, and $\Gamma$ is the infinite-horizon reachability Gramian from (\ref{Gamma}).   The numerator and denominator of the fraction in (\ref{Xi}) are the expectations $\bE (\|X_k\|_{\Pi}^2)$ of the state vector $X_k$ of the system over $\Phi$ and the nominal invariant state distribution $P_*$ from (\ref{GaussP*}), respectively,  with  $\Pi$ playing the role of a weighting matrix. It is assumed that small values of the weighted second moment of the state  variables are beneficial for system  performance under the nominal noise, so that an increase in this moment, described by  (\ref{Xi}), quantifies the deterioration of the system performance when the statistical uncertainty leads to a different steady-state distribution  $\Phi\ne P_*$. Also, $Z(\gamma)=0$ for all $\gamma\< 1$, and the robustness index $Z(\gamma)$ is positive for $\gamma >1$.  $Z(\gamma)$ will be of interest for those (sufficiently large) values of $\gamma$ which represent a ``critical'' level of system performance loss in terms of (\ref{Xi}).
Similar ideas, which are concerned  with  second moment increases  in the framework of entropy theoretic formulations of uncertainty, can be found in \cite{CR_2007,DVKS_2001,MKV_2011,Petersen_2006,SVK_1994,VDK_2006,VKS_1995,VKS_1996a}. The following theorem outlines the computation of the robustness index being considered here. Its formulation employs a function
\begin{equation}
\label{sig0}
    \sigma(z)
    :=
    \big(\ln (1 + \sqrt{1+4z})-\sqrt{1+4z}\big)'
    =
    -2/(1 + \sqrt{1+4z})
\end{equation}
of a complex variable $z$. Since $\sigma$ is analytic in a neighbourhood of $\mR_+$, then $\sigma(M)$ is well-defined  for the matrix $M$ in (\ref{MMMM}) whose eigenvalues are real and nonnegative. In fact, the function $\sigma$ was already used in this role in (\ref{SigmaUVsolve}).

\begin{theorem}
\label{th:Z}
Suppose the matrix $A$ in the linear system (\ref{fAB}) is asymptotically stable and the matrix $B$ is of full row rank, that is, ${\rm rank} B=n\< m$. Then for any $\gamma\>1$,
the robustness index (\ref{Z}), which corresponds to the loss functional (\ref{Xi}) with a weight matrix $\Pi\succ 0$, can be computed as
\begin{equation}
\label{ZZZ}
    Z(\gamma)
    =
    \wt{J}(0,\Sigma_{\lambda}).
\end{equation}
Here, $\wt{J}$ is the function, defined by (\ref{Jtilde}), and the matrix $\Sigma_{\lambda}\succ 0$ is a solution to the algebraic equation
\begin{equation}
\label{Sig}
    \Sigma_\lambda
    =
    \big(
    A^{\rT} \Gamma_1^{-1/2}
    (
    I_n
    +V
    \sigma(M))
    \Gamma_1^{-1/2} A
    +
    \Gamma_1^{-1/2}
    (I_n+\sigma(M)
    U)
    \Gamma_1^{-1/2}
    -
    \lambda \Pi
    \big)^{-1},
\end{equation}
which is defined in terms of (\ref{Gamma1}), (\ref{MMMM}), (\ref{sig0}) and depends on a scalar parameter $\lambda$ to be found from the equation
\begin{equation}
\label{lam}
    \Tr(\Pi\Sigma_\lambda)/\Tr(\Pi\Gamma) = \gamma.
\end{equation}

\end{theorem}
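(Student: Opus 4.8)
The plan is to reduce the variational problem (\ref{Z}) to a smooth, finite–dimensional optimization over positive definite covariance matrices and then to recognize (\ref{Sig})--(\ref{lam}) as its first–order optimality conditions. \emph{Step 1 (reduction to zero–mean Gaussians).} By the inequality (\ref{JPhi}) and the Remark at the end of Section~\ref{sec:min_req_supp}, $J_1(\Phi,\Phi)\> \wt{J}(\alpha,\Sigma)$ for every admissible $\Phi\ll P_*$ with mean $\alpha$ and covariance $\Sigma\succ 0$, with equality for Gaussian $\Phi$; since the loss functional (\ref{Xi}) sees $\Phi$ only through $(\alpha,\Sigma)$, this gives
$$
    Z(\gamma)
    =
    \inf\big\{\wt{J}(\alpha,\Sigma):\ \alpha\in\mR^n,\ \Sigma\succ 0,\ \|\alpha\|_\Pi^2+\Tr(\Pi\Sigma)\> \gamma\Tr(\Pi\Gamma)\big\}.
$$
By (\ref{Jtilde}) one has $\wt{J}(\alpha,\Sigma)=\wt{J}(0,\Sigma)+\tfrac12\|(I_n-A)\alpha\|_{\Gamma_1^{-1}}^2$, and since $\rho(A)<1$ makes $I_n-A$ nonsingular this extra term is a positive definite quadratic form in $\alpha$ vanishing only at $\alpha=0$. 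I would then argue that an optimal distribution can always be taken zero–mean (the ``mean budget'' $\|\alpha\|_\Pi^2$ is better invested in the covariance), so that, the constraint being necessarily active for $\gamma>1$ while $\Sigma=\Gamma$ handles $\gamma\<1$ with $Z(\gamma)=\wt{J}(0,\Gamma)=0$, it remains to compute $Z(\gamma)=\min\{\wt{J}(0,\Sigma):\ \Sigma\succ 0,\ \Tr(\Pi\Sigma)=\gamma\Tr(\Pi\Gamma)\}$ for $\gamma>1$.

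\emph{Step 2 (Lagrangian stationarity $\Rightarrow$ (\ref{Sig})).} Introduce a scalar multiplier $\lambda$ and set the Fr\'echet derivative in $\Sigma$ of $\Lambda(\Sigma):=2\wt{J}(0,\Sigma)-\lambda\big(\Tr(\Pi\Sigma)-\gamma\Tr(\Pi\Gamma)\big)$ to zero. Using (\ref{Jtilde}), the non–trivial pieces are $\d_\Sigma\Tr\big((A^{\rT}\Gamma_1^{-1}A+\Gamma_1^{-1})\Sigma\big)=A^{\rT}\Gamma_1^{-1}A+\Gamma_1^{-1}$, $\d_\Sigma(-\ln\det\Sigma)=-\Sigma^{-1}$, and the derivative of the ``$M$–part'' $h(M):=\ln\det(I_n+\sqrt{I_n+4M})-\Tr\sqrt{I_n+4M}$. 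For the last one, observe that $z\mapsto \ln(1+\sqrt{1+4z})-\sqrt{1+4z}$ is precisely the scalar primitive of $\sigma$ in (\ref{sig0}), so $\d_M h(M)=\sigma(M)$; chaining through $M=UV$ with $\delta U=\Gamma_1^{-1/2}A\,\delta\Sigma\,A^{\rT}\Gamma_1^{-1/2}$ and $\delta V=\Gamma_1^{-1/2}\,\delta\Sigma\,\Gamma_1^{-1/2}$ yields
$$
    \d_\Sigma h(M)=A^{\rT}\Gamma_1^{-1/2}V\sigma(M)\Gamma_1^{-1/2}A+\Gamma_1^{-1/2}\sigma(M)U\Gamma_1^{-1/2},
$$
where $V\sigma(M)$ and $\sigma(M)U$ are symmetric because $V\sigma(UV)=\sigma(VU)V$ and $\sigma(UV)U=U\sigma(VU)$ for analytic $\sigma$ (by density of polynomials). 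Solving $\d_\Sigma\Lambda=0$ for $\Sigma^{-1}$ and regrouping gives exactly $\Sigma^{-1}=A^{\rT}\Gamma_1^{-1/2}(I_n+V\sigma(M))\Gamma_1^{-1/2}A+\Gamma_1^{-1/2}(I_n+\sigma(M)U)\Gamma_1^{-1/2}-\lambda\Pi$, i.e.\ (\ref{Sig}); call its solution $\Sigma_\lambda$. Well–definedness is not an issue: as in the proof of Lemma~\ref{lem:Riccati} the matrix $M=UV$ has real nonnegative spectrum, $\sigma$ is analytic near $\mR_+$, and for admissible $\lambda$ the right–hand matrix in (\ref{Sig}) is positive definite, so $\Sigma_\lambda\succ 0$ (a quick consistency check: $\lambda=0$ returns $\Sigma_0=\Gamma$, using $\Gamma=A\Gamma A^{\rT}+\Gamma_1$).

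\emph{Step 3 (fixing $\lambda$ and concluding).} The multiplier is determined by imposing the active constraint on $\Sigma_\lambda$, namely $\Tr(\Pi\Sigma_\lambda)=\gamma\Tr(\Pi\Gamma)$, which is (\ref{lam}); a monotonicity argument in $\lambda$ (with $\lambda=0$ giving $\Tr(\Pi\Gamma)$ and the trace increasing past it) shows this equation is solvable for $\gamma>1$. Substituting $\Sigma=\Sigma_\lambda$, $\alpha=0$ into (\ref{Jtilde}) then gives $Z(\gamma)=\wt{J}(0,\Sigma_\lambda)$, i.e.\ (\ref{ZZZ}).

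\emph{Main obstacle.} The Fr\'echet–derivative computation of Step~2 is routine once the matrix–function identities for $\sigma$ are in place. The genuinely delicate points are (i) justifying rigorously that an optimal state distribution may be taken zero–mean, so that the whole budget goes into the covariance, and (ii) showing that the stationary matrix $\Sigma_\lambda$ actually delivers the \emph{global} minimum rather than merely a critical point, the latter being non–obvious because $\Sigma\mapsto\wt{J}(0,\Sigma)$ need not be convex through the ``$M$–part'' $h(M(\Sigma))$. I expect the right tools here to be the monotonicity of the eigenvalues of $M(\Sigma)$ in $\Sigma$ (in the Loewner order), the decreasing convexity of $z\mapsto\ln(1+\sqrt{1+4z})-\sqrt{1+4z}$, the fact that $\wt{J}(0,\cdot)$ attains its minimum value $0$ only at $\Sigma=\Gamma$, together with the lower bound (\ref{lower_bound}) and the explicit form (\ref{JJJ1}); these same ingredients should also pin down the behaviour of $\Tr(\Pi\Sigma_\lambda)$ in $\lambda$ needed for solvability of (\ref{lam}).
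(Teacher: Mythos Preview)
Your proposal is correct and follows essentially the same route as the paper: reduce to Gaussians via (\ref{JPhi}), set up the Lagrangian, differentiate the $M$-part through $\sigma(M)$ using $V\sigma(UV)=\sigma(VU)V$ and $\sigma(UV)U=U\sigma(VU)$, and read off (\ref{Sig})--(\ref{lam}). The only noteworthy difference is your handling of $\alpha$: the paper keeps $\alpha$ in the Lagrangian and observes that the resulting quadratic form $(I_n-A)^{\rT}\Gamma_1^{-1}(I_n-A)-\lambda\Pi$ is positive definite for $\lambda<1/\rho\big(\Pi(I_n-A)^{-1}\Gamma_1(I_n-A^{\rT})^{-1}\big)$, which yields $\alpha=0$ directly and dissolves your obstacle (i); as for obstacle (ii), the paper, like you, stops at the stationary conditions and does not argue global optimality.
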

\begin{proof}
The loss functional $\Xi(P_*, \Phi)$ in (\ref{Xi}) depends on the state distribution $\Phi$ only through its first two moments $\alpha$ and $\Sigma$  and so does the right-hand side of the inequality in (\ref{JPhi}) which is achieved  for Gaussian state distributions $\Phi$. Hence, the minimization in (\ref{Z}) can be reduced to the class of  Gaussian distributions $\Phi$ without affecting the minimum value. This allows the robustness index $Z(\gamma)$, which corresponds to (\ref{Xi}), to be computed by solving a  constrained optimization problem
\begin{equation}
\label{ZZ}
    Z(\gamma)
    =
    \min
    \big\{
        \wt{J}(\alpha, \Sigma):\
        \alpha\in \mR^n,\
        \Sigma\succ 0,\
        \wt{\Xi}(\alpha,\Sigma)\> \gamma \Tr (\Pi\Gamma)/2
    \big\},
\end{equation}
where
\begin{equation}
\label{Xitilde}
    \wt{\Xi}(\alpha, \Sigma):= (\|\alpha\|_{\Pi}^2 + \Tr(\Pi\Sigma))/2,
\end{equation}
and the $1/2$ factor is introduced  for  the sake of convenience. In view of (\ref{Jtilde}) and (\ref{Xitilde}), the Lagrange function for the constrained minimization problem (\ref{ZZ}) takes the form
\begin{align}
\nonumber
    \Ups(\alpha, \Sigma)
    := &
    \wt{J}(\alpha, \Sigma) - \lambda \wt{\Xi}(\alpha, \Sigma)\\
\nonumber
    = &
    \big(
        \|\alpha\|_{(I_n-A^{\rT})\Gamma_1^{-1}(I_n-A)-\lambda \Pi }^2
        +
        \ln\det(\Gamma_1/2)\\
\nonumber
        & +
        \Tr((A^{\rT}\Gamma_1^{-1}A + \Gamma_1^{-1}-\lambda \Pi)\Sigma)-\ln\det\Sigma\\
\label{Ups}
        &
        \ln\det(I_n+\sqrt{I_n + 4M})
        -
        \Tr
            \sqrt{I_n + 4M}
    \big)\big/ 2,
\end{align}
where $\lambda \in \mR$ is a Lagrange multiplier. The dependence of the Lagrange function $\Ups$ on $\alpha$ is quadratic and can be decoupled from the dependence on $\Sigma$. The corresponding quadratic form is positive definite if and only if
$$
    \lambda < 1/\rho(\Pi (I_n-A)^{-1}\Gamma_1(I_n-A^{\rT})^{-1}).
$$
In this case, $\min_{\alpha \in \mR^n}\Ups(\alpha, \Sigma)$ is achieved at the unique point $\alpha=0$, so that the minimization of the Lagrange function $\Ups$ in (\ref{Ups}) reduces to
\begin{equation}
\label{Upsmin}
    \min_{\alpha\in \mR^n,\ \Sigma\succ 0}
    \Ups(\alpha, \Sigma)
    =
    \min_{\Sigma\succ 0}
    \Ups(0, \Sigma).
\end{equation}
We will now find a stationary point of the function $\Ups(0,\Sigma)$.
In view of the identity $\ln\det N=\Tr \ln N$ for a matrix $N$ with positive real spectrum, the application of  \cite[Lemma 4]{VP_2010a} (see also \cite[p. 270]{SIG_1998}) yields the following first variation
\begin{equation}
\label{delta}
    \delta
    \big(
        \ln\det (I_n + \sqrt{I_n + 4M})
        -
        \Tr\sqrt{I_n + 4M}
    \big)
    =
    \Tr(\sigma(M)\delta M),
\end{equation}
where the function $\sigma$ is defined by (\ref{sig0}).
Since the first variation of the map $\Sigma \mapsto M$, described by (\ref{MMMM}), is
$$
    \delta M
    =
    \Gamma_1^{-1/2} A
    (
        (\delta \Sigma) A^{\rT} \Gamma_1^{-1}\Sigma
        +
        \Sigma A^{\rT} \Gamma_1^{-1}\delta \Sigma
    )
    \Gamma_1^{-1/2},
$$
then the Fr\'{e}chet derivative of the function in (\ref{delta}), as a composite function of the matrix $\Sigma$, can be computed as
\begin{align}
\nonumber
    \d_{\Sigma}
    \big(
        \ln\det (I_n + \sqrt{I_n + 4M})&
        -
        \Tr\sqrt{I_n + 4M}
    \big)\\
\nonumber
    =&
    A^{\rT} \Gamma_1^{-1}\Sigma\Gamma_1^{-1/2}
    \sigma(M)
    \Gamma_1^{-1/2} A
    +
    \Gamma_1^{-1/2}
    \sigma(M)
    \Gamma_1^{-1/2} A\Sigma A^{\rT} \Gamma_1^{-1}\\
\label{dSigma}
    =&
    A^{\rT} \Gamma_1^{-1/2}
    V
    \sigma(M)
    \Gamma_1^{-1/2} A
    +
    \Gamma_1^{-1/2}
    \sigma(M)
    U
    \Gamma_1^{-1/2}.
\end{align}
The right-hand side of (\ref{dSigma}) is a real symmetric matrix, which inherits its symmetry from $\Sigma$ in view of the identities $\sigma(UV)U = U\sigma(VU)$ and $V\sigma(UV) = \sigma(VU)V$ and the symmetry of the matrices $U$ and $V$  in (\ref{MMMM}). From (\ref{dSigma}), it follows that the equation $\d_{\Sigma}\Ups(0,\Sigma)=0$ for a stationary point $\Sigma$ of the Lagrange function (\ref{Ups}) in the minimization problem (\ref{Upsmin}) takes the form
$$
    A^{\rT}\Gamma_1^{-1}A + \Gamma_1^{-1}-\lambda \Pi
    -\Sigma^{-1}
    +
    A^{\rT} \Gamma_1^{-1/2}
    V
    \sigma(M)
    \Gamma_1^{-1/2} A
    +
    \Gamma_1^{-1/2}
    \sigma(M)
    U
    \Gamma_1^{-1/2}
    =0,
$$
which is equivalent to (\ref{Sig}).  The solution $\Sigma_{\lambda}$ of this equation depends on the Lagrange multiplier $\lambda$, which, by the standard procedure,  is to be found from (\ref{lam}) in accordance with the constraint in (\ref{ZZ}). \hfill\end{proof}

Note that (\ref{Sig}) and (\ref{lam}) form a complete set of equations for finding the pair $(\lambda,\Sigma_{\lambda})$ for a given $\gamma\> 1$.  In particular, the solution of these equations for $\gamma=1$ is $\lambda=0$ and $\Sigma_0 = \Gamma$, which corresponds to the nominal noise model, with $Z(1)=0$.  Properties of the solution for $\gamma>1$, including existence and uniqueness, require additional investigation and will be discussed elsewhere. A numerical scheme for solving (\ref{Sig})--(\ref{lam}) for $\gamma>1$ can be based on the ideas of homotopy methods, whereby (\ref{Sig}) is solved iteratively for gradually increasing values of the Lagrange multiplier $\lambda$ starting from $\lambda=0$. A closed-form calculation of the robustness index for a one-dimensional example is given in the next section.

\section{Illustrative example: one-dimensional linear systems}
\label{sec:example}

In order to avoid reachability issues for short time horizons $t$, which are associated with the condition $t\> n$ in Theorems~\ref{th:contr} and \ref{th:min_req_supply} (or its refined version $t\> \tau$ based on (\ref{tau})),
consider the one-dimensional case  $n=m=1$. Here, both
$A$ and $B$ in (\ref{fAB}) are scalars, with $|A| < 1$ and
$B\ne 0$, and the nominal marginal distribution $R$ of the noise in
(\ref{GaussR}) is $\cN(0,1)$. In this case, the variance of
the nominal invariant state distribution $P_*$ in (\ref{GaussP*}) is
\begin{equation}
\label{alphaPi}
    \Gamma
    =
    \frac{B^2}{1-A^2}.
\end{equation}
The equations (\ref{HH}) and (\ref{UV})
give
\begin{equation}
\label{muGammaUV}
    \Gamma_t
    =
    (1-A^{2t}) \Gamma,
    \qquad
    U   =   \frac{A^{2t} \Sigma }{\Gamma_t},
    \qquad
    V = \frac{\Theta }{\Gamma_t}.
\end{equation}
The solution (\ref{SigmaUVsolve})
of the Riccati equation (\ref{RicSigma}) takes the form
\begin{equation}
\label{Sigmaexample}
    \mho
     =
              \frac{2V}
             {1 + \sqrt{1+4UV}}
             =
         \frac{2\Theta}{\Gamma_t +
         \sqrt{\Gamma_t^2+4A^{2t}\Sigma  \Theta }}.
\end{equation}
By substituting these formulae
into (\ref{JJJ}) or (\ref{JJJ1}), it follows that the minimum required conditional relative
entropy supply for the noise player to drive the system from an initial state
distribution $\Phi:= \cN(\alpha , \Sigma )$ to a terminal state
distribution $\Psi:=\cN(\beta , \Theta )$ (both with positive
variances $\Sigma $ and $\Theta $) in a given time $t$  is
\begin{equation}
\label{JJJ2}
    J_t(\Phi, \Psi)
    =
    \frac{1}{2}
    \left(
        \frac{
            (\beta  - A^t \alpha )^2
            +
            A^{2t} \Sigma  + \Theta
            -\sqrt{\Gamma_t^2 + 4A^{2t} \Sigma  \Theta }
        }
        {\Gamma_t}
        -\ln\mho
    \right).
\end{equation}
The minimum conditional relative entropy supply rate $\wt{J}(\alpha,\Sigma)$ in (\ref{JPhi}),
required to maintain the system in the fixed Gaussian state distribution
$\cN(\alpha,\Sigma)$,  is calculated by letting $t:= 1$, $\beta :=\alpha $,
$\Theta  := \Sigma $ in (\ref{muGammaUV})--(\ref{JJJ2}) which yields
\begin{equation}
\label{JJJ3}
    \wt{J}(\alpha,\Sigma)
    =
    \frac{1}{2}
    \left(
        \frac{1-A}{1+A}\,
        \frac{\alpha^2}{\Gamma}
        +
        \frac{1+A^2}{1-A^2}\gamma
        -
        \sqrt{
            1 +
            \left(
                \frac{2A\gamma}{1-A^2}
            \right)^2
        }
        -\ln\mho
    \right),
\end{equation}
where
\begin{equation}
\label{Sigmaexample1}
    \mho
    =
    \frac{2\gamma}{1-A^2 + \sqrt{(1-A^2)^2 + 4A^2 \gamma^2}},
    \qquad
    \gamma := \frac{\Sigma }{\Gamma}.
\end{equation}
The discrepancy between $\cN(\alpha, \Sigma)$ and the nominal invariant
state distribution  $P_*=\cN(0, \Gamma)$ with variance
(\ref{alphaPi}) enters (\ref{JJJ3}) only through $\alpha^2/\Gamma$ and the variance ratio $\gamma$ in (\ref{Sigmaexample1}).
In this one-dimensional case, the weight $\Pi$ in the loss functional (\ref{Xi}) can be cancelled out and the functional  takes the form
\begin{equation}
\label{Xi1}
    \Xi(P_*, \Phi)
    =
    \frac{\alpha^2 + \Sigma}{\Gamma}= \frac{\alpha^2}{\Gamma} + \gamma.
\end{equation}
In view of (\ref{ZZZ}) in Theorem~\ref{th:Z},  the robustness index (\ref{Z}), which corresponds to (\ref{Xi1}),  reduces to $\wt{J}(0,\Sigma)$ and is computed  by letting
$\alpha:=0$ in (\ref{JJJ3}):
\begin{equation}
\label{J1cov}
    Z(\gamma)
    =
    \frac{1}{2}
    \left(
        \frac{1+A^2}{1-A^2}\gamma
        -
        \sqrt{
            1 +
            \left(
                \frac{2A\gamma}{1-A^2}
            \right)^2
        }
        -\ln\mho
    \right),
    \qquad
    \gamma \> 1;
\end{equation}
see Fig.~\ref{fig:Jcov}. Note that $Z(\gamma)$
\begin{figure}[htbp]
\centering
\includegraphics[width=8cm]{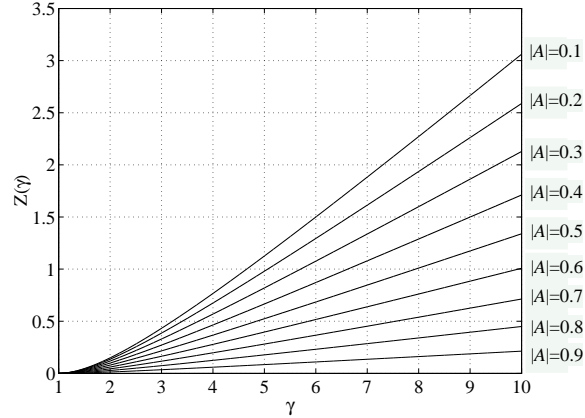}
\caption{
    The graphs of the robustness index $Z(\gamma)$ in (\ref{J1cov}), as a
    function of the variance ratio $\gamma$ from (\ref{Sigmaexample1}), for the one-dimensional linear system (\ref{fAB}) with $|A| =
    0.1, \ldots, 0.9$. Its asymptotic behavior is
    $Z(\gamma)\sim
    (1-|A|)\gamma/(2(1+|A|))$ as $\gamma \to +\infty$.
}
    \label{fig:Jcov}
\end{figure}
vanishes for $\gamma=1$ and is strictly decreasing in $|A|$ for any variance ratio $\gamma> 1$.
That is, the less stable the system is, the easier it is  for the
noise player (in the sense of the minimum required conditional
relative entropy supply rate) to maintain the system  in a state
distribution $\Phi$ with a given larger variance compared to the nominal invariant state
distribution $P_*$. This is in agreement with the intuitive expectation  that the deviation of the system from
the nominal behavior can be achieved by smaller
deviations of the noise from its nominal model
since their accumulation is more efficient if the system is less
stable.

\section*{Acknowledgments}

The work is supported by the Australian Research Council. The first author also thanks Valery A. Ugrinovskii for helpful discussions. Comments of anonymous reviewers are also gratefully acknowledged.



\begin{thebibliography}{99}
\bibitem{Beghi_1994}
A.Beghi, Continuous-time Gauss-Markov processes with fixed reciprocal dynamics,
{\it J. Math. Syst. Estimat. Contr.}, vol. 4, 1994, pp. 1--24.
\bibitem{Beghi_1996}
A.Beghi, On the relative entropy of discrete-time Markov processes
with given end-point densities, \textit{IEEE Trans. Inform. Theory},
vol. 42, no. 5, 1996, pp. 1529--1535.
\bibitem{Beghi_2002}
A.Beghi, A.Ferrante, and M.Pavon, How to steer a quantum system
over a Schr\"{o}dinger bridge, {\it Quant. Inform. Process.}, vol. 1, no. 3, 2002,
pp. 183--206.
\bibitem{Bernardo}
J.M.Bernardo, and A.F.M.Smith, {\it Bayesian Theory}, John Wiley \&
Sons, Chichester, 1994.
\bibitem{Blaq_1992}
A.Blaqui\'{e}re, Controllability of a Fokker-Planck equation, the
Schr\"{o}dinger system, and a related stochastic optimal control (revised
version), {\it J. Dynam. Contr.}, vol. 2, no. 3, 1992, pp. 235--253.
\bibitem{BP_2006}
H.P.Breuer, and F.Petruccione,
{\it The Theory of Open Quantum Systems},
Clarendon Press, Oxford, 2006.
\bibitem{CR_2007}
C.D.Charalambous and  F.Rezaei, ``Stochastic uncertain systems
subject to relative entropy constraints: induced norms and
monotonicity properties of minimax games", {\it IEEE Trans. Autom.
Control}, vol. 52, no. 4, 2007, pp. 647--663.
\bibitem{CT_2006}
T.M.Cover, and J.A.Thomas, {\it Elements of Information Theory}, 2nd
Ed.,   Wiley, Hoboken, New Jersey, 2006.
\bibitem{DaiPra_1991}
P.Dai Pra,
A stochastic control approach to reciprocal diffusion
processes, {\it Appl. Math. Optim.}, vol. 23, no. 1, 1991, pp. 313--329.
\bibitem{DVKS_2001}
P.Diamond, I.G.Vladimirov, A.P.Kurdyukov, and A.V.Semyonov,
``Ani\-so\-tro\-py-based performance analysis of linear discrete
time invariant control systems'', {\it Int. J. Control}, vol. 74,
no. 1, 2001, pp. 28--42.
\bibitem{DE_1997}
P.Dupuis, and R.S.Ellis, {\it A Weak Convergence Approach to the
Theory of Large Deviations}, Wiley, 1997.
\bibitem{DJP_2000}
P.Dupuis, M.R.James, and I.Petersen, ``Robust properties of
risk-sensitive control", {\it Math. Control Signals Systems}, vol.
13, 2000, pp. 318--332.
\bibitem{Gray_2009}
R.M.Gray, {\it Entropy and Information Theory}, Springer-Verlag, New
York,  2009.
\bibitem{Hairer_2010}
M.Hairer, {\it Convergence of Markov Processes}, Lecture Notes, University of Warwick, August 2, 2010,
available at {\tt http://www.hairer.org/notes/Convergence.pdf}.
\bibitem{Hatanaka}
M.Hatanaka, {\it Time-Series-Based Econometrics}, Oxford University
Press, 1996.
\bibitem{HJ_2007}
R.A.Horn, and C.R.Johnson, {\it Matrix Analysis}, Cambridge
University Press, New York, 2007.
\bibitem{Jamison_1974}
B.Jamison, Reciprocal processes,
{\it Prob. Theory Rel. Fields}, vol. 30, no. 1, 1974, pp. 65--86.
\bibitem{KS_1972}
H.Kwakernaak, and R.Sivan,
\textit{Linear Optimal Control Systems},
Wiley, New York, 1972.
\bibitem{LS_1995}
F.L.Lewis, and V.L.Syrmos, {\it Optimal Control}, 2nd Ed., Wiley, New York, 1995.
\bibitem{LR_1995}
P.Lancaster, and L.Rodman, {\it Algebraic Riccati equations}, Oxford
University Press, New York, 1995.
\bibitem{LS_2001}
R.S.Liptser, and A.N.Shiryaev, {\it Statistics of Random Processes}, 2nd Ed.,
Springer, Berlin, 2001.
\bibitem{ME_1981}
N.F.G.Martin, and J.W.England, {\it Mathematical Theory of Entropy},
Addison-Wesley, Reading, Massachusetts, 1981.
\bibitem{MKV_2011}
E.A.Maximov, A.P.Kurdyukov, and
I.G.Vladimirov,
Anisotropic norm bounded real lemma for linear
discrete time varying systems, {\it Proc. 18th IFAC World Congress}, Milano, Italy,
August 28 -- September 2, 2011, pp. 4701--4706.

\bibitem{MT_1993}
S.P.Meyn, and R.L.Tweedie,  {\it Markov Chains and Stochastic Stability}, Springer-Verlag, London, 1993,  available at {\tt http://www.probability.ca/MT/}.
\bibitem{Mikami_1990}
T.Mikami, Variational processes from the weak forward equation,
{\it Commun. Math. Phys.}, vol. 135, 1990, pp. 19--40.
\bibitem{MIS_1978}
N.N.Moiseev, Yu.P.Ivanilov, and E.M.Stolyarova, {\it Optimization
Methods}, Nauka, Moscow, 1978.
\bibitem{Nelson_2001}
E.Nelson, {\it Dynamical Theories of Brownian Motion}, 2nd Ed., Princeton
University Press, 2001.
\bibitem{P_1967}
K.R.Parthasarathy, {\it Probability Measures on Metric Spaces},
AMS Chelsey Publishing, Providence, Rhode Island, 1967.
\bibitem{PT_2010}
M.Pavon, and F.Ticozzi,
``Schr\"{o}dinger bridges for discrete-time, classical and quantum Markovian
evolutions'', \textit{19th Int. Symp. Math. Theor. Networks Syst.}, Budapest, Hungary, July 5--9, 2010, pp.  2327--2332.

\bibitem{PJD_2000}
I.R.Petersen, M.R.James, and P.Dupuis,  ``Minimax optimal control of
stochastic uncertain systems with relative
  entropy constraints'',
{\it IEEE Transactions on Automatic Control}, vol. 45, no. 3, 2000, pp.
398--412.
\bibitem{PUS_2000}
I.R.Petersen, V.A.Ugrinovskii, and A.V.Savkin, {\it Robust Control
Design Using $H^{\infty}$ Methods}, Springer, London, 2000.
\bibitem{Petersen_2006}
I.R.Petersen, ``Minimax LQG control", {\it Int. J. Appl. Math.
Comput. Sci.}, vol. 16, no. 3, 2006, 309--323.
\bibitem{Schilling_2005}
R.L.Schilling, {\it Measures, Integrals and Martingales}, Cambridge University Press, 2005.

\bibitem {SVK_1994}
A.V.Semyonov, I.G.Vladimirov, and A.P.Kurdjukov, ``Stochastic
approach to $H_{\infty}$-optimization'',  {\it Proc. 33rd CDC},
Florida, USA, December 14--16, vol. 3, 1994, pp. 2249--2250.
\bibitem{SG_1977}
G.E.Shilov, and B.L.Gurevich,
{\it Integral, Measure and Derivative},
Dover, 1977.
\bibitem{Shiryaev}
A.N.Shiryaev, {\it Probability}, 2nd Ed., Springer, New York, 1996.
\bibitem{SIG_1998}
R.E.Skelton, T.Iwasaki, and K.Grigoriadis, {\it A Unified Algebraic
Approach to Linear Control Design}, Taylor \& Francis, London, 1998.
\bibitem{UP_2001}
 V.A.Ugrinovskii and I.R.Petersen, ``Minimax LQG
control of stochastic partially observed uncertain systems", {\it
SIAM J. Control Optim.}, vol. 40, no. 4, 2001, pp. 1189--1226.
\bibitem{VKS_1995}
I.G.Vladimirov, A.P.Kurdjukov, and A.V.Semyonov, ``Anisotropy of
signals and entropy of linear time invariant systems'',  {\it
Doklady Akad. Nauk}, vol. 342, no. 3, 1995, pp.~583--585 (in
Russian).
\bibitem{VKS_1996a}
I.G.Vladimirov, A.P.Kurdjukov, and A.V.Semyonov, ``On computing the
anisotropic norm of linear discrete-time-invariant systems'',  {\it
Proc. 13th IFAC World Congress}, San-Francisco, California,
USA, June 30--July 5, vol. G, 1996, pp. 179--184.
\bibitem{VKS_1996b}
I.G.Vladimirov, A.P.Kurdjukov, and A.V.Semyonov, ``State space
solution to anisotropy-based stochastic $H_{\infty}$-optimization
problem'', {\it Proc. 13th IFAC World Congress},
San-Francisco, California, USA, June 30--July 5, vol. H, 1996,
pp. 427--432.
\bibitem{VDK_2006}
I.G.Vladimirov, P.Diamond, and P.Kloeden, ``Anisotropy-based robust
performance analysis of finite horizon linear discrete time varying
systems'', {\it Automat. Rem. Contr.}, no. 8, 2006,
pp. 1265--1282.
\bibitem{VP_2010a}
I.G.Vladimirov, and I.R.Petersen,
``Minimum relative entropy state transitions in linear stochastic systems: the continuous time case'',
{\it Proc. 19th Int. Symp. Math. Theor. Networks Sys.},
Budapest, Hungary, July 5--9, 2010, pp. 51--57.



\bibitem{VP_2010b}
I.G.Vladimirov, and I.R.Petersen,
``Minimum relative entropy state transitions in discrete time systems
with statistically uncertain noise'',
{\it Proc. 49th IEEE CDC},
Atlanta, GA, USA, December 15--17, 2010, pp. 5314--5320.
\bibitem{Willems_1972}
J.C.Willems, ``Dissipative dynamical systems. Part I: general
theory", {\it Arch. Rational Mech. Anal.}, vol. 45, no. 5, 1972, pp.
321--351.
\bibitem{Willems_1972II}
J.C.Willems, ``Dissipative dynamical systems. Part II: linear
systems with quadratic supply rates", {\it Arch. Rational Mech.
Anal.}, vol. 45, no. 5, 1972, pp. 352--393.
\bibitem{XUP_2008}
L.Xie, V.A.Ugrinovskii, and I.R.Petersen, ``Finite horizon robust
state estimation for uncertain finite-alphabet hidden Markov models
with conditional relative entropy constraints", {\it SIAM J. Control
Optim.}, vol. 47, no. 1, 2008, pp. 476--508.
\end{thebibliography}
\end{document}